\documentclass[aps,prx,superscriptaddress, reprint]{revtex4-2}
\usepackage{times}
\usepackage[T1]{fontenc}
\usepackage{graphicx}
\usepackage{bbm}
\usepackage[usenames,dvipsnames,table]{xcolor}
\usepackage{tikz}
\usetikzlibrary{shapes.geometric, arrows.meta, positioning, calc, backgrounds, fit}
\usepackage{amsthm,amsmath}
\usepackage{mathtools}
\usepackage{float}
\usepackage{braket}
\usepackage{amssymb }
\usepackage{comment}
\usepackage{bbold}
\usepackage{dsfont}
\usepackage{ulem} 
\usepackage{empheq}
\usepackage{cancel}
\usepackage{soul}
\usepackage{booktabs}
\usepackage{array}
\usepackage{multirow}
\usepackage{enumitem}
\usepackage[section]{placeins} 
\newcommand{\cmark}{\ensuremath{\checkmark}}
\newcommand{\xmark}{\ensuremath{\times}}

\definecolor{blush}{HTML}{E7F0FB}
\definecolor{rose}{HTML}{6E97C9}
\definecolor{rosedeep}{HTML}{30537E}
\definecolor{tabink}{HTML}{2A3340}
\definecolor{tabsoft}{HTML}{70808A}
\newcommand{\hd}[1]{\textcolor{tabink}{\textbf{#1}}}
\newcommand{\pf}[1]{\textcolor{tabsoft}{\footnotesize #1}}

\usepackage[colorlinks=true,citecolor=blue,linkcolor=RubineRed,urlcolor=blue]{hyperref}
\usepackage{xurl}

\usepackage[most]{tcolorbox}
\tcbuselibrary{skins,breakable}

\newtheorem{theorem}{Theorem}
\newtheorem{lemma}[theorem]{Lemma}
\usepackage{etoolbox}
\usepackage{mathrsfs}

\tcolorboxenvironment{lemma}{
  enhanced,
  colback=blue!5,
  colframe=white,
  boxrule=0pt,
  sharp corners,
  left=6pt,right=6pt,top=6pt,bottom=6pt
}

\tcolorboxenvironment{theorem}{
  enhanced,
  colback=green!5,
  colframe=white,
  boxrule=0pt,
  sharp corners,
  left=6pt,right=6pt,top=6pt,bottom=6pt
}

\newtcolorbox{examplebox}[1][]{
    breakable,
  colback=green!6!white,
  colframe=green!40!black,
  coltext=black,
  fonttitle=\bfseries,
  title=#1,
  sharp corners,
  boxrule=0.8pt,
  left=6pt,
  right=6pt,
  top=6pt,
  bottom=6pt
}

\DeclareMathOperator{\csch}{csch}
\newcommand{\Li}{\operatorname{Li}}

\usepackage{minitoc}
\usepackage{tocloft}

\begin{document}

\doparttoc 
\faketableofcontents 
\part{} 
\vspace{-2.5cm} 

\title{Field-Space Entanglement Dynamics Between Tunnel-Coupled Luttinger Liquids}

\author{Léonce Dupays \href{https://orcid.org/0000-0002-3450-1861}{\includegraphics[scale=0.05]{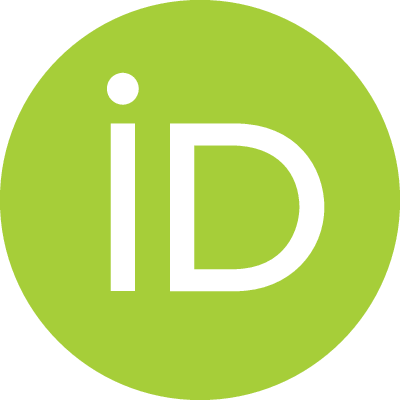}}}
\thanks{These authors contributed equally.}

\affiliation{Department of Mathematics, King’s College London, Strand, London WC2R 2LS, UK}

\author{Taufiq Murtadho\href{https://orcid.org/0000-0003-0324-8119}
{\includegraphics[scale=0.05]{orcidid.eps}}}
\thanks{These authors contributed equally.}
\affiliation{\mbox{Nanyang Quantum Hub, School of Physical and Mathematical Sciences, Nanyang Technological University, 637371, Singapore}}
\affiliation{\mbox{Centre for Quantum Technologies, National University of Singapore, 3 Science Drive 2, Singapore 117543, Singapore}}

\author{Bi Hong Tiang\href{https://orcid.org/0000-0002-5460-7081}{\includegraphics[scale=0.05]{orcidid.eps}}}
\thanks{These authors contributed equally.}
\affiliation{\mbox{Centre for Quantum Technologies, National University of Singapore, 3 Science Drive 2, Singapore 117543, Singapore}}

\author{Nelly H.Y. Ng \href{https://orcid.org/0000-0003-0007-4707}{\includegraphics[scale=0.05]{orcidid.eps}}}
\email{nelly.ng@ntu.edu.sg}
\affiliation{\mbox{Nanyang Quantum Hub, School of Physical and Mathematical Sciences, Nanyang Technological University, 637371, Singapore}}
\affiliation{\mbox{Centre for Quantum Technologies, National University of Singapore, 3 Science Drive 2, Singapore 117543, Singapore}}

\author{Paola Ruggiero\href{https://orcid.org/0000-0002-0891-0477}{\includegraphics[scale=0.05]{orcidid.eps}}}
\email{paola.ruggiero@kcl.ac.uk}
\affiliation{Department of Mathematics, King’s College London, Strand, London WC2R 2LS, UK}

\date{\today}

\begin{abstract}
Entanglement dynamics depend not only on how a quantum system is partitioned, but critically on how interactions across that partition are structured. For a spatial bipartition of a locally interacting system, entanglement is generated near the boundary and then propagates into the bulk. By contrast, when two extended quantum fields are coupled locally along their entire length, the interaction crosses the field-space partition everywhere, and this generates correlations throughout the system. Here, we study the entanglement dynamics between two gapless one-dimensional quantum many-body systems described by Luttinger liquid theory. The systems are initially decoupled and prepared at zero or finite temperature, after which a time-dependent tunneling interaction is activated uniformly along their length. Within a Gaussian approximation, we derive general analytical expressions for the logarithmic negativity, mutual information, and R{\'e}nyi entropies under arbitrary coupling protocols. At zero temperature, entanglement displays an early-time power-law growth whose exponent is fixed solely by the first non-null derivative of the tunneling protocol. Once the coupling saturates, we obtain exact long-time averages of the information-theoretic quantities and characterise how the correlations scale with temperature and the final coupling strength. We also analyse how mutual information and logarithmic negativity approach the adiabatic limit for a very slow protocol with respect to intrinsic system timescale. This work extends the study of entanglement dynamics in nonequilibrium field theory to field-space partitions and mixed initial states. 
\end{abstract}

\maketitle
\section{Introduction}
Entanglement provides a sharp lens through which we may probe non-equilibrium dynamics of quantum many-body systems. In quantum field theories (QFTs), a central question is how quantum correlations can emerge and spread in a relaxation process. Much progress on this has come from theoretical studies of entanglement after quantum quenches \cite{Wen_book}, most commonly with respect to a spatial bipartition of the field \cite{Calabrese2004}. This perspective reveals a rich dynamical structure: the interplay between entanglement growth and thermalization \cite{Calabrese2020}, the crossover from area-law to volume-law scaling \cite{Abanin2019}, and universal scaling laws governing the propagation of entanglement \cite{Calabrese2005}. These theoretical developments are now being brought increasingly close to experiments, through direct probes of out-of-equilibrium entanglement and new proposals for measuring entanglement dynamics in quantum simulators \cite{Islam2015,Lukin2019, yang2020simulating, viermann2022quantum, agullo2024toward, Mathe2026, Matsoukasroubeas2026,Yang2026}.

However, the choice of a spatial cut is only one way of asking where quantum correlations in a field theory reside. When correlations are organized by energy scales or other field degrees of freedom, alternative partitions offer complementary perspectives on entanglement. Momentum space partitioning, for example, reveals entanglement between length scales, and connects naturally into Wilsonian effective action and renormalization groups \cite{balasubramanian2012momentum, martins2022momentum}. In systems with multiple interacting fields, it is natural to ask how much entanglement is generated between the fields themselves. Such \textit{field-space entanglement} arises between different degrees of freedom such as particle species, spins, or polarizations~\cite{mollabashi2014entanglement, mozaffar2016on, taylor2016generalized, hufel2017field}, and has been explored in a variety of settings where the relevant subsystems are not spatial regions, but physically distinct components of a composite quantum system. In condensed-matter physics, it provides a natural measure of correlations between two legs of a spin ladder \cite{Poilblanc2010, Lauchli2012}, between electrons and phonons \cite{roosz2021entanglement, roosz2022densitymatrix}, or condensate mixtures \cite{yoshino2021intercomponent}.
Most theoretical studies focused on ground-state properties, analyzing Rényi entropies and the entanglement spectrum to probe different inter-field couplings \cite{Xu2011, Furukawa2011, Lundgren2013, Chen2013}. These works show that the structure of field-space entanglement depends sensitively on whether the coupled modes are gapped or gapless, and in some cases leads to unconventional, nonlocal entanglement Hamiltonians \cite{Lundgren2013}. Related ideas also appear in high-energy physics, including studies of entanglement in particle colliders \cite{zhang2026quantum}, interacting quantum fields in the early universe \cite{nakai2017entanglement, choudhury2022four, colas2022four}, and geometric structures in AdS/CFT correspondence \cite{mollabashi2014entanglement,mozaffar2016on, hufel2017field, taylor2016generalized}.

What remains unexplored is the \textit{dynamics} of field-space entanglement. If two initially decoupled fields are brought into contact, how is entanglement generated? Does it spread and saturate in ways analogous to spatial entanglement? Do universal scaling features observed for spatial entanglement carry over, or is field-space entanglement governed by different dynamical structures? These questions are no longer purely theoretical: modern quantum simulators can now realize and probe out-of-equilibrium field dynamics with remarkable precision. For example, experiments with one-dimensional Bose gases have provided detailed insights into thermalization \cite{gring2012relaxation, langen2013local, langen2015experimental} and correlation dynamics in such quantum field simulators \cite{aimet2025experimentally, schweigler2021decay}. However, identifying genuinely quantum contributions to these correlations remains challenging. In realistic platforms~\cite{schweigler2021decay,jarema2025information,viermann2022quantum, tajik2023experimental, schweigler2017experimental, aimet2025experimentally, agullo2024toward}, states are prepared at finite temperatures, where the standard entanglement entropy ceases to be a faithful measure of entanglement due to thermal noise.

We comprehensively study field-space entanglement dynamics 
for tunnel-coupled Luttinger liquids (LLs) with a time-dependent tunneling strength initialized at zero or finite temperature. This setup can be realized in 1D ultracold Bose gases trapped on a double-well potential \cite{Gritsev2007,Hofferberth2007, Betz2011}, and recent work \cite{murtadho2026extensive} shows that detecting field-space entanglement in such systems is within experimental reach. The dynamical question, however, remains open: how does entanglement emerge and evolve, when two finite-temperature quantum fields undergo a time-dependent interaction? 

Our work fills this gap by providing general analytical solutions to field-space entanglement dynamics in the Gaussian regime. We analyse three correlation measures: logarithmic negativity (LN), mutual information (MI), and R{\'e}nyi entropies. Their distinctions are essential yet complementary: at finite temperature the state is mixed, and LN is an appropriate entanglement measure. By contrast, MI and R{\'e}nyi entropies quantify entanglement only for pure initial states, while MI continues to provide a meaningful measure of total (classical and quantum) correlations in the finite temperature regime.

This paper is organized as follows. In Sec.~\ref{sec: PhysicalSetting}, we introduce the physical setting of two Luttinger liquids, dynamically coupled by turning on a time-dependent tunneling in the Hamiltonian. Within the harmonic approximation, the dynamics is casted in terms of a mode-dependent Ermakov equation. The solution set, i.e. \textit{Ermakov factors}, is the central object of analysis, which we use to derive analytical expressions for time-dependent LN, MI, and R{\'e}nyi entropies in Sec. \ref{sec: Entanglement_dynamics_study}. 
In Sec.~\ref{early_time_MI}, we establish that the early-time behaviour of mutual information is governed by first non-null derivative of the ramp protocol at initial times. Sec.~\ref{sec: long time average} demonstrates that the asymptotic long-time behaviour of all three measures is fully captured by just three parameters: the final saturation value of the protocol, the Ermakov factors and their time derivatives at the onset of saturation. 
Sec.~\ref{sec: adiabatic smooth_quench_protocol} examines the crossover from nonadiabatic to adiabatic entanglement dynamics, where a simple analytic solution is available. Lastly, we provide a discussion and outline future directions in Sec.\ref{conclusion}. The symbols and notations used in the text are summarized in Appendix~\ref{summary_notations}.

\section{Exact Gaussian dynamics in tunnel-coupled Luttinger Liquids \label{sec: PhysicalSetting}}
\begin{figure}[t]
    \centering    \includegraphics[width=\linewidth]{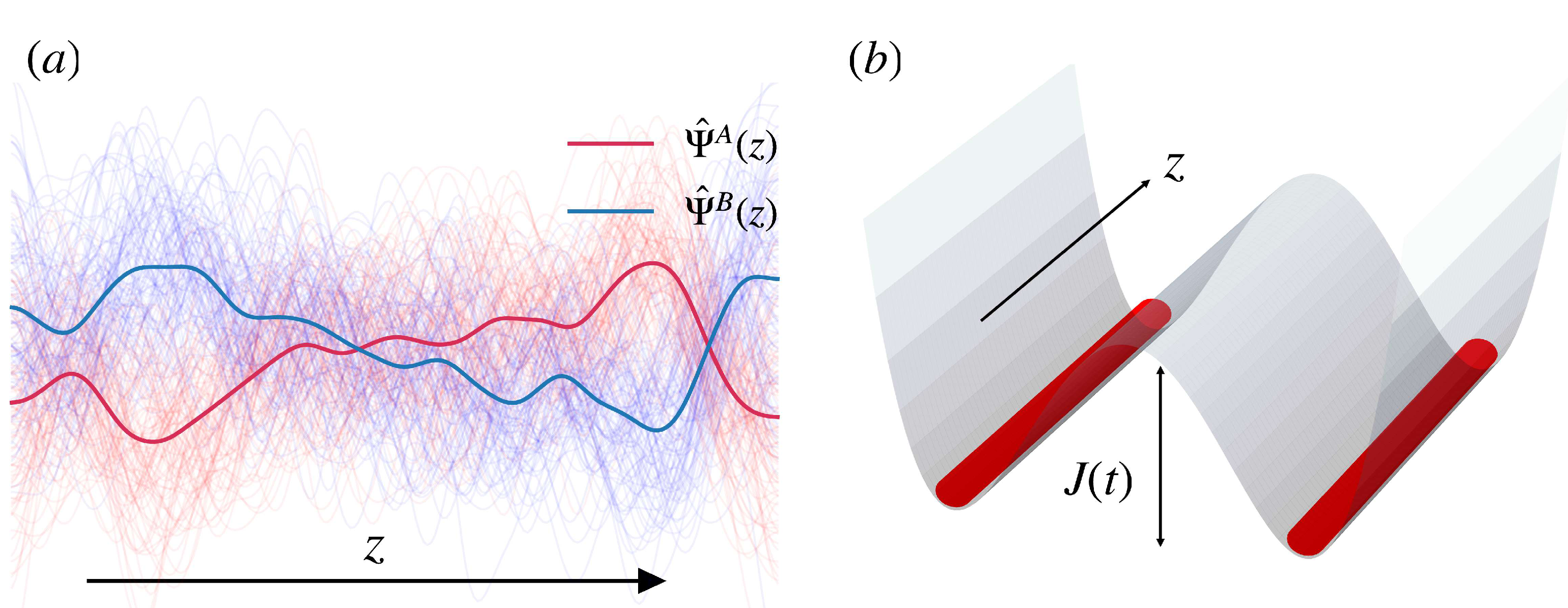}
\caption{\textbf{Schematic of the concept of field-space entanglement in 1D and its experimental implementation.} (a) Two overlapping 1D quantum fields $\hat{\Psi}^A(z)$ and $\hat{\Psi}^B(z)$ interacting through a global time-dependent interaction strength $J(t)$. (b) Field-space entanglement can be generated by trapping bosonic quantum fluids in a 1D elongated potential with a double-well transverse cross section. }
\label{fig:schematic_diagram}
\end{figure}
\subsection{Tunnel-coupled Luttinger liquids}

We define two 1D bosonic fields $\hat{\Psi}^a(z)$ labeled by the index $a\in \{A,B\}$. For concreteness, we let $\hat{\Psi}^a(z)$ represent the field operators of a pair of 1D bosonic quantum fluids. Using the standard hydrodynamic decomposition \cite{haldane1981effective}
\begin{align}
\hat{\Psi}^{a}(z)&=e^{i\hat{\phi}^{a}(z)}\sqrt{n_{0} +\delta\hat{n}^a(z)},
\label{eq:hydro_deco}
\end{align}
the field is parametrized by a phase fluctuation $\hat{\phi}^{a}(z)$ and its conjugate density fluctuation $\delta \hat{n}^a(z)$, where $n_{0}$ is the mean density, assumed uniform and identical for both fields. These fluctuations satisfy the canonical commutation relation $
[\delta \hat{n}^{a}(z),\hat{\phi}^{b}(z')]=i\delta(z-z')\delta_{ab}$.
At low-energies, they are well described by the Luttinger liquid (LL) theory \cite{Cazalilla2004, Haldane1981}
\begin{align}
\hat{H}^{a}_{\rm LL}&=\frac{\hbar v}{2}\int dz\left[\frac{K}{\pi}(\partial_{z}\hat{\phi}^{a}(z))^{2}+\frac{\pi}{K}(\delta \hat{n}^{a}(z))^{2}\right], \label{independent_TLL}
\end{align}
with the speed of sound $v$ and the Luttinger parameter $K$ assumed to be identical for the two LLs. 

The two LLs are initially decoupled, and a coupling is gradually introduced through     the time-dependent Hamiltonian $\hat{H}(t) = \hat{H}^{A}_{\rm LL}+\hat{H}^{B}_{\rm LL}+\hat{H}^{AB}(t)$. Our goal is to characterise the growth of entanglement between the two fields $\hat{\Psi}^A(z,t)$ and $\hat{\Psi}^B(z,t)$. This field-space entanglement is physically distinct from entanglement between regions of space, or momentum modes, of a single-field system. Unlike spatial bipartitions, the two fields are directly coupled everywhere in space [Fig.~~\ref{fig:schematic_diagram} (a)], and we investigate how this coupling generates field-space entanglement out of equilibrium. This question is timely, since experimental access to individual phase fields $\hat{\phi}^{A}(z)$ and $\hat{\phi}^{B} (z)$ has recently become accessible, through a combination of interference measurements and density ripple analysis \cite{murtadho2025measurement}, which further motivates theoretical studies of correlations in the $A/B$ partition \cite{murtadho2026extensive}.

We focus on the typical interaction $\hat{H}^{AB} \propto (\hat{\Psi}^{A})^\dagger \hat{\Psi}^B +\text{h.c.}$ in ultracold atoms experiments, which at low energies induces a time-dependent sine-Gordon coupling \cite{Gritsev2007, schweigler2017experimental}
\begin{align}
    \hat{H}^{AB}(t) &\approx -2\hbar n_0 J(t) \int \cos(\hat{\phi}^A(z) - \hat{\phi}^B(z))\; dz,
    \label{eq: coupling_Hamiltonian}
\end{align}
with the time-dependent tunneling strength $J(t)$, realised in parallel 1D Bose gases experiments by lowering the double-well barrier, see Fig.~\ref{fig:schematic_diagram} (b). We do not consider the reverse protocol, where the tunneling is switched from on to off \cite{foini2015non, aimet2025experimentally, ruggiero2021quenches, ruggiero2021large}, since there the post-quench Hamiltonian no longer couples the physical subsystems $A$ and $B$.
Therefore, the time-evolution has the form $U_{A}(t)\otimes U_{B}(t)$, which \textit{cannot} change the amount of entanglement contained in the initial state. 

We work within the harmonic approximation, where the cosine term is approximated with a quadratic function. 
This is justified at equilibrium when the tunneling energy is dominant compared to other energy scales \cite{tajik2023verification,schweigler2017experimental}, so that fluctuation can be expanded around a minimum of the cosine. In out-of-equilibrium cases,  it may be well suited to describe a fast quench to the strongly coupled Gaussian regime. We discuss this Gaussian approximation more thoroughly in Sec. \ref{conclusion}. 
To solve the dynamics generated by Eq.~\eqref{eq: coupling_Hamiltonian}, we transform into symmetric ($+$) and antisymmetric ($-$) fields
\begin{align}
    \hat{\phi}^{\pm}(z)&=\frac{1}{\sqrt{2}}[\hat{\phi}^{A}(z)\pm \hat{\phi}^{B}(z)],\\
     \delta\hat{n}^{\pm}(z) &= \frac{1}{\sqrt{2}}[\delta\hat{n}^{A}(z)\pm \delta\hat{n}^{B}(z)],
\end{align}
so that one can decouple the Hamiltonian into 
\begin{equation}
    \hat{H}(t) \approx \hat{H}^+_{\rm LL}+\hat{H}_{\rm LL}^-+2\hbar n_{0} |J(t)|\int \left(\hat{\phi}^-(z)\right)^2 \; dz, \label{eq_coupled_Luttinger_Hamiltonian}
\end{equation}
up to a constant. This decoupling arises from the exchange symmetry $A\leftrightarrow B$ between the two LLs, which here originates from the LLs sharing the same speed of sound and Luttinger parameter. The absolute value reflects the fact that the Gaussian approximation is always performed around the phase configuration that minimizes the coupling Hamiltonian. The minimizing relative phase depends on the sign of $J(t)$, but in both positive and negative cases the quadratic expansion yields a positive mass term proportional to $|J(t)|$. Without loss of generality, we assume $J(t)\geq 0$ throughout the manuscript. Thus, from Eq.~\eqref{eq_coupled_Luttinger_Hamiltonian} we see that the symmetric sector evolves as a stationary gapless LL, while the antisymmetric sector involves a time-dependent gapped Klein-Gordon (KG) Hamiltonian. We denote this configuration as the gapped-gapless regime, where one sector becomes gapped while the other remains gapless. 

Lastly, we consider thermal initial states, diagonal in the initial energy basis. Since the initial Hamiltonian is decoupled ($J(0)=0$), the initial state is then a tensor product both in $A/B$ and $\pm$ decompositions, by virtue of the exchange symmetry, i.e. $\rho(0) = \rho^A \otimes \rho^B = \rho^+ \otimes \rho^-$. The tunneling dynamics preserve the tensor product form in the $\pm$ basis, i.e. $\rho(t) = \rho^+(t)\otimes \rho ^{-}(t)$, but entanglement can still emerge dynamically between $A$ and $B$. 

\subsection{Exact Gaussian dynamics}

The Hamiltonian in Eq.~\eqref{eq_coupled_Luttinger_Hamiltonian} naturally maps into a collection of harmonic oscillators in momentum space, with a time-dependent frequency for the antisymmetric sector. We can see this by introducing the cosine transform 
\begin{align}
\hat{\phi}^{\sigma}(z)&=\frac{1}{\sqrt{n_0L}}\hat{\phi}_0^\sigma+\sqrt{\frac{2}{n_0L}}\sum_{q>0}\hat{\phi}^{\sigma}_q\cos (qz),\label{Fourier_1}\\
\delta \hat{n}^{\sigma}(z)&=\sqrt{\frac{n_0}{L}}\delta\hat{n}_0^{\sigma}+\sqrt{\frac{2n_0}{L}}\sum_{q>0}\delta \hat{n}^{\sigma}_{q}\cos(qz), \label{Fourier_2}
\end{align}
assuming the fields' domain to be $[0,L]$ with open boundary conditions \cite{Cazalilla2004}, giving discrete allowed momenta $q = N\pi/L$ with $N \in \{1,2,\cdots\}$ positive integers. Here, $\delta\hat{n}_q^{\kappa}, \hat{\phi}_q^{\sigma}$ are dimensionless field quadratures satisfying canonical commutation relation $[\delta\hat{n}^{\kappa}_{q},\hat{\phi}^{\sigma}_{q'}]=i\delta_{\kappa \sigma}\delta_{qq^\prime}$, and the superscripts $\sigma, \kappa\in \{+,-\}$ label the field sector. 

We henceforth drop the zero-mode $q = 0$ as it is fully decoupled from the $q>0$ sector, and that we do not expect it to significantly influence the amount of entanglement, given that the zero mode is only one out of a continuum of modes (in thermodynamic limit). By defining an intrinsic Luttinger frequency scale $\omega =( \pi n_0 v)/K$, the joint Hamiltonian for modes $q>0$ is expressed as $\hat{H}_{q>0} = \sum_{\sigma = \pm}\hat{H}_{q>0}^\sigma$, where
\begin{equation}
    \hat{H}_{q>0}^{\sigma} =\frac{1}{2}\hbar \omega\sum_{q>0}\left[(\delta\hat{n}^{\sigma}_{q})^2+\left(\frac{\Omega^\sigma_{q}(t)}{\omega}\right)^2(\hat{\phi}^\sigma_{q})^2\right],
    \label{eq:Hamiltonian_time_dep_mt}
\end{equation}
and the mode frequencies are
\begin{equation}
    \Omega_{q}^{\sigma}(t) = \begin{cases}
        \Omega_{q0}\equiv vq & \text{if}\; \sigma = +\\
        \Omega_q(t) \equiv\sqrt{\Omega_{q0}^2+4\omega J(t)} & \text{if}\; \sigma = -
    \end{cases}.\label{frequency_LL}
\end{equation}

Because different momentum modes are decoupled, we can track the time-evolution of the field quadratures by
\begin{equation}\label{eq:quadrature_propagation}
\begin{pmatrix}
    \delta\hat{n}_q^-(t)\\
    \hat{\phi}_q^-(t)
\end{pmatrix}   = \mathbf{G}_q^-(t)\begin{pmatrix}
    \delta\hat{n}_q^-(0)\\
    \hat{\phi}_q^-(0)
\end{pmatrix},
\end{equation}
where $\mathbf{G}_q^{-}(t) \in \text{Sp}(2, \mathbb{R})$ is the symplectic propagator. In Appendix ~\ref{Appendix_dynamics_of_mode_quadratures}, we solve the Heisenberg equation of motion and show that $\mathbf{G}_q^-(t)$ can be factorized into a time-dependent shearing and squeezing part, denoted by $\mathbf{U}_q(t)$, and symplectic rotation part, denoted by $\mathbf{R}_q(t)$, i.e.
\begin{equation}
    \mathbf{G}_q^-(t) = \underbrace{\begin{pmatrix}
        \gamma_q^{-1}(t) & \frac{-\dot{\gamma}_q(t)}{\omega}\\
        0 & \gamma_q(t)\end{pmatrix}\!}_{\mathbf{U}_q(t)}\underbrace{\begin{pmatrix}
        \cos\theta_q(t) &\frac{\Omega_{q0}}{\omega}\sin\theta_q(t)\\
        \frac{-\omega}{\Omega_{q0}}\sin\theta_q(t) & \cos\theta_q(t)
    \end{pmatrix}}_{\mathbf{R}_q(t)}.
    \label{eq:quadrature_dynamics}
\end{equation}
In particular, the $\mathbf{U}_q(t)$ is determined by a time-dependent scaling factor $\gamma_q(t) > 0$, which is a solution to the mode-dependent Ermakov equation \cite{Pinney1950}
\begin{align}
\ddot{\gamma}_{q}+\Omega^{2}_q(t)\gamma_{q}&=\frac{\Omega_{q0}^2}{\gamma^{3}_{q}}.
\label{Ermakov}
\end{align}
Throughout, we refer to the solution set $\{\gamma_q(t)\}$ as the \textit{Ermakov factors}. The requirement that $\mathbf{G}_q^-(0)$ is the identity fixes the initial conditions $\gamma_q(0) = 1$ and $\dot{\gamma}_q(0) = 0$. The analytical form of $\theta_q(t)$ can be derived as a function of $\gamma_q(t)$, 
\begin{equation}
    \theta_q(t) = \Omega_{q0}\int_{0}^{t}\frac{ds}{\gamma_q^2(s)}.
    \label{eq:theta_sol_main_text}
\end{equation}
In the symmetric sector, by substituting $\Omega_q^2(t) = \Omega_{q0}^2$ into the Ermakov equation, we get a stationary Ermakov factor $\gamma_q^+(t) = 1$. Plugging this solution into Eq.~\eqref{eq:quadrature_dynamics} and \eqref{eq:theta_sol_main_text}, one finds that $\mathbf{G}_q^+(t)$ reduces to a pure symplectic rotation
\begin{equation}
    \mathbf{G}_q^+(t) = \begin{pmatrix}
        \cos\Omega_{q0}t &\frac{\Omega_{q0}}{\omega}\sin\Omega_{q0}t\\
        \frac{-\omega}{\Omega_{q0}}\sin\Omega_{q0}t & \cos\Omega_{q0}t
    \end{pmatrix}.
    \label{eq:symplectic_plus}
\end{equation}

In short, solving the quadrature dynamics for a given tunneling protocol $J(t)$ amounts to solving the mode-dependent Ermakov equations~\eqref{Ermakov}. This can be done numerically or analytically via the Pinney form \cite{Pinney1950}, with the latter constructed from linearly independent solutions to the associated homogeneous time-dependent harmonic oscillator equations. 

\subsection{Covariance matrix formalism}
For Gaussian systems, a quantum state is fully characterised by its covariance matrix $\mathbf{\Gamma}(t)$ \cite{serafini2023quantum}. Since we start with states that are diagonal in the energy basis, and the tunneling preserves independence between different momentum modes, $\mathbf{\Gamma}(t)$ can be expressed as direct sum of each mode, 
\begin{equation} \label{Gamma_t}
    \mathbf{\Gamma}(t) = \bigoplus_{q>0}\mathbf{\Gamma}_q(t).
\end{equation}
The elements of the matrix $\mathbf{\Gamma}_q(t)$ are given by the expectation value of the quadrature operator $\mathbf{\hat{r}}_q^\sigma(t) = \begin{pmatrix}
          \delta\hat{n}^\sigma_{q}(t) & \hat{\phi}^\sigma_{q}(t)
\end{pmatrix}^T$.
The first moments $\bar{r}_q^\sigma = \langle \mathbf{\hat{r}}_q^\sigma \rangle$ vanish for the initial states we consider and remain zero throughout the evolution. All the relevant information for correlations, entanglement, and their dynamics, is therefore encoded in the second moments:
\begin{equation}
\mathbf{\Gamma}_q^{\kappa \sigma} = \frac{1}{2}\left\langle\left\{\mathbf{\hat{r}}_q^\kappa, \left(\mathbf{\hat{r}}_q^\sigma\right)^T\right\}\right\rangle,
\label{element_covariance}
\end{equation}
where $\{\mathbf{a}, \mathbf{b}\} \equiv \mathbf{a}\mathbf{b}+(\mathbf{a}\mathbf{b})^T$ is the symmetrized product. 
The indices $\kappa, \sigma \in \{+,-\}$ label the field sector, so $\mathbf{\Gamma}_q^{++}$ and $\mathbf{\Gamma}_q^{--}$ are the covariance matrices of the symmetric and antisymmetric sectors respectively, while $\mathbf{\Gamma}_q^{+-}$ encodes the cross-correlations between them.

For initial thermal states of the Hamiltonian \eqref{eq_coupled_Luttinger_Hamiltonian} with inverse temperature $\beta =  (k_BT)^{-1}$, the covariance matrix takes a block diagonal form $\mathbf{\Gamma}_q(0) = \mathbf{\Gamma}_q^{++}(0) \oplus \mathbf{\Gamma}_q^{--}(0)$, with 
\begin{equation}\label{eq:Gamma_pm_teq0}
    \mathbf{\Gamma}_q^{++}(0) = \mathbf{\Gamma}_q^{--}(0) = C_{q}(\beta)\begin{pmatrix}
        \Omega_{q0}/\omega& 0\\
        0&\omega/\Omega_{q0}
\end{pmatrix},
\end{equation}
where $\beta=(k_B T)^{-1}$ denotes the inverse temperature, and
\begin{equation} \label{eq:C_q}
    C_{q}(\beta)\equiv\frac{1}{2}\coth\left(\frac{\beta \hbar \Omega_{q0}}{2}\right) \ ,
\end{equation}
while the off-diagonal sectors vanish. Under the quench, each diagonal sector evolves independently via $\mathbf{\Gamma}_q^{\sigma\sigma}(t) = \mathbf{G}_q^{\sigma}(t)\mathbf{\Gamma}_q^{\sigma\sigma}(0)\mathbf{G}^{\sigma}_q(t)^{ T}$, see Appendix \ref{dynamics_covariance} for details. Since the $+$ and $-$ sectors remain decoupled throughout, the off-diagonal blocks vanish at all times: $\mathbf{\Gamma}^{+-}_q = \mathbf{\Gamma}^{-+}_q = 0$. To evaluate entropy measures on local subsystems ($A$ or $B$), we define $\textbf{M} = \frac{1}{\sqrt{2}}\bigl( \begin{smallmatrix} 1 & 1 \\ 1 & -1 \end{smallmatrix} \bigr)$ and rotate the covariance matrix into $A/B$ basis
\begin{equation}\mathbf{\Gamma}_q = 
    \begin{pmatrix}
        \mathbf{\Gamma}_q^{AA} & \mathbf{\Gamma}_q^{AB}\\
        \mathbf{\Gamma}_q^{BA} & \mathbf{\Gamma}_q^{BB}
    \end{pmatrix} = \mathbf{M}\begin{pmatrix}
        \mathbf{\Gamma}_q^{++} & 0\\
        0 & \mathbf{\Gamma}_q^{--}
    \end{pmatrix}\mathbf{M}^T.
    \label{eq:sector_transformation}
\end{equation}

A key advantage of working with Gaussian states is that entropic and entanglement measures reduce to straightforward functions of \textit{symplectic eigenvalues} of the covariance matrix (see Appendix \ref{appdxB:symp_eigval_mi_dynamics} for definitions). In particular, for ${\bf \Gamma}(t)$ decomposed into independent momentum sectors as in Eq.~\eqref{Gamma_t}, the sum runs over symplectic eigenvalues of each momentum sector. We see in the next section how it allows us to compute the dynamics of several information-theoretic quantities. 
\section{Field-space entanglement dynamics \label{sec: Entanglement_dynamics_study}}

Here, we present the dynamics of correlation measures between $A$ and $B$, in particular mutual information and logarithmic negativity. The mutual information quantifies total (classical and quantum) correlation between the two subsystems, while logarithmic negativity provides a direct entanglement measure in mixed states. 
As a byproduct, we obtain results for the R{\'e}nyi and entanglement entropies.
Our analysis here holds for arbitrary initial temperature, and Gaussian-preserving tunneling protocols. Fig.~\ref{fig:protocol_schematic} shows a schematic summarizing the general formalism and main results.
\begin{figure*}[t]
\centering
\resizebox{\textwidth}{!}{%
\begin{tikzpicture}[
    >=Stealth,
    mainbox/.style={
        draw=black!80, thick, rounded corners=4pt,
        align=center, inner sep=7pt, fill=blue!5,
        font=\normalsize
    },
    obsbox/.style={
        draw=black!80, thick, rounded corners=2pt,
        align=center, inner sep=7pt, fill=green!5,
        text width=6cm,
        font=\normalsize
    },
    bridgebox/.style={
        draw=orange!80, thick, rounded corners=4pt,
        align=center, inner sep=7pt, fill=orange!5,
        font=\normalsize
    },
    labeltext/.style={font=\bfseries\sffamily, anchor=north west}
]


\node[labeltext, text width=7.1cm, align=left] (labelA) at (0, 4.8) {
    (a) Fix the tunneling protocol $J(t)$:
};

\coordinate (origin) at (0.5, 0.5);

\draw[->, thick] (origin) -- ++(4.5, 0) node[right] {$t$};
\draw[->, thick] (origin) -- ++(0, 3.0) node[above] {$J(t)$};
\node[below left] at (origin) {$0$};

\draw[dashed, thick, gray] ($(origin)+(0, 1.8)$) node[left, text=black] {$J_f$} -- ++(4.2, 0);
\draw[dashed, thick, gray] ($(origin)+(2.5, 0)$) node[below, text=black] {$t_f$} -- ++(0, 1.8);

\draw[very thick, blue] (origin) to[out=0, in=180] ($(origin)+(2.5, 1.8)$) -- ++(1.7, 0);

\node[mainbox, text width=7.0cm, minimum height=4.5cm,anchor=south] (ermakov) at (10.0, 0.1){
    \textbf{(b) Solve the dynamics}\\[8pt]
    Let the initial frequency be $\Omega_{q0}=vq$. For each momentum mode $q$,\\[5pt]
    solve the \textit{Ermakov equation:}\\[6pt]

    $\ddot{\gamma}_{q}+ \left[\Omega_{q0}^2+4\omega J(t)\right]\gamma_{q}
    =\dfrac{\Omega_{q0}^2}{\gamma^{3}_{q}}$
    \\[6pt]
    with initial conditions $\gamma_q(0)=1,\quad\dot{\gamma}_q(0)=0$
};

\node[bridgebox, text width=7.5cm, minimum height=4cm, anchor=west] (bridge) at ($(ermakov.east)+(1,0)$) {
    \textbf{(c) Determine the symplectic eigenvalues}\\[8pt]
    {\normalsize
    $\begin{aligned}
        \mathcal{F}[\gamma_q(t)]
        &= \dfrac{1}{4}
        \left[
        \left(\dfrac{\dot{\gamma}_q(t)}{\Omega_{q0}}\right)^2
        +
        \left(\gamma_q(t)-\dfrac{1}{\gamma_q(t)}\right)^2
        \right]
        \\[8pt]
        \lambda_q(t,\beta)
        &= C_q(\beta)\sqrt{1+\mathcal{F}[\gamma_q(t)]}
        \\[8pt]
        \nu_q(t,\beta)
        &= C_q(\beta)\left(
        \sqrt{1+\mathcal{F}[\gamma_q(t)]}
        -
        \sqrt{\mathcal{F}[\gamma_q(t)]}
        \right)
    \end{aligned}$}
};
\coordinate (arrowABstart) at ($(origin)+(4.6,0)$);

\draw[->, very thick, black!60]
(arrowABstart |- ermakov.west)
--
(ermakov.west);

\draw[->, very thick, black!60] (ermakov.east) -- (bridge.west);


\node[labeltext, align=left, text width=14cm] (labelD) at (0, -0.25) {
    (d) Compute information-theoretic quantities
};

\coordinate (row2top) at (0, -1.85);

\node[obsbox,  text width=7.0cm, anchor=north west] (mi) at (0.1,0 |- row2top) {
    \textbf{Mutual Information}\\[8pt]
    {\normalsize
    $\begin{aligned}
        I(A\!:\!B,t,\beta)
        &= 2\sum_{q>0}
        \Bigl\{
        \mathcal{S}[\lambda_q(t,\beta)]
        - \mathcal{S}[C_q(\beta)]
        \Bigr\}
        \\[4pt]
        \mathcal{S}[x]
        &= \left(x+\frac{1}{2}\right)\ln\left(x+\frac{1}{2}\right)
        \\
        &\qquad-\left(x-\frac{1}{2}\right)\ln\left(x-\frac{1}{2}\right)
    \end{aligned}$}
};

\node[obsbox, anchor=north east] (ln) at (bridge.east |- row2top) {
    \textbf{Logarithmic Negativity}\\[8pt]
    {\normalsize
    $\begin{aligned}
        E_{\mathcal N}(t,\beta)
        &= \sum_{q>0}
        \max\!\bigg\{
        0,\,-\ln\left[2\nu_q(t,\beta)\right]
        \bigg\}
    \end{aligned}$}
};

\node[
    obsbox,
    text width=7.2cm,
    anchor=north
] (renyi) at ($(mi.north east)!0.5!(ln.north west)$) {
    \textbf{R\'enyi Entropies}\\[8pt]
    {\normalsize
    $\begin{aligned}
        S_{\alpha}^A(t,\beta)
        &= \sum_{q>0}\mathcal{S}_{\alpha}[\lambda_q(t,\beta)]
        \\[6pt]
        \mathcal{S}_{\alpha}[x]
        &= \frac{1}{\alpha-1}\ln\Bigl[
        \left(x+\dfrac{1}{2}\right)^\alpha
        -
        \left(x-\dfrac{1}{2}\right)^\alpha
        \Bigr]
    \end{aligned}$}
};

\coordinate (bus) at (bridge.south |- 0,-1.25);

\draw[thick, black!60] (bridge.south) -- (bus);

\draw[->, very thick, black!60] (bus) -| (mi.north);
\draw[->, very thick, black!60] (bus) -| (renyi.north);
\draw[->, very thick, black!60] (bus) -| (ln.north);
\end{tikzpicture}
}
\caption{{\bf Summary of the general formalism.}
Two initially decoupled identical Luttinger liquids are specified by their Luttinger parameter $K$, speed of sound $v$, mean density $n_0$ and length $L$. These parameters fix a Luttinger frequency scale $\omega = \pi n_0v/K$ and discrete momenta $q = N\pi/L$ with $N \in \mathbb{N}$. The formalism consists of four steps: (a) We choose a time-dependent tunneling protocol, i.e. a coupling function $J(t)$ that generates entanglement/correlation between the two Luttinger liquids. (b) The protocol determines the mode-resolved Ermakov equation, whose solution is denoted by $\gamma_q(t)$. (c) One constructs an intermediate composite function $\mathcal{F}[\gamma_q(t)]$, and computes the symplectic eigenvalues $\lambda_q(t,\beta)$ and $\nu_q(t,\beta)$. (d) Information-theoretic quantities such as mutual information, R\'enyi entropies, and logarithmic negativity can be compactly expressed in terms of these symplectic eigenvalues. 
Together, these ingredients provide a unified Gaussian framework for evaluating information-theoretic quantities that characterise field-space entanglement dynamics in tunnel-coupled Luttinger liquids.}
\label{fig:protocol_schematic}
\end{figure*}

\subsection{Mutual information (MI) dynamics \label{sec:MI_dynamics}}

For a pure global state $\rho = |\varphi\rangle\langle\varphi|$ and a bipartition of the system into $A$ and $B$, a standard entanglement measure is the entanglement entropy (EE) \cite{Calabrese2004} defined as
$S_{A} \equiv S(\rho_A)$,
where $\rho_{A}={\rm Tr}_{B}\rho$ is the reduced density matrix, and $S(\rho) = -\mathrm{Tr}\,(\rho\ln\rho)$ is the von Neumann entropy of the state $\rho$.
For Gaussian states, we have \cite{serafini2023quantum, Peschel2009,Peschel_1999}
\begin{equation}\label{eq:vNEnt_CVM}
	S(\rho) = \sum_{\lambda}\mathcal{S}[\lambda],
\end{equation}
where the sum runs over all the symplectic eigenvalues $\lambda$ of the covariance matrix $\mathbf{\Gamma}$ associated with $\rho$, and $\mathcal{S}[x] \equiv \left(x+\frac{1}{2}\right)\ln\left(x+\frac{1}{2}\right)-\left(x-\frac{1}{2}\right)\ln\left(x-\frac{1}{2}\right)$.
If $\rho$ is mixed, $S_A$ is no longer a faithful entanglement measure, since it also contains the entropy of the global state. 
A standard quantifier of correlations for mixed states is the mutual information (MI) 
\begin{equation}
    I(A:B) \equiv S_{A}+ S_{B} - S_{A\cup B}.\label{mutual_inf}
\end{equation}
The MI quantifies how much knowing $A$ reduces uncertainty about $B$ and vice versa. In particular, it reduces to twice the amount of EE for pure states (e.g. at zero temperature), while for general mixed states, it measures \textit{both} classical and quantum correlations between $A$ and $B$.

In out-of-equilibrium settings, the quantities of interest acquire a time dependence, which we make explicit below. In particular, the subsystem's entropy $S_{A,B}(t)$ can be calculated from Eq.~\eqref{eq:vNEnt_CVM}, given the symplectic eigenvalues of the time-dependent reduced covariance matrices $\mathbf{\Gamma}_q^{AA}(t) =  \mathbf{\Gamma}_q^{BB}(t)$, identical due to the $A \leftrightarrow B$ symmetry which holds even at finite temperatures, see Appendix \ref{appdxB:symp_eigval_mi_dynamics}. Consequently, the symplectic eigenvalues associated with $A$ and $B$ are identical $\lambda_q^A(t,\beta)=\lambda_q^B(t,\beta) = \lambda_q(t,\beta)$ and they are given by
\begin{align}
\lambda_{q}(t,\beta) &= C_{q}(\beta)\sqrt{1+\mathcal{F}[\gamma_q(t)]},
\label{eq: thermal_subsys_eigs}
\end{align}
with $C_q(\beta)$ defined in Eq.~\eqref{eq:C_q}. 
For compactness, we have defined a composite function
\begin{equation}
    \mathcal{F}[\gamma_q(t)]\equiv \frac{1}{4} \left[\left(\frac{\dot{\gamma}_q(t)}{\Omega_{q0}}\right)^2+\left(\gamma_q(t) - \frac{1}{\gamma_q(t)}\right)^2 \right] .
    \label{eq:ermakov_com_func}
\end{equation}

Meanwhile, the global entropy remains invariant under unitary evolution, i.e., $S_{A\cup B} (t)= S_{A\cup B} (0) $. Therefore, symplectic eigenvalues of the full covariance matrix are invariant with respect to time. 
Since they are assumed to be initially thermal, we have $\lambda_q^{\pm}(t,\beta) = \lambda_q^{\pm}(0,\beta) = C_{q}(\beta)$. This gives a total entropy of $S_{A \cup B} (0) = 2\sum_{q>0} \mathcal{S}[\lambda_q^+(0,\beta)]$, which vanishes in the zero temperature limit. Combining both subsystems' entropies and global entropy gives the following expression for time-dependent MI
\begin{equation}
    I(A:B, t,\beta) = 2\sum_{q>0}\left\{\mathcal{S}[\lambda_q(t,\beta)]-\mathcal{S}[C_q(\beta)]\right\}.\label{eq:mutual_info_dynamics}
\end{equation}
The dynamics of EE for the ground state can be easily recovered from the MI from $\beta \rightarrow\infty$ limit, i.e.,
\begin{equation}
    \lim_{\beta \to \infty} S_A(t,\beta)= \sum_{q>0}\mathcal{S}\left[\frac{1}{2}\sqrt{1+\mathcal{F}[\gamma_q(t)]}\right]\geq0.
    \label{eq:ent_entropy_zero_temp}
\end{equation}
Since $\gamma_q(0) = 1$ and $\dot{\gamma_q}(0)= 0$, it holds that $\mathcal{F}[\gamma_q(0)] = 0$, and by construction $\mathcal{F}[\gamma_q(t)]\geq 0$.
This means that if the system is initialized in the ground state, any tunneling protocol where $\mathcal{F}[\gamma_q(t)]$ increases would result in a dynamical generation of entanglement between the two Luttinger liquids.

Finally, the more general family of R{\'e}nyi entropies, $S_{\alpha}(\rho)=(1-\alpha)^{-1}\ln {\rm tr}[\rho^{\alpha}]$, for Gaussian states \cite{Camilo2019,serafini2023quantum} can be expressed as function of the symplectic eigenvalues of the covariance matrix as $S_\alpha(\rho) 
= \sum_{\lambda} \mathcal{S}_{\alpha}[\lambda]$, where 
\begin{align}
\mathcal{S}_{\alpha}[x] = \frac{1}{\alpha-1} \ln \left[
(x + {1}/{2})^\alpha
-
(x - {1}/{2})^\alpha\right]. \label{eq: Renyi entropies as function of symplectic eigenvalues}
\end{align}
Note that they converge to the von Neumann entropy in the replica limit, i.e., $\lim_{\alpha \to 1} S_{\alpha}=S$. 
Therefore, as a byproduct of our analysis, using the same expression in Eq.~\eqref{eq: thermal_subsys_eigs} for the symplectic eigenvalues $\lambda_q(t,\beta)$, we also get the dynamics of the R{\'e}nyi-entropies $S_{\alpha}(t,\beta)$, which serves as additional measures of entanglement at zero temperature. 
\subsection{Entanglement dynamics via logarithmic negativity (LN)}
To quantify entanglement in mixed states, we consider logarithmic negativity (LN), a measure based on the positive-partial-transpose (PPT) criterion \cite{Peres1996,Zyczkowski1998,Plenio2005}. For a state $\rho$ on a bipartite system $A\cup B$, the LN of $\rho$ is defined as $E_{\mathcal{N}} \equiv\log \|\rho^{T_{B}}\|_{1}$, where $\|O\|_{1}={\rm Tr}\sqrt{O^{\dagger}O}$ denotes the trace norm for the operator $O$, and $\rho^{T_{B}}$ denotes the partial transpose of $\rho$, such that $\langle \varphi_{A}\varphi_{B}|\rho^{{T}_{B}}|\varphi'_{A}\varphi'_{B}\rangle=\langle \varphi_{A}\varphi'_{B}|\rho|\varphi'_{A}\varphi_{B}\rangle$ for product basis $\{\ket{\varphi_A\varphi_B}\}$ on $A\cup B$. For general mixed states, $E_{\mathcal{N}}>0$ is only a sufficient condition for entanglement since bound (undistillable) entangled states exist~\cite{horodecki1998mixed}. However, for two-mode Gaussian states, and Gaussian $1\times N$ bipartitions, the PPT criterion is necessary and sufficient for separability \cite{duan2000inseparability,simon2000peres}. In our setting, different momenta do not mix, and hence the problem of multi-mode entanglement reduces to a collection of two-mode problems within each momentum sector $q$. More precisely, since the state factorizes over momentum sectors, the global state is separable if and only if every two-mode sector is separable; hence for this family of states, LN gives a \textit{necessary and sufficient} diagnostic and measure of entanglement.

Similar to the entropies, for Gaussian states LN can be expressed in terms of the symplectic eigenvalues but now of the \textit{partially transposed} covariance matrix ${\bf \Gamma}^{T_B}$ \cite{serafini2023quantum}
\begin{equation} \label{def_LN}
    E_{\mathcal{N}} = \sum_{\nu }\max\{0, -\ln[2\nu]\},
\end{equation}
where $\nu$ are the symplectic eigenvalues of ${\bf \Gamma}^{T_{B}}$. The partial transposition acts on the covariance matrix as a phase space reflection in the subsystem's quadrature \cite{simon2000peres}, which can be performed on each sector separately,
\begin{equation}  \mathbf{\Gamma}_q^{T_B}(t) = \mathscr{T}_B\mathbf{\Gamma}_q (t) \mathscr{T}_B,
\label{partial_transpose}
\end{equation}
where $\mathscr{T}_B = \text{diag}(1,1,1,-1)$. The operation in \eqref{partial_transpose} is equivalent to flipping the sign of the momentum quadratures of $B$, i.e. $(\delta\hat n_q^A,\hat\phi_q^A,\delta\hat n_q^B,\hat\phi_q^B)
\mapsto
(\delta\hat n_q^A,\hat\phi_q^A,\delta\hat n_q^B,-\hat\phi_q^B).$
Note also that while in \eqref{def_LN} the sum is over all symplectic eigenvalues $\nu$, due to the $\max$ function, those larger than $1/2$ do not contribute to LN. With that, the LN between two Luttinger liquids $A$ and $B$ in our case can be written as
\begin{equation}
    E_{\mathcal{N}}(t,\beta) = \sum_{q>0}\max\{0, -\ln[2\nu_q(t,\beta)]\}, \label{LN_sum_q}
\end{equation}
in terms of a subset of symplectic eigenvalues $\{\nu_q(t,\beta)\}$, whose explicit expression is given by (see Appendix~\ref{appendix_LN}) 
\begin{equation}
    \nu_q(t,\beta) = C_{q}(\beta)\bigg(\sqrt{1+\mathcal{F}[\gamma_q(t)]}-\sqrt{\mathcal{F}[\gamma_q(t)]}\bigg).
\label{eq:symp_eigvals_partial_transpose}
\end{equation}
Eqs. \eqref{LN_sum_q} and \eqref{eq:symp_eigvals_partial_transpose} fully determine the evolution of entanglement between A and B. 

In the zero temperature limit, Eq. \eqref{LN_sum_q} simplifies to
\begin{equation}
\lim_{\beta\rightarrow\infty}E_{\mathcal{N}}(t,\beta)=\sum_{q>0} \text{arcsinh}\left(\sqrt{\mathcal{F}[\gamma_q(t)]}\right) \geq 0,
    \label{eq:LN_zero_temp}
\end{equation}
which provides an upper-bound \cite{vidal2002computable, calabrese2012entanglementNeg} to the EE \eqref{eq:ent_entropy_zero_temp}.
Meanwhile, at finite temperature, a competition emerges between the tunneling encoded in the Ermakov factor $\gamma_q(t)$, and thermal effects characterised by the inverse temperature $\beta$. 
As temperature increases, the factor $\coth(\beta\hbar\Omega_{q0}/2)$ increases and thus $\nu_q(t,\beta)$ also increases, while ${E}_{\mathcal{N}}(t)$ decreases (cf. \eqref{LN_sum_q}).
Hence, increasing temperature always suppresses entanglement. By contrast, the tunneling contribution $\sqrt{1+\mathcal{F}[\gamma_q(t)]}-\sqrt{\mathcal{F}[\gamma_q(t)]}\leq 1$ counteracts this effect by reducing $\nu_q(t,\beta)$ and generating entanglement. 

The expressions presented in this section admit a simple physical interpretation: the symplectic eigenvalues $\lambda_q$ and $\nu_q$, and hence all the above correlation measures depend on the dynamics only through the function $\mathcal{F}[\gamma_q(t)]$.
This function is entirely determined by $\mathbf{U}_q(t)$ in Eq.~\eqref{eq:quadrature_dynamics}, which encodes the squeezing and shearing of the antisymmetric-sector mode through the Ermakov factor $\gamma_q(t)$ and its derivative $\dot\gamma_q(t)$.
By contrast, the symmetric sector evolves only through $\mathbf{R}_q(t)$, leaving symplectic eigenvalues unchanged.
Crucially, $\mathbf{U}_q(t)$ completely determines the relative dynamics between the symmetric and antisymmetric sectors, which precisely generates field-space entanglement between subsystems $A$ and $B$.
Indeed, when transformed back to the $A/B$ basis, the inter-field covariance matrix is
\begin{align}
{\bf \Gamma}^{AB}_{q}(t)
=\frac12\left({\bf \Gamma}^{++}_{q}(t)-{\bf \Gamma}^{--}_{q}(t)\right),
\end{align}
while the reduced covariance matrices are determined by the sum of the two sectors. If the symmetric and antisymmetric sectors evolved identically, no field-space entanglement would be generated, irrespective of their individual dynamics.
In the present model, this difference originates from the squeezing and shearing induced by the time-dependent tunneling in the antisymmetric sector.

Finally, the above interpretation also provides qualitative insight into the momentum dependence of entanglement and correlations.
For sufficiently large momenta, the two dispersions become asymptotically identical,
$\Omega_q^+\simeq\Omega_q^-$ (cf. Eq. \eqref{frequency_LL}), so the difference between their dynamics becomes progressively smaller. Consequently, high-momentum modes are expected to contribute only weakly to the entanglement.
Conversely, low-momentum modes experience the largest difference and therefore provide a dominant contribution.

In summary, we present a general solution for the dynamics of MI, LN, and Rényi entropies for arbitrary initial temperatures and tunneling protocols by tracking how symplectic eigenvalues evolve within the Gaussian approximation. This general solution will be used to derive a scaling law for entanglement in the early-time (Sec.~\ref{early_time_MI}), investigate long-time averaged behavior (Sec.~\ref{sec: long time average}), and convergence toward adiabatic limit for very slow tunneling protocol (Sec.~\ref{sec: adiabatic smooth_quench_protocol}), see Table~\ref{table_section_contents} for the scope of the next sections. 

\begin{table}[h]
\centering
\scriptsize
\setlength{\tabcolsep}{4pt}
\renewcommand{\arraystretch}{1.15}
\label{tab:main_text_scope}
\resizebox{\columnwidth}{!}{%
\begin{tabular}{l|ccc|cccc}
\hline

& MI 
& LN 
& R\'enyi entropies 
& $T=0$ 
& $T > 0 $ 
& TDL 
& finite $L$ \\
\hline
Sec.~\ref{early_time_MI}
& \cmark
& \xmark
& $\circ$
& \cmark
& \xmark
& \cmark
& \cmark
\\
Sec.~\ref{sec: long time average}
& \cmark
& \cmark
& \cmark
& \cmark
& \cmark
& \cmark
& \cmark
\\
Sec.~\ref{sec: adiabatic smooth_quench_protocol}
& \cmark
& \cmark
& $\circ$
& \cmark
& \cmark
& \xmark
& \cmark
\\
\hline
\end{tabular}
}
\caption{
Scope of results in Secs.~\ref{early_time_MI}--\ref{sec: adiabatic smooth_quench_protocol}. A $\cmark$ indicates that the corresponding quantity or scenario is explicitly analysed, a $\xmark$ means it is not addressed, while $\circ$ represents the cases that are not being analysed but can be straightforwardly treated using the analytically derived symplectic eigenvalues.
Here, $T$ denotes temperature, $L$ is the system size, and TDL is short for thermodynamic limit. 
\label{table_section_contents}
}
\end{table}
\section{Early entanglement dynamics at zero initial temperature \label{early_time_MI}}
Here, we analyse field-space early-time entanglement growth.
Recall that the two fields in the present setting interact directly at every point in space, hence the distance between two subsystems plays no role, and no propagating entanglement front is present. It is then natural to ask if field-space entanglement exhibits simple universal scaling laws, analogous to the well-established scaling behaviour of spatial entanglement. We focus on the zero temperature case so that MI is sufficient to capture entanglement dynamics.
At finite temperature, proper characterization of the entanglement requires studying LN. However, obtaining its early-time dynamics is more challenging, so we leave this to future work.
 
We consider two classes of tunneling protocols: the sudden quench and finite-ramp protocols. In the sudden quench scenario, the tunneling strength is discontinuously changed from $J(0^-) =0$ to $J(0^+) = J_f$. This corresponds to 
\begin{align}
\Omega^{2}_{q}(t)=\begin{cases}
\Omega^{2}_{q0}=(vq)^{2} & \text{if}\quad t \leq 0\\
\Omega^{2}_{qf}=\Omega_{q0}^2+4\omega J_f & \text{if}\quad t>0,
\end{cases}\label{eq:sudden_quench}
\end{align}
where the solution to the Ermakov equation is known~\cite{Dupays2021}:
\begin{equation}
    \gamma_q(t) = \sqrt{1-\left(1-\zeta^{2}_{q}\right)\sin^2(\Omega_{qf} t)}, \label{eq:solution_sudden_quench}
\end{equation}
with $\zeta_{q}\equiv\Omega_{q0}/\Omega_{qf}$. From this exact Ermakov solution, the dynamics of MI can be derived according to the prescription given in the previous section. Instead, we study a simpler approximate solution at the early time regime by Taylor expanding Eq.~\eqref{eq:solution_sudden_quench} as detailed in Appendix~\ref{Appendix_early_time_sudden_quench}. 

For finite-ramp protocols, we study early-time evolution by performing a Taylor expansion of the tunneling protocol around $t = 0$ as
\begin{align}
   J(t)&=\frac{J^{(n)}(0)}{n!}t^{n}+ \mathcal{O}(t^{n+1}),
\end{align}
where $J^{(k)}(t)$ denotes the first non-zero $k^{\rm th}$ derivative of $J(t)$ and $n$ is a parameter defined by
\begin{align} \label{parameter_n}
    n=\min \;\{ k \geq 0 \,:\, J^{(k)}(0) \neq 0 \}.
\end{align}
Here, we introduce the convention $J^{(0)}(0) = J_f$, so that the sudden quench is associated with  $n=0$. This Taylor expansion approach is valid as long as $t\ll t_n^*$ where
\begin{align}
t^{*}_{n} =
\begin{cases}
\dfrac{1}{\sqrt{4\omega |J_f|}} & \text{for}\; n=0 \  \\
\left[\dfrac{(n+2)!}{4\omega |J^{(n)}(0)|}\right]^{\dfrac{1}{n+2}} & \;\text{for} \; n \ge 1 .
\end{cases}
\label{t_star}
\end{align}
In Appendix~\ref{Ermakov_equation_early_time}, we show that in this regime, 
 \begin{align}
     \gamma_q(t) &\approx 1- \frac{4\omega J^{(n)}(0)}{(n+2)!}t^{n+2}, 
     \label{eq:gamma_t_early_time}
\end{align}
for both the sudden quench and the finite-ramp, leading to approximated symplectic eigenvalues of the form
\begin{align}
\lambda_{q}(t,\beta)&\approx\frac{1}{2}\sqrt{1+\left(\frac{\varepsilon_{n}(t)}{q}\right)^{2}} \ , \label{eq: early_symplectic_eigenvalue}
\end{align}
with a time- and protocol-dependent momentum scale
\begin{align}
\varepsilon_{n}(t)&\equiv\frac{2\omega J^{(n)}(0)}{v(n+1)!}t^{n+1} \ . \label{eq: epsilon t}
\end{align}

\begin{figure*}[!htbp]
    \centering
\includegraphics[width=0.95\linewidth]{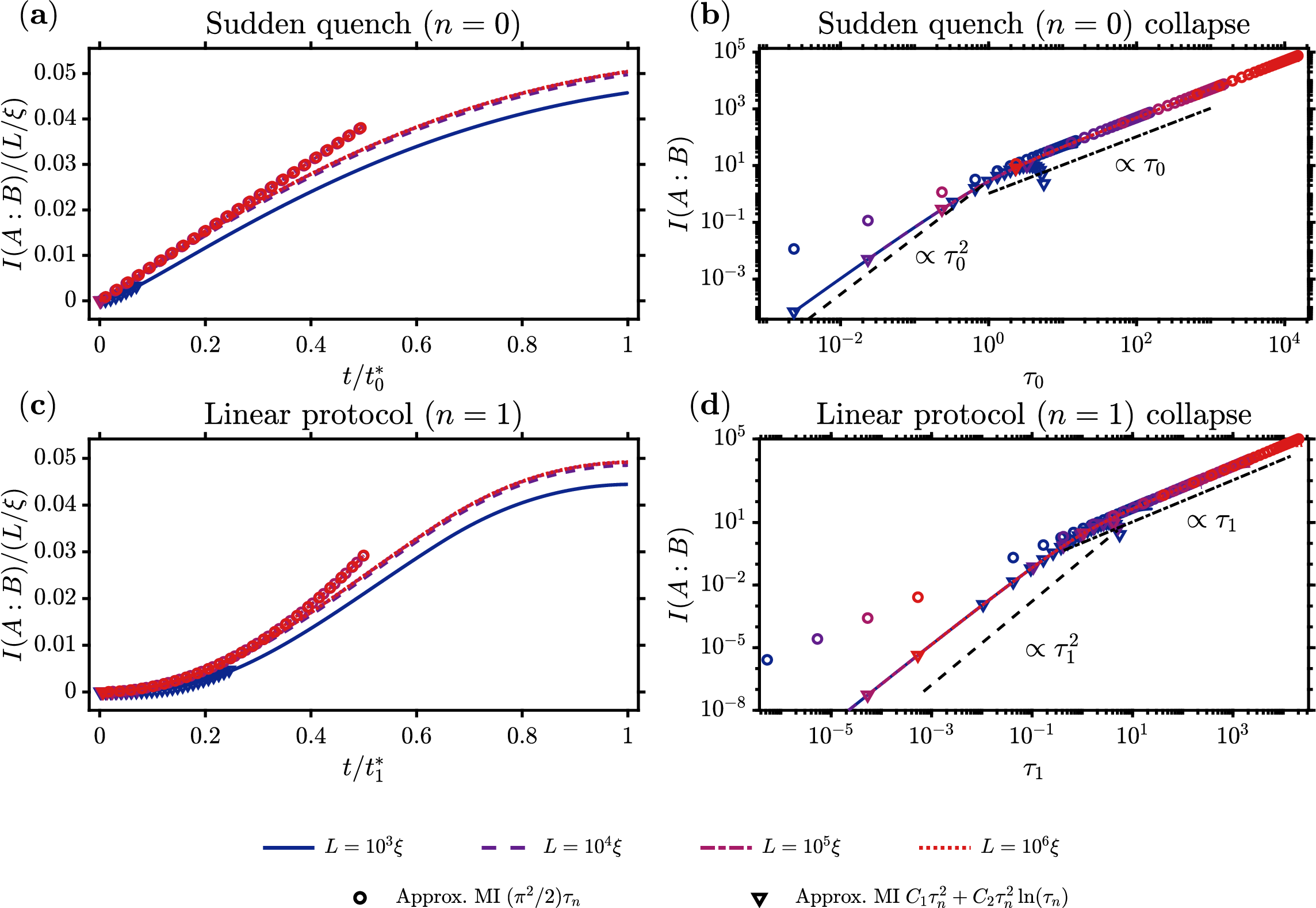}
\caption{
{\bf Comparison between approximate and exact numerical solution to the mutual information at early times and different system sizes.} Solid lines denote the exact numerical MI computed from the numerical solution of Ermakov equation with Eq.~\eqref{eq:mutual_info_dynamics}. Triangle and circle markers represent the approximate MI in the limits $\tau_n\ll1$ and $\tau_n^*\gg\tau_n\gg1$, respectively, given by Eqs.~\eqref{eq: MI_small_alpha} and \eqref{eq: MI propto epsilon}. Colors indicate different system lengths.
Panels (a) and (b) show results for a sudden quench ($n=0$), with panel (a) showing the normalized MI as a function of time, and panel (b) the MI as a function of $\tau_0$. Panels (c) and (d) show results for a linear protocol ($n=1$). We stop plotting triangle markers when the approximation $\tau_{n}\ll 1$ breaks down, this is given by $t^{(0)}_{\rm cut}/t^{*}_{0}=(v\pi)/(\sqrt{\omega J_{f}}L)$ for the sudden quench, and by $t^{(1)}_{\rm cut}/t^{*}_{1}=\sqrt{(2v\pi t^{*}_{1})/(3L)}$ for the linear protocol. We stop plotting the circle markers at $t=0.5 t^{*}_{n}$ as the approximated expressions are only valid for $t/t^{*}_{n}\ll 1$. Details on how $t^{(n)}_{\rm cut}$ is obtained are given in Appendix~\ref{subsec: early time breakdown}.  The plots are made for a final coupling strength $J_f=2\pi\times5\,\mathrm{Hz}$, mean density $n_0=75\,\mu\mathrm{m}^{-1}$, transverse trapping frequency $\omega_\perp=2\pi\times2\,\mathrm{kHz}$, scattering length $a_s=5.2 \;\mathrm{nm}$, atomic mass $m=86.91\;u$ for $^{87}$Rb, and healing length $\xi=0.24 \;\mu\mathrm{m}$. For the linear protocol, ramp duration is $t_f=1\,\mathrm{ms}$. The interaction strength is $g=2\hbar a_s\omega_\perp(2+3a_s n_{0})/(1+2a_s n_{0})^{3/2}$, and the Luttinger parameters are $v=\sqrt{gn_0/m}$ and $K=\hbar\pi\sqrt{n_0/(mg)}$.}
\label{figure_ealy_time}
\end{figure*}

Combining Eq.~\eqref{eq:mutual_info_dynamics} and Eq.~\eqref{eq: early_symplectic_eigenvalue}, one obtains early-time approximation of the MI.
%
Upon summing over all momenta, the MI
exhibits an asymptotic scaling form with the parameter $\tau_n \equiv
{\varepsilon_n(t)}/{\Delta q}$ where we denote by $\Delta q = \pi/L$ the increment between two consecutive momenta.
Two asymptotic regimes are identified (full derivation in Appendix~\ref{mutual_info_early_time}):
\begin{enumerate}[leftmargin=*]
    \item {\it Ultra-early time/finite-size regime}: In the limit $\tau_n \ll 1$, we have $\varepsilon_n(t)/q \ll 1$ for all momenta $q$. Hence this ratio can be used as small parameter to perform a perturbative expansion of the symplectic eigenvalues. Summing explicitly over momenta leads to 
    \begin{align}
I(A:B,t,0)&\approx C_{1}\tau^{2}_{n}+C_{2}\tau^{2}_{n}\ln(\tau_{n}),\label{eq: MI_small_alpha}
\end{align}
with coefficients $C_{1}=\frac{\pi^{2}}{12}\left[1+2\ln(2)\right]-\zeta'(2)$ and $C_{2}=-\frac{\pi^{2}}{6}$, where $\zeta'(\cdot)$ denotes the first derivative of the Riemann zeta function.
This \textit{transient} holds in a time window that vanishes in the thermodynamic limit (TDL) of $L \to \infty$, but could still be relevant for experiments. 

\item \textit{Early-time scaling regime}: here we consider $\tau_n^*\gg \tau_n \gg 1$, where $\tau^{*}_{n}$ is $\tau_{n}$ evaluated at $t = t_n^{*}$. This interval is well-defined even in the TDL, where 
$\Delta q \ll \varepsilon_n(t)$, so that 
the summation over $q$ is well approximated by an integral. Now, the ratio $\varepsilon_n(t)/q$ can be small or large depending on the value $q$. However, the low-momentum modes such that $\varepsilon_n(t)/q \gg 1$ dominate the integral. Using this ratio as large parameter for a perturbative expansion, the integral can be approximated, leading to a power-law scaling
\begin{align}
I(A:B,t,0)\approx \frac{\pi^2}{2} \tau_n,
\label{eq: MI propto epsilon}
\end{align}
recovering the extensive behavior of the MI with system size $L$ as obtained in equilibrium~\cite{Furukawa2011, Lundgren2013, yoshino2021intercomponent}.
\end{enumerate}
It is worth noting that 
Eq.~\eqref{eq: early_symplectic_eigenvalue} is valid only up to a cutoff. 
Nonetheless, the qualitative analysis is unaffected. 
As discussed in Sec.~\ref{sec: Entanglement_dynamics_study}, for large momenta the symmetric and antisymmetric sectors evolve almost identically, so high-momentum modes contribute only weakly to entanglement. The cutoff therefore produces only tiny corrections to MI. 

These results show that field-space entanglement exhibits simple early-time scaling laws despite the absence of a propagating entanglement front. 
In particular, within the early-time scaling regime in Eq. \eqref{eq: MI propto epsilon}, a sudden quench ($n=0$) gives linear growth of mutual information, whereas smoother protocols produce a hierarchy of power laws, determined solely by the first non-vanishing derivative of the tunneling protocol at $t=0$.
While a detailed comparison with spatial entanglement under generic driving remains an interesting open question, our results demonstrate that universal scaling behaviour can emerge also for field-space entanglement. 
At the microscopic level, however, the mechanism is fundamentally different. In the present tunnel-coupled Gaussian model, entanglement is generated by the squeezing and shearing of antisymmetric-sector induced by the time-dependent tunneling, rather than by the ballistic propagation of entangled quasiparticle pairs across a spatial bipartition.

We support the theoretical analysis by demonstrating its agreement with exact numerical solution for different system sizes in Fig.~\ref{figure_ealy_time}. We use physical parameters associated with existing ultracold 1D Bose gas experiments~\cite{tajik2023verification, aimet2025experimentally}.
Overall, we do observe cross-over from
$ I(A:B) \propto t^{2n+2}$, up to logarithmic corrections to a lower-order power-law scaling $I(A:B)\propto  t^{n+1}$, with the latter dominating as the system size increases. 
We consider examples of a sudden quench ($n=0$) in Figs.~\ref{figure_ealy_time} (a) and (b), and a linear protocol ($n=1$) in Figs.~\ref{figure_ealy_time} (c) and (d). Panels (a) and (c) focus on the agreement between the approximated vs. exact solutions at early time. We find good agreement between the $\tau_{n}\ll 1$ approximation (triangle markers) given by Eq.~\eqref{eq: MI_small_alpha} and the $\tau_{n}\gg 1$ approximation (circle markers) given by Eq.~\eqref{eq: MI propto epsilon}, with the exact numerical result in the early time regime $t\ll t^{*}_{n}$ (in colored lines).
With increasing system size, the regime $\tau_{n}\gg1$ becomes predominant, and the MI exhibits a power law $\propto t^{n+1}$, which is linear for the sudden quench ($n=0$) and quadratic for the linear protocol ($n=1$).

Panels (b) and (d) show the scale-invariant behavior of MI at early times: regardless of system sizes, the rescaled data and the approximated expressions collapse onto a single curve in their respective regime of validity. We can provide an estimate of when the approximation $\tau_{n}\ll 1$ fails. The $\tau_{0}\ll 1$ approximation, which corresponds to the quadratic behavior in $\tau_{0}$, is expected to break down for $t^{(0)}_{\rm cut}=\frac{v\pi}{\sqrt{\omega J_f}\,L}t^{*}_{0}$ (see Appendix \ref{subsec: early time breakdown}). For the chosen parameters, this time is $v\pi/(\sqrt{\omega J_{f}}L)t^{*}_{0}\approx 0.06 t^{*}_{0}$ for $L=10^{3}\xi$ where $\xi$ is the healing length (see caption of Fig. \ref{figure_ealy_time}), and gets progressively smaller for longer system sizes. In the linear protocol case, the breakdown of the $\tau_{1}\ll 1$ condition happens at $t^{(1)}_{\rm cut}\approx 0.205\,t^{*}_{1} $ for $L=10^{3}\xi$, $t^{(1)}_{\rm cut}\approx 0.064\,t^{*}_{1}$ for $L = 10^4 \xi$, and gets progressively smaller for longer system sizes.   

\section{Long-time averages of entanglement for finite initial temperatures}\label{sec: long time average}

This section studies the long-time behavior of MI, LN and R{\'e}nyi-2 entropy for sudden quench and finite-ramp protocols. We consider protocols where $J(0)=0$, with arbitrary $J(0<t<t_{f})$ during the ramp, and $J(t>t_{f})=J_{f}$ after the ramp. The sudden quench is a limiting case with $t_f \rightarrow 0^+$. We provide analytical expressions for the finite-size time averages of these quantities at arbitrary temperatures. Since the quantities of interest are nonlinear functions of the covariance matrix, the relations between finite-size time averages, thermodynamic-limit dynamics, and possible stationary values are not straightforward. Throughout this section, we first define the long-time averages at finite system size; the thermodynamic-limit scaling is obtained only afterwards by replacing the resulting momentum sums with integrals.

We then proceed to show that both the averaged MI and LN decrease with increasing temperature, with averaged LN vanishing above a threshold temperature that we characterise.
Moreover, at zero temperature and in the large $L$ limit, we show that the averaged LN, MI, and Rényi-2 entropy after a sudden quench scales linearly with $ L\sqrt{4\omega J_f}/v$.

\subsection{Sudden quench at finite temperatures \label{subsec: sudden quench long time average}}

We start by considering the simplest analytically tractable case of sudden quench protocol defined by $J(0)=0$ and $J(t> 0)=J_f\neq 0$. Here, the solution to the Ermakov equation was provided in~\eqref{eq:solution_sudden_quench}. This can be used to determine the symplectic eigenvalues necessary to compute MI, LN, and Rényi-2 entropy. To compute the MI and R{\'e}nyi-2 entropy, substituting Eq.~\eqref{eq:solution_sudden_quench} into Eqs.~\eqref{eq: thermal_subsys_eigs} -- \eqref{eq:ermakov_com_func} gives us 
\begin{align} 
\lambda_{q}(t,\beta)= C_{q}(\beta)\sqrt{1+A_q \sin^{2}(\Omega_{qf} t)}
,\label{eq: lambda for sudden quench}
\end{align}
with the oscillation amplitudes
\begin{align}
    A_q&=\dfrac{\left(4\omega J_f\right)^{2}}{4 (vq)^2[(vq)^2+4\omega J_f]}. \label{A_q_parameter}
\end{align}
To compute the LN, substituting Eq.~\eqref{eq:solution_sudden_quench} into Eq.~\eqref{eq:symp_eigvals_partial_transpose} gives
{\small
\begin{align}
    \nu_q(t,\beta)=C_q(\beta)\left(\sqrt{1+A_q \sin^{2}(\Omega_{qf} t)}-\sqrt{A_q \sin^{2}(\Omega_{qf} t)}\right). \label{eq: nu for sudden quench}
\end{align}}

We want to investigate the long-time averaged behavior defined by $\overline{f}={\rm lim}_{t \to \infty}\frac{1}{t}\int_{0}^{t}f(s)\;ds$ for a smooth function $f$. Importantly, the Ermakov solution for the sudden quench is a periodic function (with a mode-dependent period), which implies that all symplectic eigenvalues are also periodic functions (cf. Eq.~\eqref{eq: lambda for sudden quench} and Eq.~\eqref{eq: nu for sudden quench}). Given this, the calculation of long-time averages reduce to averaging over a single (mode-dependent) period $\pi/\Omega_{qf}$. The expressions of long-time averages for the quantities of interest can then be recast in terms of a new variable $u \equiv \Omega_{qf}t$ as:
\small
\begin{align}
      \overline{I(A:B,\beta)} &= \frac{2}{\pi}  \sum_{q >0}\int_0^{\pi}\left\{
        \mathcal{S}\left[2\lambda_{q}\left(\frac{u}{\Omega_{qf}},\beta\right)\right]
        - \mathcal{S}[C_q(\beta)]
        \right\} \; du,\label{eq: MI for sudden quench at finite system size} \\ 
     \overline{E_{\mathcal{N}}(\beta)}&=\frac{1}{\pi} \sum_{q >0}  \int_0^{\pi} \max \left\{0, -\ln\left[2\nu_{q}\left(\frac{u}{\Omega_{qf}},\beta\right)\right] \right\}  du  \ ,\\
    \overline{S_{2}^A(\beta)}&=  \frac{1}{\pi}  \sum_{q >0} \int_0^{\pi}\ln \left[2\lambda_{q}\left(\frac{u}{\Omega_{qf}},\beta\right)\right]\; du   \ , \label{eq: S2 integral form for a sudden quench}
\end{align}
\normalsize
where $\lambda_q, \nu_q$ are given by Eq.~\eqref{eq: lambda for sudden quench} and Eq.~\eqref{eq: nu for sudden quench} respectively. Moreover, the integral form of the $S_2$ entropy in Eq.~(\ref{eq: S2 integral form for a sudden quench}) can be directly generalized to Rényi entropies of arbitrary order via Eq.~(\ref{eq: Renyi entropies as function of symplectic eigenvalues}).
\begin{table*}[!htbp]
\centering
%
\renewcommand{\arraystretch}{1.6}
\setlength{\arrayrulewidth}{1pt}
{\arrayrulecolor{rosedeep}
\begin{tabular}{@{}l >{\centering\arraybackslash}m{13cm} c@{}}
\toprule
\rowcolor{blush}\hd{Quantity} & \multicolumn{1}{c}{\hd{Expression}} & \hd{Proof} \\
\arrayrulecolor{rose}\specialrule{0.9pt}{0pt}{0pt}\arrayrulecolor{rosedeep}
\addlinespace[0.6em]
MI ($\beta\to\infty$) &
$\displaystyle \sum_{q >0} \Big\{ \ln A_q +2\sqrt{1+A_q} \left[
F\left(\frac{\pi}{2},k_q \right)
- E\left(\frac{\pi}{2},k_q \right) \right] +2(1-\ln 4)\Big\}$ &
\pf{App.~\ref{sec: MI zero temperature limit and sudden quench}} \\
\addlinespace[0.8em]
MI ($\beta \geq 0$) &
$\displaystyle
4\sum_{q >0}  \left\{ C_q \sqrt{1+A_q}
\left(F \left[\theta_q, k_q\right]
- E\left[\theta_q, k_q\right] \right)+ \frac{1}{2} \left[ 1 + \ln \left( Z_q\right) \right] - \frac{1}{2} C_q \ln\left(\frac{2C_q+1}{2C_q-1}\right) \right\}$ &
\pf{App.~\ref{sec: MI zero temperature limit and sudden quench}} \\
\addlinespace[0.8em]
LN ($\beta\to \infty$) &
$\displaystyle \frac{2}{\pi}  \sum_{q>0} \mathrm{Ti}_2\left(\sqrt{A_q}\right)$ &
\pf{App.~\ref{sec: LN for sudden quench at T=0}} \\
\addlinespace[0.8em]
LN ($\beta\geq0$) &
$\displaystyle \frac{1}{\pi} \sum_{q >0:\; A_q \geq V_q^2}
\left\{
\phi_q\ln \left(\frac{A_q}{4 C_q^2}\right)
+\mathrm{Im}\left(
\Li_2 \left[\frac{1}{4C_q^2}e^{2i\phi_q} \right]
-\Li_2[-e^{2i\phi_q}]
\right)
\right\}$ &
\pf{App.~\ref{sec: LN for sudden quench at T >0}} \\
\addlinespace[0.8em]
$S_2$ ($\beta\geq0$) &
$\displaystyle
\sum_{q>0} \left\{
\ln [2C_q] + \ln\left[\frac{1+\sqrt{1+A_q}}{2}\right]
\right\}$ &
\pf{App.~\ref{expression_Renyi2_arbitrary}} \\
\addlinespace[0.5em]
\bottomrule
\end{tabular}}
\par\vspace{0.7em}
{\renewcommand{\arraystretch}{1.4}\arrayrulecolor{rosedeep}
\begin{tabular}{@{}c@{\hspace{3.5em}}c@{}}
\toprule
\rowcolor{blush}\hd{Primary variables} & \hd{Auxiliary variables} \\
\arrayrulecolor{rose}\specialrule{0.9pt}{0pt}{0pt}\arrayrulecolor{rosedeep}
\addlinespace[0.6em]
$\begin{aligned}
A_q(J_f) &= \frac{(4\omega J_f)^2}{4(vq)^2[(vq)^2+4\omega J_f]} \\[0.8em]
C_q(\beta) &= \frac{1}{2}\coth\left(\frac{\beta \hbar vq}{2}\right) \\[0.8em]
V_q(\beta) &= \csch\left(\beta \hbar vq\right)
\end{aligned}$
&
$\begin{aligned}
k_q(A_q) &= (1+A_q)^{-1/2} \qquad 
\theta_q(C_q) = \arcsin\left(C_q^{-1}/2\right) \\[0.8em]
\phi_q(A_q,V_q) &= \arcsin\sqrt{\frac{A_q-V_q^2}{1+A_q}} \\[0.8em]
Z_q(C_q,A_q) &= \frac{\sqrt{4C_q^2(1+A_q)-1}+\sqrt{4C_q^2-1}}{2\sqrt{4C_q^2-1}}
\end{aligned}$ \\
\addlinespace[0.5em]
\bottomrule
\end{tabular}}
\caption{Exact expressions of the long time-averaged MI and LN after a sudden quench at inverse temperature $\beta$, with the $\beta\to \infty$ results included as limiting cases.
The table involves several special functions. The incomplete elliptic integrals of the first and second kinds are denoted by $\textstyle F[\theta,k]=\int_{0}^{\theta}d\phi\;(1/\sqrt{1-k^{2}\sin^{2}\phi})$ and $\textstyle E[\theta,k]=\int_{0}^{\theta}d\phi\;\sqrt{1-k^{2}\sin^{2}\phi}$, respectively, with modulus $k$ and amplitude $\theta$.
The inverse tangent integral $\mathrm{Ti}_2(x)$ is given by $\textstyle \mathrm{Ti}_2(x)=\int_0^x \frac{\tan^{-1}t}{t}\,dt$. 
The function $\operatorname{Li}_2(z)$ is the dilogarithm, defined by
$ \textstyle  \operatorname{Li}_2(z)= \sum_{n=1}^{\infty} \frac{z^n}{n^2}\;,\;\; \text{ for } \;\; |z| \leq 1 $. For short, we write $C_{q}(\beta)$ as $C_{q}$ in the table. \label{Results_sudden_quench_table}}
\end{table*}
\begin{figure*}[!htbp] 
\centering
\includegraphics[width=0.9\linewidth]{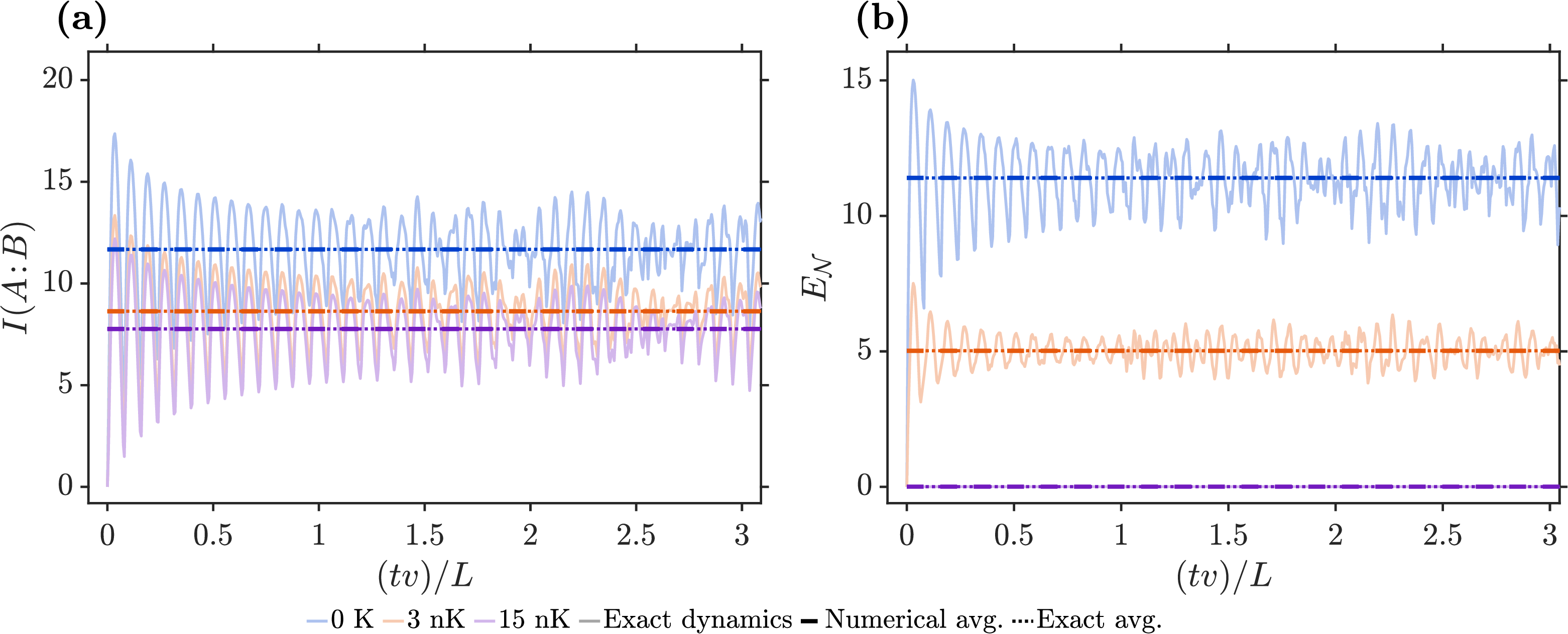}
    \caption{
    {\bf Comparison between (a) the mutual information and (b) the logarithmic negativity for a sudden quench at different temperatures.} The different colors correspond to the different temperatures:  $T=0\;K$ (plain blue), $T=3\; {\rm nK}$ (plain orange), and $T=15\; {\rm nK}$ (plain violet). The numerical time-average of the oscillatory curve is shown in dashed line for each temperature, and this is compared to the exact expressions of Table~\ref{Results_sudden_quench_table} in dotted line. We use the same plotting parameters as in Fig.~\ref{figure_ealy_time}, with a system length $L=400\ \xi$. The timescale $L/v$ provides a useful reference time for finite-size effects, but the precise onset of the observed oscillations depends on the distribution of post-ramp frequencies contributing to the chosen information-theoretic quantity.}
    \label{figcomparison_mutual_logneg_sudden_quench}
\end{figure*}
We summarize the analytical evaluation of Eqs. \eqref{eq: MI for sudden quench at finite system size} -- \eqref{eq: S2 integral form for a sudden quench} in Table~\ref{Results_sudden_quench_table}. The MI is expressed in terms of incomplete elliptic integrals of the first and second kinds, while the finite-temperature LN
involves Dilogarithm functions. These analytical expressions are numerically verified in
Fig.~\ref{figcomparison_mutual_logneg_sudden_quench}.
Since the Rényi-$2$ entropy is not central to the entanglement dynamics studied here, its plot is deferred to Appendix~\ref{expression_Renyi2_arbitrary}.

Fig.~\ref{figcomparison_mutual_logneg_sudden_quench} (a) displays the
evolution of the MI, while Fig.~\ref{figcomparison_mutual_logneg_sudden_quench} (b)
shows the LN following a sudden quench. Both quantities exhibit pronounced
oscillations whose amplitude decreases at early times, followed by smaller and
more irregular modulations at later times.
This behaviour can be understood from the mode dependence of the post-quench
frequencies, $\Omega_{qf}=\sqrt{(vq)^2+4\omega J_f}$. The symplectic
eigenvalues depend on time through factors of $\sin^2(\Omega_{qf}t)$, so each
mode oscillates with period $T_N=\pi/\Omega_{qf}$, with $q=N\pi/L$. 
In the
low-momentum regime, $(vq)^2\ll 4\omega J_f$, the dispersion is effectively
massive and the frequencies are generically incommensurate. This leads to
dephasing between different modes and hence a reduction of the oscillation
amplitude.
By contrast, in the high-momentum regime $(vq)^2\gg 4\omega J_f$, the
dispersion becomes approximately linear, $\Omega_{qf}\simeq vq=Nv\pi/L$.
Therefore $T_N\simeq L/(Nv)$, so the corresponding periods are integer
fractions of $L/v$. These modes can thus partially rephase at time
$L/v$, producing finite-size revival-like features.
For the parameters considered here, the crossover momentum defined by
$(vq)^2\sim 4\omega J_f$ lies below the ultraviolet cutoff $\Lambda$. Both the
massive low-momentum modes and the approximately massless high-momentum modes
therefore contribute to the dynamics. The observed behaviour should thus be
interpreted as the result of a competition between dephasing in the
low-momentum sector and partial rephasing of the high-momentum modes, leading
to damped oscillations and irregular finite-size modulations rather than a
simple monotonic approach to a stationary value.

The exact averaged expressions from Table~\ref{Results_sudden_quench_table} for the MI and LN are displayed in dotted colored lines in Fig. \ref{figcomparison_mutual_logneg_sudden_quench}. They can be compared to the respective dashed colored lines, which are the numerical averages for the MI and LN. We find a good agreement between the numerical averages and the discrete sum formula at zero and finite temperatures. 

{\it Finite temperature effects.---} One can observe that the time-averaged value of the MI decays as the temperature increases in Fig.~\ref{figcomparison_mutual_logneg_sudden_quench}.
We interpret this behaviour as resulting from the competition between quench-induced shared correlations and local thermal fluctuations. As the temperature is increased, the reduced entropies $S_A$ and $S_B$ grow due to the enhanced thermal mixedness. However, a large part of this additional entropy is local and therefore also contributes to the joint entropy $S_{A\cup B}$, which is subtracted in the definition of the MI, Eq.~\eqref{mutual_inf}. 
At the same time, the quench generates correlations genuinely shared between the two LLs, which increases MI. At higher temperatures, these shared quench-induced correlations become progressively less significant compared to the thermal background, leading to an overall suppression of MI.
By contrast, as LN only captures quantum correlations, the thermal noise eventually suppresses entanglement until it completely vanishes at a threshold temperature $\mathcal{T}_{*}$. This threshold can be determined for each mode from the expression of LN, as we discuss in Subsec. \ref{subsec:threshold_temperature}.
\subsection{Finite-ramp at finite temperatures\label{subsec:ln_dynamics} }
Consider now a finite-ramp protocol of the form $J(0)=0$, $J(0<t<t_{f})$ arbitrary, and $J(t>t_{f})=J_{f}$. 
After the protocol time ($t>t_{f}$), the frequency is fixed, so that the solution to the Ermakov equation is the one for fixed frequency with modified initial conditions $\gamma_{q}(t_{f})$ and $\dot{\gamma}_{q}(t_{f})$.  
Hence, for $t>t_{f}$, the dynamics is qualitatively equivalent to that of a sudden quench, with the end of the ramp playing the role of the new initial time and with modified initial conditions at time $t_{f}$. 
We demonstrate (Appendix~\ref{exact_solution}) that the Ermakov factors after any finite-ramp is
\begin{align}
   \gamma_q(t\geq t_{f})
    =\sqrt{M_{q}+N_q-2N_q\sin^2(\theta_q)},\label{general_solution_ermakov}
\end{align}
where we define
\begin{align}
M_q&=\frac{1}{2} \left[ \gamma_q^2(t_{f})+ \left(\frac{\dot \gamma_q(t_{f})}{\Omega_{qf}}\right)^2+ \left(\frac{\zeta_q}{\gamma_q(t_{f})} \right)^2\right],\\
N_q&=\sqrt{M_q^2-\zeta_q^2},\quad
\theta_q=\Omega_{qf}(t-t_{f})-\vartheta_q/2,
\end{align}
and $\vartheta_q$ is a mode-dependent phase-shift parameter that does not affect the long-time average (full expression {in Appendix \ref{exact_solution}}). Here we used $\zeta_q = \Omega_{q0}/\Omega_{qf}$ where $\Omega_{qf} = \Omega_{q}(t_f)$.

{\it Behaviour of the MI, LN, and R{\'e}nyi-2 Entropy ---}
Similar to the sudden-quench case, using the symplectic eigenvalue expressions for arbitrary finite-ramp protocols, we derive integral representations for the long-time averages of the MI, LN, and Rényi-2 entropy. To average over interval $t \rightarrow \infty$, it suffices to compute the average from $t_f$ as $\lim_{t\rightarrow\infty}\frac{1}{t}\int_{0}^{t}f(s)\; ds = \lim_{t\rightarrow\infty}\frac{1}{t}\int_{t_f}^{t}f(s)\; ds$ assuming the integral from $[0,t_f]$ is bounded. In particular, Eqs.~\eqref{eq: MI for sudden quench at finite system size} -- \eqref{eq: S2 integral form for a sudden quench} remain valid, but in this case the symplectic eigenvalues are
\begin{align}
    \lambda_{q}(t\geq t_{f})&= C_{q}(\beta)\sqrt{D_q + E_q\sin^2 \theta_q}, \label{exact_symplectic}\\
     \nu_q(t\geq t_{f})&=C_q(\beta)\left(\sqrt{D_q+E_q \sin^{2}\theta_q}-\sqrt{E_q \sin^{2}\theta_q}\right),
     \label{exact_symplectic_2}
\end{align}
where $E_{q}=N_q\left(\frac{1}{\zeta^{2}_{q}}-1\right)/2,$ and
\begin{align}
     D_{q}&=\frac{1}{2}+\frac{M_{q}}{4}\left(1+\frac{1}{\zeta^{2}_{q}}\right)-\frac{E_q}{2}.
     \label{value_Dq}
\end{align}
Note that $D_q, E_q$ are fully determined by the ramp protocol (Appendix~\ref{symplectic_eigenvalues}). {The sudden quench limit is recovered from the finite-ramp case by taking the limit $D_{q}\to 1$, and  $E_{q}\to A_{q}$.}
Unlike the sudden-quench, the integrals do not always yield closed-form expressions. 
Still, for arbitrary protocols we obtain exact analytical results for the MI and Rényi-2 entropy, while the LN integral remains analytically intractable. 
The final expressions are summarized in Table~\ref{Table_general_quench}. 

\begin{table*}[!htbp]
\centering
%
\renewcommand{\arraystretch}{1.6}
\setlength{\arrayrulewidth}{1pt}
{\arrayrulecolor{rosedeep}
\begin{tabular}{@{}l >{\centering\arraybackslash}m{13cm} c@{}}
\toprule
\rowcolor{blush}\hd{Averaged Quantity} & \multicolumn{1}{c}{\hd{Expression}} & \hd{Proof} \\
\arrayrulecolor{rose}\specialrule{0.9pt}{0pt}{0pt}\arrayrulecolor{rosedeep}
\addlinespace[0.6em]
MI ($\beta \to \infty$) &
$\displaystyle
\begin{aligned}
2\sum_{q >0} \Big\{&
\sqrt{D_q+E_q}
\Bigl(
F  [\theta_q'', k_q']
- E [\theta_q'', k_q']
\Bigr) + 1
+ \ln \Bigl(
\frac{\sqrt{D_q+E_q-1}+\sqrt{D_q-1}}{4}
\Bigr)
\Bigr\}
\end{aligned}
$ &
\pf{App.~\ref{sec: MI formula for general protocol at finite temperature}} \\
\addlinespace[0.8em]
MI ($\beta \geq 0$) &
$\displaystyle
4\sum_{q >0} \left\{
C_q \sqrt{D_q+E_q}
\Bigl(
F \left[\theta_q', k_q'\right]
- E\left[\theta_q', k_q'\right]
\Bigr)
+ \frac{1}{2} \left[ 1 + \ln \left( Z_q'\right) \right]
- \frac{1}{2}C_q \ln\left(\frac{2C_q+1}{2C_q-1}\right)
\right\}
$ &
\pf{App.~\ref{sec: MI formula for general protocol at finite temperature}} \\
\addlinespace[0.8em]
LN ($\beta  \geq 0$) &
$\displaystyle
\frac{2}{\pi}  \sum_{q>0}
\left\{
\int_{0}^{\pi/2}
 \max \left[ 0,
{\rm arcsinh}\left(\sqrt{D_q-1+E_q\sin^2 u}\right)
-\ln\coth\left(\frac{\beta\hbar v q}{2}\right) \right] \, du
\right\}
$ &
\pf{App.~\ref{sec: exact_expression_LN_finite_temp}} \\
\addlinespace[0.8em]
$S_2$ ($\beta  \geq 0$) &
$\displaystyle
\sum_{q>0} \left\{
\ln (2C_q) + \ln\left[\frac{ \sqrt{D_q}+\sqrt{D_q+E_q} }{2}\right]
\right\}
$ &
\pf{App.~\ref{expression_Renyi2_arbitrary}} \\
\addlinespace[0.5em]
\bottomrule
\end{tabular}}
\par\vspace{0.7em}
%
{\renewcommand{\arraystretch}{1.4}\arrayrulecolor{rosedeep}
\begin{tabular}{@{}c@{\hspace{3.5em}}c@{}}
\toprule
\rowcolor{blush}\hd{Primary variables} & \hd{Auxiliary variables} \\
\arrayrulecolor{rose}\specialrule{0.9pt}{0pt}{0pt}\arrayrulecolor{rosedeep}
\addlinespace[0.6em]
$\begin{aligned}
C_q(\beta) &= \frac{1}{2}\coth\!\left(\frac{\beta\hbar\Omega_{q0}}{2}\right) \\[0.9em]
\zeta_q(J_f) &= \frac{vq}{\sqrt{v^2q^2+4\omega J_f}} \\[0.9em]
M_q[\gamma_q(t_{f}),\dot{\gamma_q}(t_{f}),J_f] &= \frac{1}{2}\left[\gamma_q^2(t_{f})+\frac{\dot\gamma_q^2(t_{f})\gamma_q^2(t_{f})+v^2q^2}{\gamma_q^2(t_{f})(v^2q^2+4\omega J_f)}\right] \\[0.9em]
D_q(M_q,\zeta_q) &= \frac{\left(M_q+\sqrt{M_q^2-\zeta_q^2}+1\right)^2}{4\left(M_q+\sqrt{M_q^2-\zeta_q^2}\right)} \\[0.9em]
E_q(M_q,\zeta_q) &= \frac{\sqrt{M_q^2-\zeta_q^2}}{2}\left(\frac{1}{\zeta_q^2}-1\right)
\end{aligned}$
&
$\begin{aligned}
k_q'(D_q,E_q) &= \sqrt{\frac{D_q}{D_q+E_q}} \\[0.9em]
\theta_q'(C_q,D_q) &= \arcsin\!\left(\frac{1}{2C_q\sqrt{D_q}}\right) \\[0.9em]
\theta_q''(D_q) &= \arcsin\!\left(\frac{1}{\sqrt{D_q}}\right) \\[0.9em]
Z_q^\prime(C_q, D_q, E_q) &= \frac{\sqrt{4C^{2}_{q}(D_{q}+E_{q})-1}+\sqrt{4C^{2}_{q}D_{q}-1}}{2\sqrt{4C^{2}_{q}-1}}
\end{aligned}$ \\
\addlinespace[0.5em]
\bottomrule
\end{tabular}}
\caption{Long-time averaged MI and LN after a general protocol at inverse temperature $\beta$, with the $\beta \to \infty $ results included as limiting cases. The incomplete elliptic integrals of the first and second kinds are denoted by $F[\theta,k]$ and $E[\theta,k]$. 
}
\label{Table_general_quench}
\end{table*}
\begin{figure*}[!htbp]
    \centering
    \includegraphics[width=0.86\linewidth]{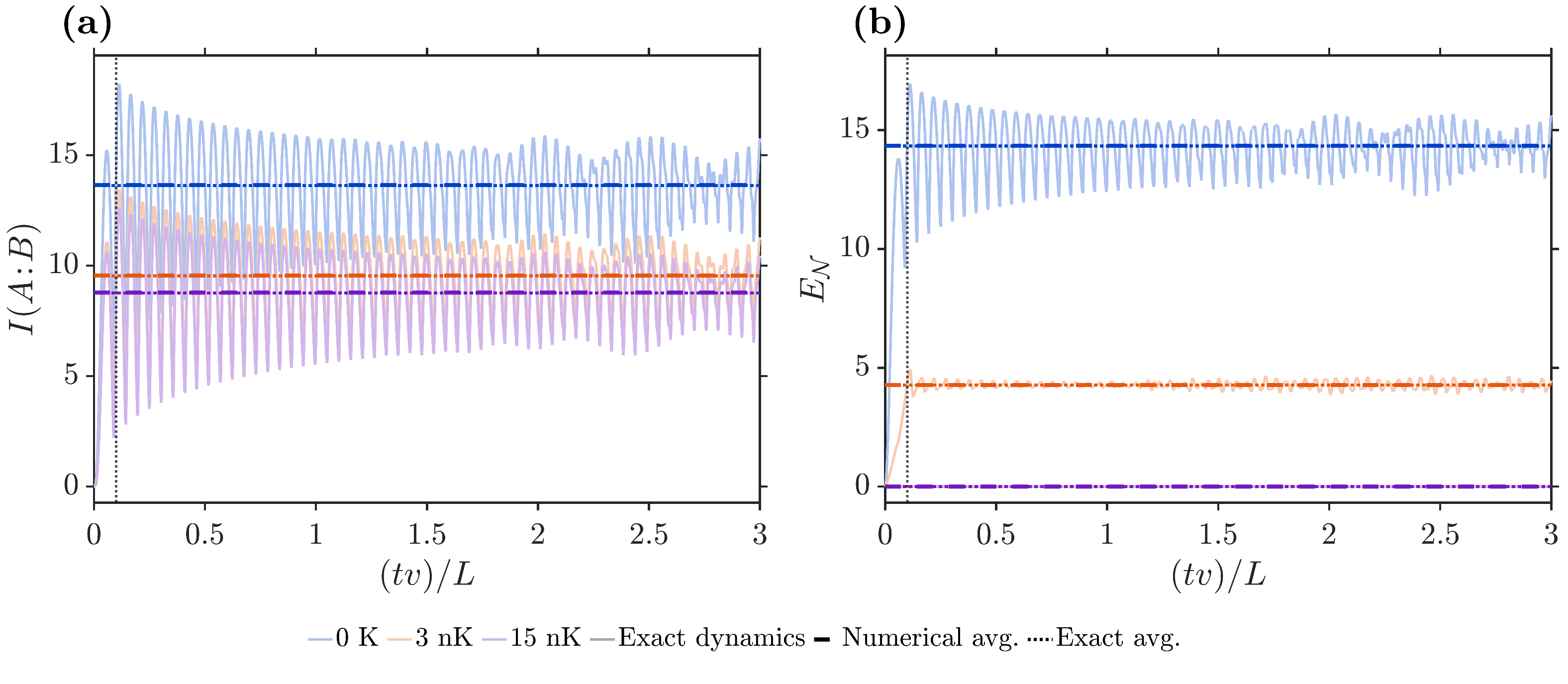}
    \caption{
    {\bf Time-evolution for the MI and the LN for a linear protocol.} We consider the linear protocol of duration $t_f=0.1 L/v$, described in Eq.~\eqref{eq:linear-protocol}. The dashed vertical line marks the time at which the protocol enters the saturation regime. Panel (a) displays the MI and panel (b) displays the LN for the linear protocol at zero (plain blue line) and finite temperatures (plain orange and violet lines). The exact time-averages for the MI and LN at zero and finite temperatures are displayed in colored dashed lines, and they correspond to the expressions given in Table \ref{Table_general_quench}. They are compared to the numerical time-averaging over the interval $t_f$ to $10 t_f$ in dotted lines, which agree well with our formula. We use the same parameters as in Fig.~\ref{figcomparison_mutual_logneg_sudden_quench}, except $L=600\xi$.}
\label{fig:arbitrary_quench_protocol}
\end{figure*}

We perform a numerical verification by picking a linear ramp protocol
\begin{align}
J(t) &=
\begin{cases}
0                 &\quad t \le 0\\
J_f\dfrac{t}{t_f} &\quad 0 < t < t_f\\
J_f               &\quad t \ge t_f
\end{cases}
\label{eq:linear-protocol}
\end{align}
for which analytical expressions of $\gamma_{q}(t_{})$ and $\dot{\gamma}_{q}(t_{})$ are available \cite{dupays2024exact}.
The resulting entanglement dynamics are illustrated in Fig.~\ref{fig:arbitrary_quench_protocol}.
The figure compares the long-time averages of the MI and LN obtained from the numerical time averaging with the exact analytical predictions of Table~\ref{Table_general_quench}, evaluated using $\gamma_{q}(t_{f})$ and $\dot{\gamma}_{q}(t_{f})$.
To neglect the early-time transient contribution, the numerical averaging is performed from $t_{f}$ to $10\,t_{f}$. 
 We find good agreement at both zero and finite temperatures.

Furthermore, similar to the sudden-quench case, the long-time averages of MI and LN decrease with increasing temperature.
This behavior is not restricted to the linear ramp protocol: for an arbitrary ramp protocol, the long-time-averaged MI and LN are monotonically decreasing functions of temperature.
We prove the above for MI in Appendix~\ref{subsec: MI decays monotonically}, while the case for averaged LN is observed immediately from Eq.~\eqref{LN_sum_q} and Eq.~\eqref{eq:symp_eigvals_partial_transpose}. The physical interpretation of the monotonic decay of the averaged MI for the arbitrary protocol is similar to the sudden quench case, i.e. the competition between the thermal growth of the entropies of the reduced system and the full system. 

\subsection{Threshold temperature for entanglement}\label{subsec:threshold_temperature}

Using Eq. \eqref{exact_symplectic_2}, one obtains the long-time averaged LN 
{\small
\begin{align}
    \overline{E_{\mathcal{N}}(\beta)} &= \frac{2}{\pi} \sum_{q >0}^{\Lambda}  \int_0^{\pi/2} \max \Big\{0, {\rm arcsinh}\left(\sqrt{D_{q}-1+E_q\sin^2 u} \right) \nonumber\\
    &-\ln[2C_{q}(\beta)] \Big\} \, du \geq 0\ ,\label{log_neg_general_quench}
\end{align}}
where the inequality holds by definition, and is saturated only if $\sqrt{D_q-1+E_q\sin^2 u}\geq  \csch\left(\beta\hbar\Omega_{q0}\right)$.
When $\overline{E_{\mathcal{N}}(\beta)}= 0$, this means that LN is zero at all times.
We define the threshold temperature $\mathcal{T}_{*}$ for entanglement by introducing a momentum cutoff $\Lambda$, and take the supremum over all momentum modes contributing to entanglement, i.e.
\begin{align}
    \mathcal{T}_{*} &\equiv 
   \sup_{0<q\leq\Lambda} \frac{\hbar v q}{k_B \operatorname{arcsinh}\left((D_q-1+E_q)^{-1/2}\right)}.
\label{eq: true_critical_temp}
\end{align} 
In general, the threshold temperature depends on the full protocol through $D_q$ and $E_q$.

We focus on the sudden-quench, where the threshold temperature admits a simpler analytical form.
To recover the sudden-quench limit, we set $\gamma_{q}(t_f=0)=1$ and $\dot{\gamma}_{q}(t_f=0)=0$ and hence $D_q=1$ and $E_q=A_q$, reducing the threshold temperature to (Appendix~\ref{sec: threshold temperature})
\begin{align}
\mathcal{T}_{*} = 
   \sup_{0<q\leq\Lambda} \frac{\hbar v q}{2k_B \operatorname{arcsinh}\left(vq/\sqrt{4\omega J_f}\right)}, \label{eq: threshold temperature for sudden quench}
\end{align}
which is monotonically increasing with $q$, so that the $\sup$ is obtained for $q=\Lambda$. We numerically verify Eq.~\eqref{eq: threshold temperature for sudden quench} for a sudden quench, where the momentum cutoff $\Lambda=\xi^{-1}$ is taken to be the inverse of the healing length. 
Fig.~\ref{fig:critical_temperature} displays the time-averaged LN as a function of temperature. We clearly observe a monotonic decay of the entanglement with temperature, and obtain an excellent match for the threshold temperature $\mathcal{T}_{*}=16.55\; {\rm nK}$ obtained from Eq. \eqref{eq: threshold temperature for sudden quench} with the temperature for which $\overline{E_{\mathcal{N}}(t,\beta_{*})}=0$ as obtained from numerical data. 
Notwithstanding, this threshold should be interpreted with care: in the absence of a momentum cutoff, or for a sufficiently large cutoff, the entanglement does not truly vanish; rather, the residual entanglement is effectively undetectable.
\begin{figure}[!htbp]
    \centering
    \includegraphics[width=1\linewidth]{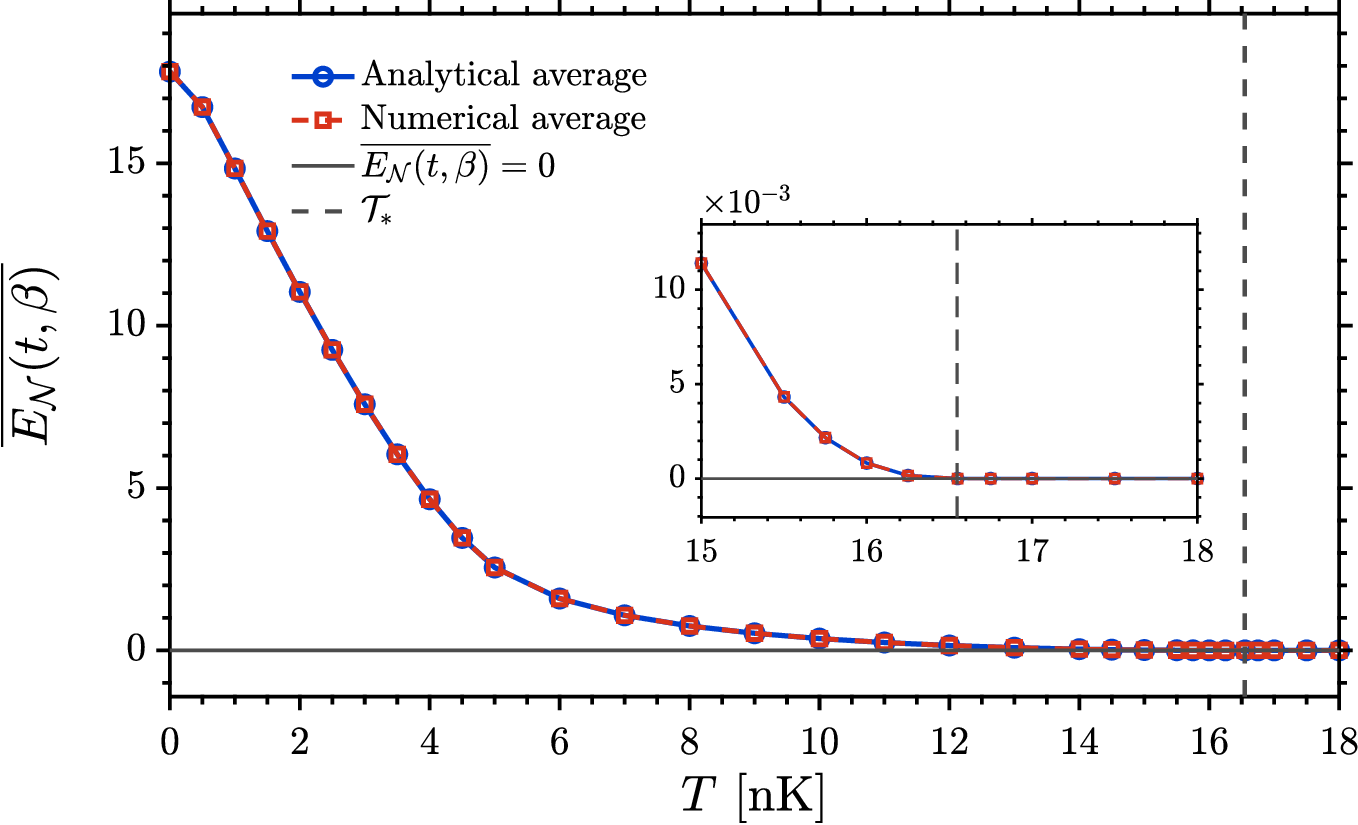}
    \caption{{\bf Time-averaged logarithmic negativity for the sudden quench as a function of the temperature}. The blue line indicates the time-averaged LN, as given by the third and fourth equations in Table~\ref{Results_sudden_quench_table}. The red dashed line is the numerically time-averaged LN. 
     The vertical dashed line corresponds to the threshold temperature predicted by Eq.~\eqref{eq: threshold temperature for sudden quench}. Plotting parameters are identical to those in Fig.~\ref{figcomparison_mutual_logneg_sudden_quench}. The inset magnifies the range from $15\;{\rm nK}$ to $18\;{\rm nK}$, close to the threshold temperature $\mathcal{T}^* = 16.55\; \rm nK$.
}
    \label{fig:critical_temperature}
\end{figure}

\subsection{Long-time average scalings in the thermodynamic limit\label{Sec: TDL scaling}}
{
In Secs.~\ref{subsec: sudden quench long time average} and
\ref{subsec:ln_dynamics}, we derive analytical expressions for the
long-time averaged MI, LN, and Rényi entropies after arbitrary
protocols reaching the saturation regime. The thermodynamic limit is taken by replacing the momentum sum with an integral,
$\sum_{q>0}\to (L/\pi)\int dq$. 
Here, we focus
on the sudden quench and show that in the zero temperature limit averaged information-theoretic quantities of interest scale as $L\sqrt{4\omega J_f}/v$.

For simplicity, we focus on the MI after a sudden quench (explicit expression in Table~\ref{Results_sudden_quench_table}). 
In the TDL, the sum over momenta becomes an integral, and the integrand depends on $q$ only through the oscillation amplitude $A_{q}$ and the temperature factor $C_q(\beta)$, given by  Eqs.~\eqref{A_q_parameter} and ~\eqref{eq:C_q} respectively. Motivated by the form of $A_q$, we introduce the momentum scale $q_J=\sqrt{4\omega J_f}/v$ and the associated dimensionless momentum $x=q/q_J$. 
At finite temperatures, after this change 
of variables, the long-time averaged MI takes the form
\begin{equation}
    \overline{I(A:B,\beta)}
=
\frac{L\sqrt{4\omega J_f}}{v}
\mathcal I\left(\beta\hbar\sqrt{4\omega J_f}\right),
\end{equation}
where $\mathcal I$ is a dimensionless quantity. In the zero-temperature limit (cf. Appendix~\ref{scaling_arbitrary_protocol}), this expression reduces to 
\begin{align}
{\rm lim}_{\beta\to \infty}\overline{I(A:B,\beta)}&=\frac{L\sqrt{4\omega J_{f}}}{2\pi v}(1.201). \nonumber
\end{align}
The same argument applies to LN and R{\'e}nyi entropies
\begin{align}
\lim_{\beta\rightarrow \infty}\overline{E_{\mathcal{N}}(\beta)} &= \frac{2L\sqrt{4\omega J_f}}{\pi^2v}\int_0^{\infty} d\theta\;  \text{Ti}_2 \left[\operatorname{csch}(2\theta) \right]\cosh \theta , \nonumber\\
\lim_{\beta\rightarrow \infty}\overline{S_{2}^A(\beta)}&=\frac{L}{2\pi}\frac{\sqrt{4\omega J_{f}}}{v}(4-\pi), \nonumber
\end{align}
which follows from evaluating the integrand in Table~\ref{Results_sudden_quench_table}.
The zero-temperature scaling above relies on the fact that, for the sudden quench, the only momentum scale entering the integrand is $q_J$. This scaling argument fails at finite temperature and for generic finite-time protocols, see Appendix~\ref{scaling_arbitrary_protocol}.

We finally provide a heuristic interpretation of the
zero-temperature sudden-quench scaling. As discussed in Sec.~\ref{sec: Entanglement_dynamics_study},
field-space correlations arise from the different evolution of the
symmetric and antisymmetric sectors. After a sudden quench, their
dispersions differ appreciably for
$q\lesssim \sqrt{4\omega J_f}/v \equiv q_J$, when the Luttinger liquid dispersion and mass term are comparable, whereas for larger momenta the mass
term becomes progressively negligible and the two sectors evolve
similarly. 
This suggests that the dominant contribution comes from modes with
$q\lesssim q_J$.
If each relevant mode contributes an amount of order unity to $S_A$, their number scales as $Lq_J$, yielding the heuristic estimate
\begin{equation}
S_A\sim Lq_J
=
L\frac{\sqrt{4\omega J_f}}{v}.
\end{equation}
The same mode-counting argument accounts for the common $L \, q_J$ scaling of MI, LN, and Rényi entropies, although their numerical prefactors remain measure dependent.
}

\section{Convergence to the adiabatic solution}
\label{sec: adiabatic smooth_quench_protocol}

We turn to the adiabatic limit, where the ramp duration is much longer than intrinsic timescales of the system.
A complete treatment requires full sine-Gordon dynamics, since the system spends substantial time at intermediate tunneling strengths where the quadratic approximation breaks down.
For simplicity, we study the adiabatic limit within the Gaussian approximation to provide a first benchmark. 

For long tunneling protocol times, we can use the adiabatic solution to the Ermakov Eq.~\eqref{Ermakov}, 
which reduces to the scaling 
\begin{align}
\gamma^{\rm ad}_{q}(t)&=\sqrt{\frac{\Omega_{q0}}{\Omega_q(t)}}. \label{adiabatic_solution_ermakov}
\end{align}
This solution comes from the vanishing of the second derivative at large protocol times $\ddot{\gamma}_{q}(t) \approx 0$, that sets a condition on the first and second derivatives of the time-dependent oscillator frequency, detailed in Appendix~\ref{adiabatic_approx_details}. 
The adiabatic condition can only be satisfied for finite size systems, to avoid divergence at $t\to 0$ and $q\to 0$.
If the adiabatic condition is satisfied for all modes, the symplectic eigenvalues of the reduced covariance matrix can be approximated as 
\begin{align}
 \lambda^{\rm ad}_{q}(t, \beta)\approx \frac{C_q(\beta)}{2}\left(\gamma^{\rm ad}_{q}(t)+\frac{1}{\gamma^{\rm ad}_{q}(t)}\right), \label{eq: thermal state symplectic eigenvalues}
\end{align}
where we have approximated $\dot{\gamma}_q(t) \approx 0$.
Similarly, we have
\begin{align}
\nu_{q}(t,\beta)&\approx C_q(\beta)\gamma_q^{\rm ad}(t).\label{eq: adiabatic_symplectic_partial}
\end{align}
{In many adiabatic settings, the initial state is set to be the ground state, which is easily recovered from our solution by substituting $C_q \rightarrow 1/2$.} Eqs.~\eqref{eq: thermal state symplectic eigenvalues} and~\eqref{eq: adiabatic_symplectic_partial} correspond to an adiabatically evolved thermal state: initial populations are preserved, while each energy eigenstate follows the instantaneous energy eigenbasis of the time-dependent Hamiltonian (see Appendix~\ref{adiabatic_approx_details}).
Note that this is not the intantaneous thermal state, which would have time-dependent populations.

To illustrate convergence to the adiabatic solution, we plot the MI and LN for a polynomial ramp and compare with the adiabatic solution. Fig.~\ref{smooth_quench} (a) displays the ramp protocol for different protocol durations. Fig.~\ref{smooth_quench} (b) shows MI and a comparison to the adiabatic solution, while Fig.~\ref{smooth_quench} (c) shows the same for LN. Both exhibit rapid oscillations characteristic of nonadiabatic behavior. Nevertheless, for very slow protocols, we expect the solution to converge to a smooth adiabatic solution.
Indeed, as ramp duration increases, the MI and LN both approach their corresponding adiabatic evolutions, as can be seen from the overlap between the dashed and plain lines in Figs.~\ref{smooth_quench} (b) -- (c). For the chosen tunneling protocol and parameters, we numerically find that convergence to the adiabatic solution occurs around $t_f \sim 10L/v$ for MI and $t_f \sim L/v$ for LN. These timescales can be understood from the slowest mode: since the oscillation period of this mode without the tunneling is of order $L/v$, adiabatic behaviour is expected only when the protocol duration is longer than this timescale. 

\section{Discussion and Conclusion}
\label{conclusion}
\begin{figure*}[!htbp]
    \centering
\includegraphics[width=1\linewidth]{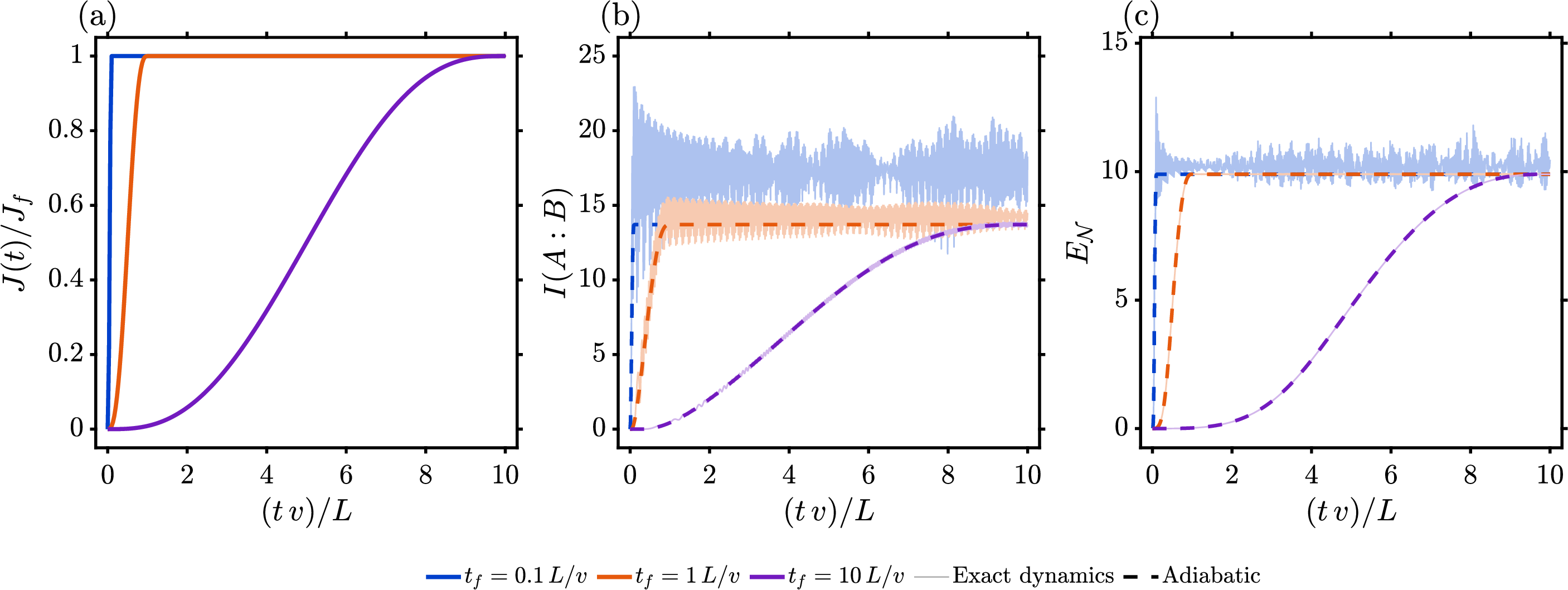}
    \caption{{\bf (a) Smooth polynomial protocol for different protocol times. (b) Mutual information and (c) logarithmic negativity for the smooth protocol in (a) with different ramp durations, compared with the adiabatic limit.} For illustration, we consider a tunneling protocol given by a fifth-order polynomial,
$J(t)=J_{f}\mathcal{P}(t/t_{f})$, with
$\mathcal{P}(s)=6s^{5}-15s^{4}+10s^{3}$, which satisfies the boundary conditions
$J(0)=0$, $J(t_{f})=J_{f}$, and
$\dot{J}(0)=\ddot{J}(0)=\dot{J}(t_{f})=\ddot{J}(t_{f})=0$.
The different colours correspond to different protocol durations for the smooth protocol.
In panels (b) and (c), the thin lines show the MI and LN respectively, with its oscillatory behaviour, while the thick coloured lines correspond to the adiabatic solution. We observe convergence to the adiabatic solution at large ramp time for both the MI and LN. We use the same plotting parameters as in Fig.~\ref{figcomparison_mutual_logneg_sudden_quench}, except that $L=1000\chi$ and we set the initial temperature to $T=2\; {\rm nK}$.}
    \label{smooth_quench}
\end{figure*}
In this work, we have shown that field-space entanglement dynamics in tunnel-coupled Luttinger liquids admits a compact analytical description in the Gaussian regime, even under arbitrary time-dependent tunneling protocols. For zero- or finite-temperature initial states, we derived explicit expressions for the mutual information, R{\'e}nyi entropies, and logarithmic negativity in terms of the solution to the mode-resolved Ermakov equation. The Ermakov factors act as a compressed description of the dynamics: once known, they determine the time evolution of all the relevant information-theoretic quantities, and provide a clear conceptual route to studying the behavior of field-space entanglement.

The Gaussian dynamics exhibits distinctive early-time scaling, governed by the first non-vanishing derivative of the tunneling ramp at initial time. For protocols satisfying $J(0)=0$, the early-time MI first grows as $I(A:B;t)\propto t^{2n+2}$, up to logarithmic corrections in the transient regime, before crossing over to the thermodynamic-limit scaling $I(A:B;t)\propto t^{n+1}$, with the sudden quench included as the $n =0$ case. While the behaviour of the transient perturbative regime is expected to depend on the first non-vanishing derivative of $J(t)$, we see that this information alone continues to determine the macroscopic early-time scaling in the thermodynamic limit. The microscopic parameters $K$ and $v$ enter only through the overall scale, via $\omega\propto K/v$. This contrasts with spatial entanglement in Luttinger liquids, where leading scaling laws are typically controlled by universal quantities such as the central charge, or, after a quench, by quasiparticle dynamics.

A similar simplification occurs in the long-time averaged dynamics. Although the information-theoretic quantities generally exhibit oscillations after the ramp, their time averages are fully determined by the final tunneling strength $J_f$ and the mode-dependent quantities $\{\gamma_q(t_f),\dot{\gamma}_q(t_f)\}$. 
Different protocols that share the same values at $t=t_f$ therefore become dynamically equivalent at the level of the long-time averages, providing another example of this reduction in complexity.
For a sudden quench at zero temperature, we further show that the averaged MI, LN, and Rényi entropies scale as $L\sqrt{4\omega J_f}/v$. 

Finally, we consider the adiabatic regime, where the ramp duration greatly exceeds the intrinsic timescales of the system.
In this regime, the Ermakov equation yields a simple analytical solution and provides a clear physical picture of the evolution of those information-theoretic measures. However, the Gaussian approximation may no longer hold in the adiabatic regime, and whether the resulting physical predictions remain accurate requires further investigation.

Our analysis relies on the harmonic approximation to the time-dependent sine-Gordon model, which makes correlation dynamics in this setting analytically tractable.
The validity of the harmonic approximation is clear in equilibrium; one picks a strong tunneling strength $J$ so that the relative phase fluctuation is suppressed around a local minima of the cosine potential. 
Away from equilibrium, it is expected to be accurate in situations where non-Gaussian correlations have limited time to develop such as in sudden quenches and sufficiently rapid smooth ramps into the strongly coupled regime. A direct confirmation requires treating the full time-dependent sine-Gordon dynamics, using methods such as Hamiltonian truncation~\cite{Horvath2019}, self-consistent time-dependent harmonic approximation~\cite{Van_nieuwkerk2019,Van_Nieuwkerk2020}, and truncated Wigner approximation~\cite{agarwal2023caustics, bayoboc2022dynamics}.
These methods have been applied mainly to time-dependent correlation functions, but await extension to information-theoretic quantities.
Our work provides a useful benchmark to further examine the extent of non-Gaussian effects on field-space entanglement dynamics.

Besides understanding the influence of non-Gaussianity, other natural extensions follow from our work. For instance, we focused on symmetric Luttinger liquids with $v_A = v_B = v$ and $K_A = K_B = K$. When this symmetry is broken, the symmetric and antisymmetric sectors no longer decouple, although the resulting dynamics remains analytically tractable through the generalized Bogoliubov transformations~\cite{ruggiero2021large}. We have also focused on tunneling coupling, which produces a gapped antisymmetric sector while leaving the symmetric sector gapless. Other interactions can instead generate gapped-gapped or gapless-gapless low-energy structures, whose ground-state entanglement was studied in Ref.~\cite{Lundgren2013}. Extending these settings to nonequilibrium protocols and finite-temperature initial states would reveal which features identified here are specific to tunnel coupling and which persist more broadly across coupled-field dynamics. Furthermore, one can consider other scenarios involving multipartite entanglement, e.g. field-space entanglement between three or more fields \cite{mozaffar2016on}. A minimal setting is a system of three coupled Luttinger liquids, where one could test whether pairwise correlations obey, violate, or generalize entanglement-sharing constraints analogous to monogamy relations \cite{Coffman2000,Osborne2006}. 

Taken together, our results establish field-space entanglement as a distinctive probe for nonequilibrium dynamics, where interaction geometry, driving protocol, and thermal fluctuations leave directly identifiable signatures. Extending this analysis beyond Gaussian dynamics and symmetric couplings will reveal which of these structures are truly universal, and which are specific to tunnel-coupled Luttinger liquids.

\begin{acknowledgments}
    It is a pleasure to thank Thierry Giamarchi, Patrizia Vignolo, Stefan Aimet and Sam Carr for interesting comments. P.R. and L.D. acknowledge support from the UK Engineering and Physical Sciences Research Council (EPSRC) through a New Investigator Award, grant number EP/Y015363/1. The NTU team is supported by the National Research Foundation, Singapore through the National Quantum Office, hosted in A*STAR, under its Centre for Quantum Technologies Funding Initiative
(S24Q2d0009), Singapore; and the Ministry of Education, Singapore, under its “Probing Entanglement and Thermalization Dynamics in Quantum Field Simulators” (Tier 2, Project No.: MOE-T2EP50225-0022).
\end{acknowledgments}

\bibliography{REF_manuscript}

@article{Horvath2019,
  title = {{Nonequilibrium time evolution and rephasing in the quantum sine-Gordon model}},
  author = {Horv\'ath, D. X. and Lovas, I. and Kormos, M. and Tak\'acs, G. and Zar\'and, G.},
  journal = {Phys. Rev. A},
  volume = {100},
  issue = {1},
  pages = {013613},
  numpages = {21},
  year = {2019},
  month = {Jul},
  publisher = {American Physical Society},
  doi = {10.1103/PhysRevA.100.013613},
  url = {https://link.aps.org/doi/10.1103/PhysRevA.100.013613}
}

@article{Van_Nieuwkerk2020,
	title = {{On the low-energy description for tunnel-coupled one-dimensional Bose gases}},
	pages = {025},
	author = {van Nieuwkerk, Yuri Daniel and Essler, Fabian H. L.},
	journal = {SciPost Phys.},
	volume = {9},
	year = {2020},
	publisher = {SciPost},
	doi = {10.21468/SciPostPhys.9.2.025},
	url = {https://scipost.org/10.21468/SciPostPhys.9.2.025}
}

@article{Van_nieuwkerk2019,
doi = {10.1088/1742-5468/ab3579},
url = {https://doi.org/10.1088/1742-5468/ab3579},
year = {2019},
month = {aug},
publisher = {IOP Publishing and SISSA},
volume = {2019},
number = {8},
pages = {084012},
author = {van Nieuwkerk, Yuri D and Essler, Fabian H L},
title = {Self-consistent time-dependent harmonic approximation for the sine-Gordon model out of equilibrium},
journal = {J. Stat. Mech.: Theory Exp.}
}

@article{Peschel2009,
doi = {10.1088/1751-8113/42/50/504003},
url = {https://doi.org/10.1088/1751-8113/42/50/504003},
year = {2009},
month = {dec},
publisher = {},
volume = {42},
number = {50},
pages = {504003},
author = {Peschel, Ingo and Eisler, Viktor},
title = {Reduced density matrices and entanglement entropy in free lattice models},
journal = {J. Phys. A: Math. Theor.}
}

@misc{Yang2026,
      title={Probing Entanglement and Symmetries in Random States Using a Superconducting Quantum Processor}, 
      author={Jia-Nan Yang and Lata Kh Joshi and Filiberto Ares and Yihang Han and Pengfei Zhang and Pasquale Calabrese},
      year={2026},
      eprint={2601.22224},
      archivePrefix={arXiv},
      primaryClass={quant-ph},
      url={https://arxiv.org/abs/2601.22224}, 
}

@Article{Islam2015,
author={Islam, Rajibul
and Ma, Ruichao
and Preiss, Philipp M.
and Eric Tai, M.
and Lukin, Alexander
and Rispoli, Matthew
and Greiner, Markus},
title={Measuring entanglement entropy in a quantum many-body system},
journal={Nature},
year={2015},
month={Dec},
day={01},
volume={528},
number={7580},
pages={77-83},
issn={1476-4687},
doi={10.1038/nature15750},
url={https://doi.org/10.1038/nature15750}
}

@misc{Matsoukasroubeas2026,
      title={Randomised measurements of a disorder-induced entanglement transition in a neutral atom quantum processor}, 
      author={Apollonas S. Matsoukas-Roubeas and Oscar Scholin and Lucas Sá and Arinjoy De and Majd Hamdan and Alexei Bylinskii and Andrew J. Daley and Dorian A. Gangloff},
      year={2026},
      eprint={2604.24854},
      archivePrefix={arXiv},
      primaryClass={quant-ph},
      url={https://arxiv.org/abs/2604.24854}, 
}

@article{Camilo2019,
  title = {{Strong subadditivity of the R\'enyi entropies for bosonic and fermionic Gaussian states}},
  author = {Camilo, Giancarlo and Landi, Gabriel T. and Eli\"ens, Sebas},
  journal = {Phys. Rev. B},
  volume = {99},
  issue = {4},
  pages = {045155},
  numpages = {8},
  year = {2019},
  month = {Jan},
  publisher = {American Physical Society},
  doi = {10.1103/PhysRevB.99.045155},
  url = {https://link.aps.org/doi/10.1103/PhysRevB.99.045155}
}

@article{Furukawa2011,
  title = {{Entanglement entropy between two coupled Tomonaga-Luttinger liquids}},
  author = {Furukawa, Shunsuke and Kim, Yong Baek},
  journal = {Phys. Rev. B},
  volume = {83},
  issue = {8},
  pages = {085112},
  numpages = {17},
  year = {2011},
  month = {Feb},
  publisher = {American Physical Society},
  doi = {10.1103/PhysRevB.83.085112},
  url = {https://link.aps.org/doi/10.1103/PhysRevB.83.085112}
}

@misc{Mathe2026,
      title={Thermal Entanglement and Out-of-Equilibrium Thermodynamics in 1D Bose gases}, 
      author={Julia Mathé and Nicky Kai Hong Li and Pharnam Bakhshinezhad and Giuseppe Vitagliano},
      year={2026},
      eprint={2604.01157},
      archivePrefix={arXiv},
      primaryClass={quant-ph},
      url={https://arxiv.org/abs/2604.01157}, 
}

@article{Coffman2000,
  title = {Distributed entanglement},
  author = {Coffman, Valerie and Kundu, Joydip and Wootters, William K.},
  journal = {Phys. Rev. A},
  volume = {61},
  issue = {5},
  pages = {052306},
  numpages = {5},
  year = {2000},
  month = {Apr},
  publisher = {American Physical Society},
  doi = {10.1103/PhysRevA.61.052306},
  url = {https://link.aps.org/doi/10.1103/PhysRevA.61.052306}
}

@article{Osborne2006,
  title = {General Monogamy Inequality for Bipartite Qubit Entanglement},
  author = {Osborne, Tobias J. and Verstraete, Frank},
  journal = {Phys. Rev. Lett.},
  volume = {96},
  issue = {22},
  pages = {220503},
  numpages = {4},
  year = {2006},
  month = {Jun},
  publisher = {American Physical Society},
  doi = {10.1103/PhysRevLett.96.220503},
  url = {https://link.aps.org/doi/10.1103/PhysRevLett.96.220503}
}

@article{Calabrese2005,
doi = {10.1088/1742-5468/2005/04/P04010},
url = {https://doi.org/10.1088/1742-5468/2005/04/P04010},
year = {2005},
month = {apr},
publisher = {},
volume = {2005},
number = {04},
pages = {P04010},
author = {Calabrese, Pasquale and Cardy, John},
title = {Evolution of entanglement entropy in one-dimensional systems},
journal = {J. Stat. Mech.: Theory Exp.}
}

@article{Betz2011,
  title = {Two-Point Phase Correlations of a One-Dimensional Bosonic Josephson Junction},
  author = {Betz, T. and Manz, S. and B\"ucker, R. and Berrada, T. and Koller, Ch. and Kazakov, G. and Mazets, I. E. and Stimming, H.-P. and Perrin, A. and Schumm, T. and Schmiedmayer, J.},
  journal = {Phys. Rev. Lett.},
  volume = {106},
  issue = {2},
  pages = {020407},
  numpages = {4},
  year = {2011},
  month = {Jan},
  publisher = {American Physical Society},
  doi = {10.1103/PhysRevLett.106.020407},
  url = {https://link.aps.org/doi/10.1103/PhysRevLett.106.020407}
}

@Article{Hofferberth2007,
author={Hofferberth, S.
and Lesanovsky, I.
and Fischer, B.
and Schumm, T.
and Schmiedmayer, J.},
title={Non-equilibrium coherence dynamics in one-dimensional Bose gases},
journal={Nature},
year={2007},
month={Sep},
day={01},
volume={449},
number={7160},
pages={324-327},
issn={1476-4687},
doi={10.1038/nature06149},
url={https://doi.org/10.1038/nature06149}
}

@article{Lauchli2012,
  title = {Entanglement spectra of coupled $S=\frac{1}{2}$ spin chains in a ladder geometry},
  author = {L\"auchli, Andreas M. and Schliemann, John},
  journal = {Phys. Rev. B},
  volume = {85},
  issue = {5},
  pages = {054403},
  numpages = {6},
  year = {2012},
  month = {Feb},
  publisher = {American Physical Society},
  doi = {10.1103/PhysRevB.85.054403},
  url = {https://link.aps.org/doi/10.1103/PhysRevB.85.054403}
}

@article{Cazalilla2004,
doi = {10.1088/0953-4075/37/7/051},
url = {https://doi.org/10.1088/0953-4075/37/7/051},
year = {2004},
month = {mar},
publisher = {},
volume = {37},
number = {7},
pages = {S1},
author = {M A Cazalilla},
title = {Bosonizing one-dimensional cold atomic gases},
journal = {J. Phys. B: At. Mol. Opt. Phys.},
}

@article{Poilblanc2010,
  title = {{Entanglement Spectra of Quantum Heisenberg Ladders}},
  author = {Poilblanc, Didier},
  journal = {Phys. Rev. Lett.},
  volume = {105},
  issue = {7},
  pages = {077202},
  numpages = {4},
  year = {2010},
  month = {Aug},
  publisher = {American Physical Society},
  doi = {10.1103/PhysRevLett.105.077202},
  url = {https://link.aps.org/doi/10.1103/PhysRevLett.105.077202}
}

@article{Lundgren2013,
  title = {{Entanglement spectra between coupled Tomonaga-Luttinger liquids: Applications to ladder systems and topological phases}},
  author = {Lundgren, Rex and Fuji, Yohei and Furukawa, Shunsuke and Oshikawa, Masaki},
  journal = {Phys. Rev. B},
  volume = {88},
  issue = {24},
  pages = {245137},
  numpages = {14},
  year = {2013},
  month = {Dec},
  publisher = {American Physical Society},
  doi = {10.1103/PhysRevB.88.245137},
  url = {https://link.aps.org/doi/10.1103/PhysRevB.88.245137}
}

@article{Peschel_1999,
doi = {10.1088/0305-4470/32/48/305},
url = {https://doi.org/10.1088/0305-4470/32/48/305},
year = {1999},
month = {dec},
publisher = {},
volume = {32},
number = {48},
pages = {8419},
author = {Ingo Peschel and Ming-Chiang Chung},
title = {Density 
matrices for a chain of oscillators},
journal = {Journal of Physics A: Mathematical and General},
}

@article{Chen2013,
doi = {10.1088/1742-5468/2013/08/P08013},
url = {https://doi.org/10.1088/1742-5468/2013/08/P08013},
year = {2013},
month = {aug},
publisher = {IOP Publishing and SISSA},
volume = {2013},
number = {08},
pages = {P08013},
author = {Chen, Xiao and Fradkin, Eduardo},
title = {{Quantum entanglement and thermal reduced density matrices in fermion and spin systems on ladders}},
journal = {J. Stat. Mech.: Theory Exp.}
}

@article{Xu2011,
  title = {{Entanglement entropy of coupled conformal field theories and Fermi liquids}},
  author = {Xu, Cenke},
  journal = {Phys. Rev. B},
  volume = {84},
  issue = {12},
  pages = {125119},
  numpages = {6},
  year = {2011},
  month = {Sep},
  publisher = {American Physical Society},
  doi = {10.1103/PhysRevB.84.125119},
  url = {https://link.aps.org/doi/10.1103/PhysRevB.84.125119}
}

@article{Abanin2019,
  title = {Colloquium: Many-body localization, thermalization, and entanglement},
  author = {Abanin, Dmitry A. and Altman, Ehud and Bloch, Immanuel and Serbyn, Maksym},
  journal = {Rev. Mod. Phys.},
  volume = {91},
  issue = {2},
  pages = {021001},
  numpages = {26},
  year = {2019},
  month = {May},
  publisher = {American Physical Society},
  doi = {10.1103/RevModPhys.91.021001},
  url = {https://link.aps.org/doi/10.1103/RevModPhys.91.021001}
}

@Article{Calabrese2020,
	title={{Entanglement spreading in non-equilibrium integrable systems}},
	author={Pasquale Calabrese},
	journal={SciPost Phys. Lect. Notes},
	pages={20},
	year={2020},
	publisher={SciPost},
	doi={10.21468/SciPostPhysLectNotes.20},
	url={https://scipost.org/10.21468/SciPostPhysLectNotes.20},
}

@book{Wen_book,
  author    = {Wen, X. G.},
  title     = {Quantum Field Theory of Many-Body Systems},
  isbn      = {9780198530947},
  lccn      = {2004301677},
  series    = {Oxford Graduate Texts},
    doi = {10.1093/acprof:oso/9780199227259.001.0001},
    url = {https://doi.org/10.1093/acprof:oso/9780199227259.001.0001},
  year      = {2004},
  publisher = {OUP Oxford}
}

@article{Pinney1950,
  author       = {Pinney, Edmund},
  title        = {The nonlinear differential equation $y'' + p(x)\,y + c\,y^{-3} = 0$},
  journal      = {Proc. Amer. Math. Soc.},
  volume       = {1},
  number       = {5},
  pages        = {581},
  year         = {1950},
  doi          = {10.1090/S0002-9939-1950-0037979-4},
}

@article{Calabrese2004,
doi = {10.1088/1742-5468/2004/06/P06002},
url = {https://doi.org/10.1088/1742-5468/2004/06/P06002},
year = {2004},
month = {jun},
publisher = {},
volume = {2004},
number = {06},
pages = {P06002},
author = {Pasquale Calabrese and John Cardy},
title = {Entanglement entropy and quantum field theory},
journal = {J. Stat. Mech},
}

@article{Plenio2005,
  title = {Logarithmic Negativity: A Full Entanglement Monotone That is not Convex},
  author = {Plenio, M. B.},
  journal = {Phys. Rev. Lett.},
  volume = {95},
  issue = {9},
  pages = {090503},
  numpages = {4},
  year = {2005},
  month = {Aug},
  publisher = {American Physical Society},
  doi = {10.1103/PhysRevLett.95.090503},
  url = {https://link.aps.org/doi/10.1103/PhysRevLett.95.090503}
}

@article{Zyczkowski1998,
  title = {Volume of the set of separable states},
  author = {\ifmmode \dot{Z}\else \.{Z}\fi{}yczkowski, Karol and Horodecki, Pawe\l{} and Sanpera, Anna and Lewenstein, Maciej},
  journal = {Phys. Rev. A},
  volume = {58},
  issue = {2},
  pages = {883--892},
  numpages = {0},
  year = {1998},
  month = {Aug},
  publisher = {American Physical Society},
  doi = {10.1103/PhysRevA.58.883},
  url = {https://link.aps.org/doi/10.1103/PhysRevA.58.883}
}

@article{Peres1996,
  title = {Separability Criterion for Density Matrices},
  author = {Peres, Asher},
  journal = {Phys. Rev. Lett.},
  volume = {77},
  issue = {8},
  pages = {1413--1415},
  numpages = {0},
  year = {1996},
  month = {Aug},
  publisher = {American Physical Society},
  doi = {10.1103/PhysRevLett.77.1413},
  url = {https://link.aps.org/doi/10.1103/PhysRevLett.77.1413}
}

@article{Dupays2021,
  title = {Delta-kick cooling, time-optimal control of scale-invariant dynamics, and shortcuts to adiabaticity assisted by kicks},
  author = {Dupays, L\'eonce and Spierings, David C. and Steinberg, Aephraim M. and del Campo, Adolfo},
  journal = {Phys. Rev. Res.},
  volume = {3},
  issue = {3},
  pages = {033261},
  numpages = {14},
  year = {2021},
  month = {Sep},
  publisher = {American Physical Society},
  doi = {10.1103/PhysRevResearch.3.033261},
  url = {https://link.aps.org/doi/10.1103/PhysRevResearch.3.033261}
}

@article{Gritsev2007,
  title = {Linear response theory for a pair of coupled one-dimensional condensates of interacting atoms},
  author = {Gritsev, Vladimir and Polkovnikov, Anatoli and Demler, Eugene},
  journal = {Phys. Rev. B},
  volume = {75},
  issue = {17},
  pages = {174511},
  numpages = {11},
  year = {2007},
  month = {May},
  publisher = {American Physical Society},
  doi = {10.1103/PhysRevB.75.174511},
  url = {https://link.aps.org/doi/10.1103/PhysRevB.75.174511}
}

@article{Haldane1981,
doi = {10.1088/0022-3719/14/19/010},
url = {https://doi.org/10.1088/0022-3719/14/19/010},
year = {1981},
month = {jul},
publisher = {},
volume = {14},
number = {19},
pages = {2585},
author = {F D M Haldane},
title = {{'Luttinger liquid theory' of one-dimensional quantum fluids. I. Properties of the Luttinger model and their extension to the general 1D interacting spinless Fermi gas}},
journal = {J. Phys. C: Solid State Phys.}
}

@article{
Lukin2019,
author = {Alexander Lukin  and Matthew Rispoli  and Robert Schittko  and M. Eric Tai  and Adam M. Kaufman  and Soonwon Choi  and Vedika Khemani  and Julian Léonard  and Markus Greiner },
title = {Probing entanglement in a many-body–localized system},
journal = {Science},
volume = {364},
number = {6437},
pages = {256-260},
year = {2019},
doi = {10.1126/science.aau0818},
URL = {https://www.science.org/doi/abs/10.1126/science.aau0818}}

@article{balasubramanian2012momentum,
  title = {Momentum-space entanglement and renormalization in quantum field theory},
  author = {Balasubramanian, Vijay and McDermott, Michael B. and Van Raamsdonk, Mark},
  journal = {Phys. Rev. D},
  volume = {86},
  issue = {4},
  pages = {045014},
  numpages = {18},
  year = {2012},
  month = {Aug},
  publisher = {American Physical Society},
  doi = {10.1103/PhysRevD.86.045014},
  url = {https://link.aps.org/doi/10.1103/PhysRevD.86.045014}
}

@article{martins2022momentum,
  title = {{Momentum space entanglement from the Wilsonian effective action}},
  author = {Martins Costa, Matheus H. and van den Brink, Jeroen and Nogueira, Flavio S. and Krein, Gast\~ao I.},
  journal = {Phys. Rev. D},
  volume = {106},
  issue = {6},
  pages = {065024},
  numpages = {18},
  year = {2022},
  month = {Sep},
  publisher = {American Physical Society},
  doi = {10.1103/PhysRevD.106.065024},
  url = {https://link.aps.org/doi/10.1103/PhysRevD.106.065024}
}

@article{mollabashi2014entanglement,
	author = {Mollabashi, Ali and Shiba, Noburo and Takayanagi, Tadashi},
	date = {2014/04/30},
	doi = {10.1007/JHEP04(2014)185},
	id = {Mollabashi2014},
	isbn = {1029-8479},
	journal = {J. High Energ. Phys.},
	number = {4},
	pages = {185},
	title = {{Entanglement between two interacting CFTs and generalized holographic entanglement entropy}},
	url = {https://doi.org/10.1007/JHEP04(2014)185},
	volume = {2014},
	year = {2014}
}

@article{mozaffar2016on,
	author = {Mozaffar, M. Reza Mohammadi and Mollabashi, Ali},
	date = {2016/03/03},
	id = {Mozaffar2016},
	isbn = {1029-8479},
	journal = {J. High Energ. Phys.},
	number = {3},
	pages = {15},
	title = {On the entanglement between interacting scalar field theories},
	url = {https://doi.org/10.1007/JHEP03(2016)015},
	volume = {2016},
	year = {2016}
}

@article{taylor2016generalized,
	author = {Taylor, Marika},
	date = {2016/07/08},
	doi = {10.1007/JHEP07(2016)040},
	id = {Taylor2016},
	isbn = {1029-8479},
	journal = { J. High Energy Phys.},
	number = {7},
	pages = {40},
	title = {Generalized entanglement entropy},
	url = {https://doi.org/10.1007/JHEP07(2016)040},
	volume = {2016},
	year = {2016}
}

@article{hufel2017field,
title = {{Field space entanglement entropy, zero modes and Lifshitz models}},
journal = {Phys. Lett. B},
volume = {775},
pages = {229-232},
year = {2017},
issn = {0370-2693},
doi = {https://doi.org/10.1016/j.physletb.2017.10.051},
url = {https://www.sciencedirect.com/science/article/pii/S0370269317308638},
author = {Helmuth Huffel and Gerald Kelnhofer},
}

@article{nakai2017entanglement,
  title = {Entanglement entropy and decoupling in the Universe},
  author = {Nakai, Yuichiro and Shiba, Noburo and Yamada, Masaki},
  journal = {Phys. Rev. D},
  volume = {96},
  issue = {12},
  pages = {123518},
  numpages = {20},
  year = {2017},
  month = {Dec},
  publisher = {American Physical Society},
  doi = {10.1103/PhysRevD.96.123518}
}

@article{murtadho2026extensive,
doi = {10.1088/2058-9565/ae8b1d},
url = {https://doi.org/10.1088/2058-9565/ae8b1d},
year = {2026},
month = {aug},
publisher = {IOP Publishing},
volume = {11},
number = {3},
pages = {035060},
author = {Murtadho, Taufiq and Gluza, Marek and Ng, Nelly H Y},
title = {{Extensive entanglement between coupled Tomonaga–Luttinger liquids in and out of equilibrium}},
journal = {Quantum Sci. and Technol.},
}

@article{tajik2023verification,
	author = {Tajik, Mohammadamin and Kukuljan, Ivan and Sotiriadis, Spyros and Rauer, Bernhard and Schweigler, Thomas and Cataldini, Federica and Sabino, Jo{\~a}o and M{\o}ller, Frederik and Sch{\"u}ttelkopf, Philipp and Ji, Si-Cong and Sels, Dries and Demler, Eugene and Schmiedmayer, J{\"o}rg},
	date = {2023/07/01},
	doi = {10.1038/s41567-023-02027-1},
	id = {Tajik2023},
	isbn = {1745-2481},
	journal = {Nat. Phys.},
	number = {7},
	pages = {1022--1026},
	title = {Verification of the area law of mutual information in a quantum field simulator},
	url = {https://doi.org/10.1038/s41567-023-02027-1},
	volume = {19},
	year = {2023}
    }

@article{schweigler2017experimental,
  title={Experimental characterization of a quantum many-body system via higher-order correlations},
  author={Schweigler, Thomas and Kasper, Valentin and Erne, Sebastian and Mazets, Igor and Rauer, Bernhard and Cataldini, Federica and Langen, Tim and Gasenzer, Thomas and Berges, J{\"u}rgen and Schmiedmayer, J{\"o}rg},
  journal={Nature},
  volume={545},
  number={7654},
  pages={323--326},
  year={2017},
  publisher={Nature Publishing Group UK London},
doi={10.1038/nature22310}
}

@article{aimet2025experimentally,
	author = {Aimet, Stefan and Tajik, Mohammadamin and Tournaire, Gabrielle and Sch{\"u}ttelkopf, Philipp and Sabino, Jo{\~a}o and Sotiriadis, Spyros and Guarnieri, Giacomo and Schmiedmayer, J{\"o}rg and Eisert, Jens},
	date = {2025/08/01},
	doi = {10.1038/s41567-025-02930-9},
	id = {Aimet2025},
	isbn = {1745-2481},
	journal = {Nat. Phys},
	number = {8},
	pages = {1326--1331},
	title = {{Experimentally probing Landauer's principle in the quantum many-body regime}},
	url = {https://doi.org/10.1038/s41567-025-02930-9},
	volume = {21},
	year = {2025}
}

@article{haldane1981effective,
  title = {{Effective Harmonic-Fluid Approach to Low-Energy Properties of One-Dimensional Quantum Fluids}},
  author = {Haldane, F. D. M.},
  journal = {Phys. Rev. Lett.},
  volume = {47},
  issue = {25},
  pages = {1840--1843},
  numpages = {0},
  year = {1981},
  month = {Dec},
  publisher = {American Physical Society},
  doi = {10.1103/PhysRevLett.47.1840},
  url = {https://link.aps.org/doi/10.1103/PhysRevLett.47.1840}
}

@article{ruggiero2021large,
  title = {{Large-scale thermalization, prethermalization, and impact of temperature in the quench dynamics of two unequal Luttinger liquids}},
  author = {Ruggiero, Paola and Foini, Laura and Giamarchi, Thierry},
  journal = {Phys. Rev. Res.},
  volume = {3},
  issue = {1},
  pages = {013048},
  numpages = {16},
  year = {2021},
  month = {Jan},
  publisher = {American Physical Society},
  doi = {10.1103/PhysRevResearch.3.013048},
  url = {https://link.aps.org/doi/10.1103/PhysRevResearch.3.013048}
}

@article{dupays2024exact,
  title = {Exact dynamics of the Tomonaga-Luttinger liquid and shortcuts to adiabaticity},
  author = {Dupays, L\'eonce and D\'ora, Bal\'azs and del Campo, Adolfo},
  journal = {Phys. Rev. B},
  volume = {110},
  issue = {15},
  pages = {155104},
  numpages = {23},
  year = {2024},
  month = {Oct},
  publisher = {American Physical Society},
  doi = {10.1103/PhysRevB.110.155104},
  url = {https://link.aps.org/doi/10.1103/PhysRevB.110.155104}
}

@misc{serafini2023quantum,
  author    = {Serafini, Alessio},
   title     = {Quantum Continuous Variables: A Primer of Theoretical Methods (1st ed.)},
  edition   = {1},
  year      = {2017},
  publisher = {CRC Press},
    doi       = {10.1201/9781315118727},
  howpublished = {CRC Press},
}

@article{duan2000inseparability,
  title = {Inseparability Criterion for Continuous Variable Systems},
  author = {Duan, Lu-Ming and Giedke, G. and Cirac, J. I. and Zoller, P.},
  journal = {Phys. Rev. Lett.},
  volume = {84},
  issue = {12},
  pages = {2722--2725},
  numpages = {0},
  year = {2000},
  month = {Mar},
  publisher = {American Physical Society},
  doi = {10.1103/PhysRevLett.84.2722},
  url = {https://link.aps.org/doi/10.1103/PhysRevLett.84.2722}
}

@article{simon2000peres,
  title = {{Peres-Horodecki Separability Criterion for Continuous Variable Systems}},
  author = {Simon, R.},
  journal = {Phys. Rev. Lett.},
  volume = {84},
  issue = {12},
  pages = {2726--2729},
  numpages = {0},
  year = {2000},
  month = {Mar},
  publisher = {American Physical Society},
  doi = {10.1103/PhysRevLett.84.2726},
  url = {https://link.aps.org/doi/10.1103/PhysRevLett.84.2726}
}

@article{vidal2002computable,
  title = {Computable measure of entanglement},
  author = {Vidal, G. and Werner, R. F.},
  journal = {Phys. Rev. A},
  volume = {65},
  issue = {3},
  pages = {032314},
  numpages = {11},
  year = {2002},
  month = {Feb},
  publisher = {American Physical Society},
  doi = {10.1103/PhysRevA.65.032314},
  url = {https://link.aps.org/doi/10.1103/PhysRevA.65.032314}
}

@article{horodecki1998mixed,
  title = {Mixed-State Entanglement and Distillation: Is there a ``Bound'' Entanglement in Nature?},
  author = {Horodecki, Micha\l{} and Horodecki, Pawe\l{} and Horodecki, Ryszard},
  journal = {Phys. Rev. Lett.},
  volume = {80},
  issue = {24},
  pages = {5239--5242},
  numpages = {0},
  year = {1998},
  month = {Jun},
  publisher = {American Physical Society},
  doi = {10.1103/PhysRevLett.80.5239},
  url = {https://link.aps.org/doi/10.1103/PhysRevLett.80.5239}
}

@article{choudhury2022four,
  title={{Four-mode Squeezed States in de Sitter Space: A Study With Two Field Interacting Quantum System}},
  author={Choudhury, Sayantan and Panda, Sudhakar and Pandey, Nilesh and Roy, Abhishek},
  journal={Fortschr. Phys},
  volume={70},
  number={12},
  pages={2200124},
  year={2022},
  publisher={Wiley Online Library},
  doi = {https://doi.org/10.1002/prop.202200124},
}

@article{colas2022four,
  title={{Four-mode squeezed states: two-field quantum systems and the symplectic group Sp (4, R)}},
  author={Colas, Thomas and Grain, Julien and Vennin, Vincent},
  journal={Eur. Phys. J. C},
  volume={82},
  number={1},
  pages={6},
  year={2022},
  publisher={Springer},
    doi={10.1140/epjc/s10052-021-09922-y}
}

@article{viermann2022quantum,
	author = {Viermann, Celia and Sparn, Marius and Liebster, Nikolas and Hans, Maurus and Kath, Elinor and Parra-L{\'o}pez, {\'A}lvaro and Tolosa-Sime{\'o}n, Mireia and S{\'a}nchez-Kuntz, Natalia and Haas, Tobias and Strobel, Helmut and Floerchinger, Stefan and Oberthaler, Markus K.},
	date = {2022/11/01},
	doi = {10.1038/s41586-022-05313-9},
	id = {Viermann2022},
	isbn = {1476-4687},
	journal = {Nature},
	number = {7935},
	pages = {260--264},
	title = {Quantum field simulator for dynamics in curved spacetime},
	url = {https://doi.org/10.1038/s41586-022-05313-9},
	volume = {611},
	year = {2022}
}

@article{
tajik2023experimental,
author = {Mohammadamin Tajik  and Marek Gluza  and Nicolas Sebe  and Philipp Schüttelkopf  and Federica Cataldini  and João Sabino  and Frederik Møller  and Si-Cong Ji  and Sebastian Erne  and Giacomo Guarnieri  and Spyros Sotiriadis  and Jens Eisert  and Jörg Schmiedmayer },
title = {Experimental observation of curved light-cones in a quantum field simulator},
journal = {PNAS},
volume = {120},
number = {21},
pages = {e2301287120},
year = {2023},
doi = {10.1073/pnas.2301287120},
URL = {https://www.pnas.org/doi/abs/10.1073/pnas.2301287120},
}

@article{schweigler2021decay,
	author = {Schweigler, Thomas and Gluza, Marek and Tajik, Mohammadamin and Sotiriadis, Spyros and Cataldini, Federica and Ji, Si-Cong and M{\o}ller, Frederik S. and Sabino, Jo{\~a}o and Rauer, Bernhard and Eisert, Jens and Schmiedmayer, J{\"o}rg},
	date = {2021/05/01},
	doi = {10.1038/s41567-020-01139-2},
	id = {Schweigler2021},
	isbn = {1745-2481},
	journal = {Nat. Phys.},
	number = {5},
	pages = {559--563},
	title = {{Decay and recurrence of non-Gaussian correlations in a quantum many-body system}},
	url = {https://doi.org/10.1038/s41567-020-01139-2},
	volume = {17},
	year = {2021}}

@article{gring2012relaxation,
  title={Relaxation and prethermalization in an isolated quantum system},
  author={Gring, Michael and Kuhnert, Maximilian and Langen, Tim and Kitagawa, Takuya and Rauer, Bernhard and Schreitl, Matthias and Mazets, Igor and Smith, D Adu and Demler, Eugene and Schmiedmayer, J{\"o}rg},
  journal={Science},
  volume={337},
  number={6100},
  pages={1318--1322},
  year={2012},
  publisher={American Association for the Advancement of Science},
doi={10.1126/science.1224953}
}

@article{langen2015experimental,
  title={{Experimental observation of a generalized Gibbs ensemble}},
  author={Langen, Tim and Erne, Sebastian and Geiger, Remi and Rauer, Bernhard and Schweigler, Thomas and Kuhnert, Maximilian and Rohringer, Wolfgang and Mazets, Igor E and Gasenzer, Thomas and Schmiedmayer, J{\"o}rg},
  journal={Science},
  volume={348},
  number={6231},
  pages={207--211},
  year={2015},
  publisher={American Association for the Advancement of Science},
doi={10.1126/science.1257026}
}

@article{langen2013local,
  title={Local emergence of thermal correlations in an isolated quantum many-body system},
  author={Langen, Tim and Geiger, Remi and Kuhnert, Maximilian and Rauer, Bernhard and Schmiedmayer, Joerg},
  journal={Nat. Phys.},
  volume={9},
  number={10},
  pages={640--643},
  year={2013},
  publisher={Nature Publishing Group UK London},
doi = {https://doi.org/10.1038/nphys2739}
}

@misc{jarema2025information,
      title={Information in quantum field theory simulators: Thin-film superfluid helium}, 
      author={Maciej T. Jarema and Cameron R. D. Bunney and Vitor S. Barroso and Mohammadamin Tajik and Chris Goodwin and Silke Weinfurtner},
      year={2025},
      eprint={2508.07247},
      archivePrefix={arXiv},
      primaryClass={quant-ph},
}

@article{yang2020simulating,
  title = {Simulating quantum field theory in curved spacetime with quantum many-body systems},
  author = {Yang, Run-Qiu and Liu, Hui and Zhu, Shining and Luo, Le and Cai, Rong-Gen},
  journal = {Phys. Rev. Res.},
  volume = {2},
  issue = {2},
  pages = {023107},
  numpages = {9},
  year = {2020},
  month = {Apr},
  publisher = {American Physical Society},
  doi = {10.1103/PhysRevResearch.2.023107},
  url = {https://link.aps.org/doi/10.1103/PhysRevResearch.2.023107}
}

@article{murtadho2025measurement,
  title = {Measurement of total phase fluctuation in cold-atomic quantum simulators},
  author = {Murtadho, Taufiq and Cataldini, Federica and Erne, Sebastian and Gluza, Marek and Tajik, Mohammadamin and Schmiedmayer, J\"org and Ng, Nelly H. Y.},
  journal = {Phys. Rev. Res.},
  volume = {7},
  issue = {2},
  pages = {L022031},
  numpages = {8},
  year = {2025},
  month = {May},
  publisher = {American Physical Society},
  doi = {10.1103/PhysRevResearch.7.L022031},
  url = {https://link.aps.org/doi/10.1103/PhysRevResearch.7.L022031}
}

@article{calabrese2012entanglementNeg,
  title = {Entanglement Negativity in Quantum Field Theory},
  author = {Calabrese, Pasquale and Cardy, John and Tonni, Erik},
  journal = {Phys. Rev. Lett.},
  volume = {109},
  issue = {13},
  pages = {130502},
  numpages = {5},
  year = {2012},
  month = {Sep},
  publisher = {American Physical Society},
  doi = {10.1103/PhysRevLett.109.130502},
  url = {https://link.aps.org/doi/10.1103/PhysRevLett.109.130502}
}

@article{zhang2026quantum,
	author = {Zhang, Wenyu and Qian, Wenyang and Zhou, Yiyu and Li, Yang and Wang, Qun},
	date = {2026/07/08},
	doi = {10.1007/JHEP07(2026)061},
	id = {Zhang2026},
	isbn = {1029-8479},
	journal = {Journal of High Energy Physics},
	number = {7},
	pages = {61},
	title = {Quantum entanglement between partons in a strongly coupled quantum field theory},
	url = {https://doi.org/10.1007/JHEP07(2026)061},
	volume = {2026},
	year = {2026}
}

@article{yoshino2021intercomponent,
  title = {{Intercomponent entanglement entropy and spectrum in binary Bose-Einstein condensates}},
  author = {Yoshino, Takumi and Furukawa, Shunsuke and Ueda, Masahito},
  journal = {Phys. Rev. A},
  volume = {103},
  issue = {4},
  pages = {043321},
  numpages = {14},
  year = {2021},
  month = {Apr},
  publisher = {American Physical Society},
  doi = {10.1103/PhysRevA.103.043321},
  url = {https://link.aps.org/doi/10.1103/PhysRevA.103.043321}
}

@article{roosz2022densitymatrix,
  title = {{Density matrix of electrons coupled to Einstein phonons and the electron-phonon entanglement content of excited states}},
  author = {Ro\'osz, Gerg\ifmmode \mbox{\H{o}}\else \H{o}\fi{} and Held, Karsten},
  journal = {Phys. Rev. B},
  volume = {106},
  issue = {19},
  pages = {195404},
  numpages = {10},
  year = {2022},
  month = {Nov},
  publisher = {American Physical Society},
  doi = {10.1103/PhysRevB.106.195404},
  url = {https://link.aps.org/doi/10.1103/PhysRevB.106.195404}
}

@article{roosz2021entanglement,
  title = {{Entanglement of electrons and lattice in a Luttinger system}},
  author = {Ro\'osz, Gerg\ifmmode \mbox{\H{o}}\else \H{o}\fi{} and Timm, Carsten},
  journal = {Phys. Rev. B},
  volume = {104},
  issue = {3},
  pages = {035405},
  numpages = {12},
  year = {2021},
  month = {Jul},
  publisher = {American Physical Society},
  doi = {10.1103/PhysRevB.104.035405},
  url = {https://link.aps.org/doi/10.1103/PhysRevB.104.035405}
}

@article{ruggiero2021quenches,
  title={{Quenches in initially coupled Tomonaga-Luttinger Liquids: a conformal field theory approach}},
  author={Ruggiero, Paola and Calabrese, Pasquale and Foini, Laura and Giamarchi, Thierry},
  journal={SciPost Phys.},
  volume={11},
  number={3},
  pages={055},
  year={2021},
  doi = {10.21468/SciPostPhys.11.3.055}
}

@article{foini2015non,
  title = {{Nonequilibrium dynamics of coupled Luttinger liquids}},
  author = {Foini, L. and Giamarchi, T.},
  journal = {Phys. Rev. A},
  volume = {91},
  issue = {2},
  pages = {023627},
  numpages = {8},
  year = {2015},
  month = {Feb},
  publisher = {American Physical Society},
  doi = {10.1103/PhysRevA.91.023627},
  url = {https://link.aps.org/doi/10.1103/PhysRevA.91.023627}
}

@article{agarwal2023caustics,
  title = {{Caustics in the sine-Gordon model from quenches in coupled one-dimensional Bose gases}},
  author = {Agarwal, Aman and Kulkarni, Manas and O'Dell, D. H. J.},
  journal = {Phys. Rev. A},
  volume = {108},
  issue = {1},
  pages = {013312},
  numpages = {27},
  year = {2023},
  month = {Jul},
  publisher = {American Physical Society},
  doi = {10.1103/PhysRevA.108.013312},
  url = {https://link.aps.org/doi/10.1103/PhysRevA.108.013312}
}

@article{bayoboc2022dynamics,
  title = {Dynamics of thermalization of two tunnel-coupled one-dimensional quasicondensates},
  author = {Bayocboc, F. A. and Davis, M. J. and Kheruntsyan, K. V.},
  journal = {Phys. Rev. A},
  volume = {106},
  issue = {2},
  pages = {023320},
  numpages = {14},
  year = {2022},
  month = {Aug},
  publisher = {American Physical Society},
  doi = {10.1103/PhysRevA.106.023320},
  url = {https://link.aps.org/doi/10.1103/PhysRevA.106.023320}
}

@article{agullo2024toward,
  title = {{Toward the observation of entangled pairs in BEC analog expanding universes}},
  author = {Agullo, Ivan and Delhom, Adri\`a and Parra-L\'opez, \'Alvaro},
  journal = {Phys. Rev. D},
  volume = {110},
  issue = {12},
  pages = {125023},
  numpages = {21},
  year = {2024},
  month = {Dec},
  publisher = {American Physical Society},
  doi = {10.1103/PhysRevD.110.125023},
  url = {https://link.aps.org/doi/10.1103/PhysRevD.110.125023}
}
\onecolumngrid
\appendix
\newpage
\part{Appendix}
\parttoc 
\newpage
\section{Summary of common notations \label{summary_notations}}
\begin{table}[h]
\centering
\renewcommand{\arraystretch}{1.3} 
\begin{tabular}{c   l} 
\hline
& \multicolumn{1}{c}{\textbf{Description of the notation}}\\ 
\hline
\textbf{Fields and Operators} & \\
\hline 
  $\hat{\phi}^{\alpha}(z)$ & Phase fields in real space ($\alpha\in \{A,B,+,-\}$)  \\
  $\delta \hat{n}^{\alpha}(z)$ & Density fields  in real space ($\alpha \in \{A,B,+,-\}$)\\
  $\hat{\phi}_{q}^{\alpha}, \delta \hat{n}_{q}^{\alpha}$ & Dimensionless field quadratures in momentum space ($\alpha \in \{A,B,+,-\}$) \\
  $b_q^\dagger, b_q$ & Bosonic creation and annihilation operators for momentum mode $q$\\
\hline
\textbf{System Parameters} & \\
\hline
  $v$ & Speed of sound \\
  $K$ & Luttinger parameter \\
  $n_0$ & Mean density \\
  $L$ & System length \\
  $\xi$ & Healing length \\ 
  $q$ & Positive finite momentum \\
  $\Lambda$ & Momentum cutoff \\
  $\beta$ & Inverse temperature ($1/k_B T$) \\
  $\displaystyle C_{q}(\beta)=\frac{1}{2}\coth\left(\frac{\beta \hbar\Omega_{q0}}{2}\right)$  & Temperature factor \\
\hline
\textbf{Dynamics} & \\
\hline
  $J(t)$ & Time-dependent tunneling strength \\
    $\Omega_{q0}=vq$ & Initial dispersion relation \\
  $\Omega_q(t)$ & Time-dependent frequency spectrum (Klein-Gordon) \\
  $\omega=\pi n_0 v / K$ & Intrinsic Luttinger frequency scale \\
  $\gamma_q(t)$ & Ermakov factor (solution to Ermakov Eq. \eqref{Ermakov}) \\
\hline
\textbf{Entanglement} & \\
\hline
  $\mathbf{\Gamma}(t)$ & Covariance matrix \\
  $\lambda_q(t,\beta)$ & Symplectic eigenvalues of the (reduced) covariance matrix \\
    $S_A$ & Entanglement Entropy \\
    $S_{\alpha}^A$ & $\alpha$-R{\'e}nyi entropy\\
    
  $I(A:B, t)$ & Mutual information (MI) at time $t$ \\
    $\nu_q(t,\beta)$ & Symplectic eigenvalues of the partially transposed covariance matrix \\
  $E_{\mathcal{N}}$ & Logarithmic negativity (LN) \\
\hline
\textbf{Early-time Behaviour} & \\
\hline
$n=\min \;\{ k \geq 0 \,:\, J^{(k)}(0) \neq 0 \}$ & The smallest order of non-vanishing derivative of $J$ \\
$\varepsilon_n(t)$ & Time-dependent momentum scale\\
$\tau_n = \varepsilon_n(t) L/\pi$ & Scaling parameter of early-time mutual information \\
\hline 
\textbf{Long-time Behaviour} & \\
\hline
$J_f$ & Final interaction strength after protocol saturation\\
$t_f$ & The time at which the protocol is saturated\\
$\mathcal{T}_{*}(q)$ & Mode-dependent threshold temperature for entanglement \\
$A_q$ & Oscillation amplitude for each Ermakov factor after a sudden quench \\
$D_q$ & Constant factor for each Ermakov factor after a finite-ramp protocol \\
$E_q$ & Oscillation amplitude for each Ermakov factor after a finite-ramp protocol \\
\hline
\end{tabular} 
\end{table}
\newpage

\section{Field-space entanglement dynamics for arbitrary Gaussian-preserving protocol and finite temperature \label{Appendix_dynamics}}
This appendix supplies the derivations underlying the exact Gaussian solution presented in Sec.~\ref{sec: PhysicalSetting}.
First, in Appendix~\ref{Appendix_dynamics_of_mode_quadratures}, we solve the Heisenberg equations of motion for the field quadratures and derive the symplectic propagator of Eq.~\eqref{eq:quadrature_dynamics} together with the phase $\theta_q(t)$ of Eq.~\eqref{eq:theta_sol_main_text}.
Then, in Appendix~\ref{dynamics_covariance}, we propagate the thermal initial covariance matrix of Eq.~\eqref{eq:Gamma_pm_teq0} into the $A/B$ basis of Eq.~\eqref{eq:sector_transformation}.
This allows us to use the resulting covariance matrix to prove the mutual information formula, (cf. Eq.~\eqref{eq:mutual_info_dynamics}), and the logarithmic negativity formula (cf. Eqs.~\eqref{LN_sum_q}--\eqref{eq:symp_eigvals_partial_transpose}).
Lastly, we remark that the R{\'e}nyi entropy of arbitrary order can also be obtained by substituting the symplectic
eigenvalues in Eq.~\eqref{eq:subsys_symp_appdx} into Eq.\eqref{eq: Renyi entropies as function of symplectic eigenvalues}.

\subsection{Dynamics of the mode quadratures}\label{Appendix_dynamics_of_mode_quadratures}
In this section, we derive the solution to the Heisenberg equations of motion for the fields' quadratures in terms of the solution to the Ermakov equation, starting from the time-dependent Hamiltonian \eqref{eq:Hamiltonian_time_dep_mt} in the main text.
As the dynamics in the symmetric $(+)$ sector is a special case of the antisymmetric $(-)$ one with a constant frequency, we can first focus on the dynamics in the antisymmetric sector. The Heisenberg equation of motion  for operators $\hat{O}$ in the antisymmetric sector is given by $\partial_{t}\hat{O}(t)=\frac{i}{\hbar}[\hat{H}^-(t),\hat{O}(t)]$. Applying this to the dimensionless field quadratures $\{\delta\hat{n}_q^-, \hat{\phi}_q^-\}$ and using the commutation relation $[\delta\hat{n}_q^-, \hat{\phi}_k^-] = i\delta_{qk}$ yields a coupled equation of motion 
\begin{equation}
    \partial_t\begin{pmatrix}
        \delta\hat{n}_q^-(t)\\
        \hat{\phi}_q^-(t)
    \end{pmatrix} = \begin{pmatrix}
        0 & \Omega_q^2(t)/\omega\\
        -\omega & 0
    \end{pmatrix} \begin{pmatrix}
        \delta\hat{n}_q^-(t)\\
        \hat{\phi}_q^-(t)
    \end{pmatrix}.
    \label{eq:coupled_EOM}
\end{equation}
Since the Hamiltonian is quadratic and independent for each mode $q$, the time-evolution for the quadratures of mode $q$ is described by a symplectic propagator $\mathbf{G}_q^-(t)\in  \text{Sp}(2, \mathbb{R})$
\begin{equation}
    \begin{pmatrix}
\delta\hat{n}_q^-(t)\\
\hat{\phi}_q^-(t)
\end{pmatrix} = \mathbf{G}_q^-(t) \begin{pmatrix}
\delta\hat{n}_q^-(0) \\
\hat{\phi}_q^-(0)
\end{pmatrix} \qquad \text{with} \qquad  \det{\mathbf{G}}_q^-(t) = 1,
\label{eq:symplectic_propagator}
\end{equation}
and $\mathbf{G}_q^-(0) = I_2$ is a $2\times 2$ identity matrix. Note that a $2 \times 2$ matrix with real entries is symplectic if and only if it has a unit determinant. Solving for the propagator $\mathbf{G}_q^-(t)$ is equivalent to solving for the quadrature dynamics. Substituting Eq.~\eqref{eq:symplectic_propagator} to Eq.~\eqref{eq:coupled_EOM} yields a matrix differential equation for $\mathbf{G}_q^-(t)$
\begin{equation}
    \partial_t \mathbf{G}_q^-(t)   = \begin{pmatrix}
        0 & \Omega_q^2(t)/\omega\\
        -\omega & 0
    \end{pmatrix} 
    \mathbf{G}_q^-(t).
    \label{eq:propagator_EOM}
\end{equation}
To solve Eq.~\eqref{eq:propagator_EOM} for any given  protocol encoded in $\Omega_q^2(t)$, consider an ansatz
\begin{equation}
     \mathbf{G}^{-}_q(t) =  \begin{pmatrix}
        \dot{v}(t) & -\dot{u}(t)/\omega\\
        -\omega v(t) & u(t)
    \end{pmatrix} \qquad \text{with} \qquad  W[u(t), v(t)] \equiv u(t)\dot{v}(t) - \dot{u}(t)v(t) = 1 
    \label{eq:ansatz}
\end{equation}
where the unit Wronskian $W[u(t), v(t)] = 1$ condition is derived from the symplectic requirement $\det{\mathbf{G}_q^-(t)} = 1$, i.e. the preservation of the commutation relation. This can be seen from
\begin{align}
[\hat{\phi}_{q}(t),\delta\hat{n}_{q}(t)]&=\left(u(t)\dot{v}(t)-v(t)\dot{u}(t)\right)[\hat{\phi}_{q}(0),\delta\hat{n}_{q}(0)]=[\hat{\phi}_{q}(0),\delta\hat{n}_{q}(0)].
\label{eq:commut_preserve}
\end{align}
The functions $u(t), v(t)$ satisfy initial conditions: $u(0) = 1$, $\dot{u}(0) = 0$, $v(0) = 0$, $\dot{v}(0) = 1$ imposed by the initial condition $\mathbf{G}_q^-(0) = I_2$. The ansatz \eqref{eq:ansatz} satisfies the original Eq.~\eqref{eq:propagator_EOM} if $u(t)$ and $v(t)$ are chosen to be the solutions to homogeneous time-dependent harmonic oscillator equations:
\begin{equation}
    \ddot{u}(t) + \Omega_q^2(t) u(t) = 0, \qquad \qquad \ddot{v}(t) + \Omega_q^2(t) v(t) = 0.
    \label{eq:second_order_ODE}
\end{equation}
Importantly, we know from the unit Wronskian that $u(t)$ and $v(t)$ are linearly independent solutions of \eqref{eq:second_order_ODE}. 

It is known that the two linearly independent solutions to \eqref{eq:second_order_ODE} can be can be fully parametrized by a single dimensionless function $\gamma_q(t)$ satisfying the Ermakov Eq.~\eqref{Ermakov}. For completeness, we show this explicitly below. We start by parametrizing the two solutions with a radial coordinate ansatz
\begin{equation}
    u(t) = \gamma_q(t) \cos \theta_q(t), \qquad \qquad v(t) = \frac{\gamma_q(t)}{D}\sin \theta_q(t),
    \label{eq:radial_parametrization}
\end{equation}
where $\theta_q(t)$ is a dimensionless angle and $D$ is a constant of unit frequency to be determined. This constant is included to take into account that in \eqref{eq:ansatz}, $v(t)$ has a dimension of time whereas $u(t)$ is dimensionless. The initial conditions for $u(t)$ and $v(t)$ impose initial conditions for $\gamma_q(t)$ and $\theta_q(t)$, i.e. $\gamma_q(0) = 1$,  $\dot{\gamma}_q(0) = 0$, $\theta_q(0) = 0$, and $\dot{\theta}_q(0) = D$. 

To find the solution to our dynamics, we need to specify the corresponding $\theta_q(t)$, $\gamma_q(t)$ and $D$. We first solve the evolution of $\theta_q(t)$ from the unit Wronskian condition, which gives
\begin{equation}
    (\gamma_q^2/D)\dot{\theta}_q(t) = 1 \qquad \Rightarrow \qquad  \theta_q(t) = \int _{0}^{t}\frac{D}{\gamma_q^2(s)}ds.
    \label{eq:theta_sol}
\end{equation}
Hence, $\theta_q(t)$ is fully determined by $\gamma_q(t)$ and $D$. Note that Eq. \eqref{eq:theta_sol} already satisfies $\theta_q(0) = 0$ and $\dot{\theta}_q(0) = D$.

We next find the equation of motion for $\gamma_q(t)$. It is straightforward to notice from \eqref{eq:radial_parametrization} that $\gamma_{q}(t)$ can be written in the Pinney form $\gamma_{q}(t)=\sqrt{u^{2}(t)+D^{2}v^{2}(t)}$. To demonstrate that $\gamma_{q}(t)$ satisfies the Ermakov Eq.~\eqref{Ermakov}, we start from $\gamma^{2}_{q}=u^{2}+D^{2} v^{2}$ and then differentiating once 
\begin{align}
\gamma_{q} \dot{\gamma}_{q}&= u\dot{u}+D^{2}v\dot{v},
\label{eq:gamma_q_first_derivative}
\end{align}
and twice $\dot{\gamma}_q^{2}+\gamma_q\ddot{\gamma}_q=(\dot{u})^{2}+u\ddot{u}+D^{2}(\dot{v})^{2}+D^{2}v\ddot{v}$. Using the fact that $u(t)$ and $v(t)$ are the solutions of \eqref{eq:second_order_ODE}, i.e. $\ddot{v}=-\Omega_q^{2}(t)v$ and $\ddot{u}=-\Omega_q^{2}(t)u$, we obtain
\begin{align}
\gamma_q \ddot{\gamma}_q+\Omega_q^{2}(t)\gamma_q^{2}&=\dot{u}^{2}+D^{2}\dot{v}^{2}-\dot{\gamma}^{2}_{q}.
\end{align}
Multiplying both sides by $\gamma_q^2$ and using the expression $\gamma_q^2\dot{\gamma}^{2}_{q} =(u\dot{u}+D^{2} v\dot{v})^{2}$ from Eq. \eqref{eq:gamma_q_first_derivative} yields
\begin{align}
\gamma^{2}_{q}[\gamma_{q}\ddot{\gamma}_{q}+\Omega^{2}_{q}(t)\gamma^{2}_{q}]&=D^{2}(u\dot{v}-\dot{u}v)^{2} = D^2(W[u(t),v(t)])^2.
\end{align}
Finally, by the unit Wronskian condition $W[u(t), v(t)] = 1$, we find that $\gamma_{q}(t)$ is a solution to the Ermakov equation
\begin{align}
\ddot{\gamma}_{q}+\Omega^{2}_{q}(t)\gamma_{q}&=\frac{D^{2}}{\gamma^{3}_{q}}.
\label{eq:Ermakov_appdx}
\end{align}
The remaining task now is to find a self-consistent frequency coefficient $D$. It is instructive to solve the Ermakov equation in the special case of constant frequency $\Omega_q(t) = \Omega_{q}(0) = \Omega_{q0}$. In this case, the solution to the Ermakov equation is $\gamma_q(t)  = \sqrt{D/\Omega_{q0}} =$constant, which must also be true at $t = 0$ where $\gamma_q(0) = 1$. Therefore, we know that $D = \Omega_{q0}$. One can also verify that $D = \Omega_{q0}$ leads to $\theta_q(t) = \Omega_{q0}t$, which gives the right homogeneous solution of \eqref{eq:second_order_ODE} at constant frequency $\ddot{u}+\Omega_{q0}^2u = 0$. 

Taking everything together, we recover Eq.~\eqref{eq:quadrature_dynamics} in the main text for the propagator in the antisymmetric sector
\begin{equation}
    \mathbf{G}_q^-(t) = \underbrace{\begin{pmatrix}
        \gamma_q^{-1}(t) & -\dot{\gamma}_q(t)/\omega\\
        0 & \gamma_q(t)\end{pmatrix}\!}_{\mathbf{U}_q(t)}\underbrace{\begin{pmatrix}
        \cos\theta_q(t) &(\Omega_{q0}/\omega)\sin\theta_q(t)\\
        -(\omega/\Omega_{q0})\sin\theta_q(t) & \cos\theta_q(t)
    \end{pmatrix}}_{\mathbf{R}_q(t)}.
    \label{eq:Gq_minus_sol}
\end{equation}
Eq.~\eqref{eq:Gq_minus_sol} above together with Eqs.~\eqref{eq:theta_sol} and~\eqref{eq:Ermakov_appdx} (after substituting $D = \Omega_{q0}$) fully specify the quadrature dynamics in the antisymmetric sector. Meanwhile, in the symmetric sector the symplectic propagator $\mathbf{G}_q^+(t)$ can be found from the antisymmetric solution by solving the Ermakov equation at fixed frequency $\Omega_{q}(t) = \Omega_{q0}$. The solution to the Ermakov equation here is time-independent $\gamma_q^+(t) = \gamma_q^+(0) =  1$. Consequently, the shearing and squeezing part of the evolution reduces to identity, giving us a quadrature evolution consisting of only symplectic rotation
\begin{equation}
     \begin{pmatrix}
\delta\hat{n}_q^+(t)\\
\hat{\phi}_q^+(t)
\end{pmatrix} = \mathbf{G}_q^+(t) \begin{pmatrix}
\delta\hat{n}_q^+(0) \\
\hat{\phi}_q^+(0)
\end{pmatrix} \qquad \text{with} \qquad \mathbf{G}_q^{+}(t) = \begin{pmatrix}
        \cos(\Omega_{q0}t) & (\Omega_{q0}/\omega)\sin(\Omega_{q0}t)\\
        -(\omega/\Omega_{q0})\sin(\Omega_{q0}t) & \cos(\Omega_{q0}t)
    \end{pmatrix}.
    \label{eq:Gq_+_sol}
\end{equation}
The propagators $\mathbf{G}_q^{\pm}(t)$ derived above can be applied to any initial Gaussian state, and therefore fully specify the fields' quadrature dynamics of our time-dependent Hamiltonian.
\subsection{Dynamics of the covariance matrix \label{dynamics_covariance}}
In this section, we establish the dynamics for the covariance matrix defined in Eq.~\eqref{element_covariance} in the main text. By the $\pm$ decomposition, the covariance matrix takes a block-diagonal form where each block $\mathbf{\Gamma}_q^{++}$ and $\mathbf{\Gamma}_q^{--}$ evolve independently as $\mathbf{\Gamma}_q^{\pm\pm}(t) = \mathbf{G}_q^{\pm}(t)\mathbf{\Gamma}_q^{\pm\pm}(0)\mathbf{G}^{\pm}_q(t)^{ T}$, where $\mathbf{G}_q^{\pm}(t)$ are the symplectic propagators derived in the previous section. We first need to compute the initial covariance matrix $\mathbf{\Gamma}_q^{\pm\pm}(0)$
\begin{align}
\mathbf{\Gamma}_q^{\pm\pm}(0) &=\begin{pmatrix}
\langle \delta \hat{n}^{\pm}_{q}(0)\delta \hat{n}^{\pm}_{q}(0)\rangle & \frac{1}{2}\left[\langle \delta \hat{n}^{\pm}_{q}(0)\hat{\phi}^{\pm}_{q}(0)\rangle +\langle \hat{\phi}^{\pm}_{q}(0)\delta \hat{n}^{\pm}_{q}(0)\rangle\right]\\
\frac{1}{2}\left[\langle \delta \hat{n}^{\pm}_{q}(0)\hat{\phi}^{\pm}_{q}(0)\rangle +\langle \hat{\phi}^{\pm}_{q}(0)\delta \hat{n}^{\pm}_{q}(0)\rangle\right]& \langle \hat{\phi}^{\pm}_{q}(0)\hat{\phi}^{\pm}_{q}(0)\rangle
\end{pmatrix}.\end{align}
In particular, we assume an initial thermal state with temperature $\beta = (k_BT)^{-1}$. We derive the matrix elements of the covariance matrix by diagonalisation of the Hamiltonian \eqref{eq:Hamiltonian_time_dep_mt} at $t = 0$. Consider the decomposition of dimensionless mode quadratures for $q>0$ in terms of bosonic creation and annihilation operators
\begin{equation}
    \hat{\phi}^{\sigma}_{q} =\sqrt{\frac{\omega}{2\Omega_{q0}}}[\hat{b}^{\sigma}_q+\text{h.c.}], \qquad \delta\hat{n}_q^\sigma=\sqrt{\frac{\Omega_{q0}}{2\omega}}[i\hat{b}^{\sigma}_q+\text{h.c.}],
    \label{eq:mode_deco_0}
\end{equation}
with $\sigma \in \{+, -\}$ and the bosonic operators satisfy the commutation relations $[\hat{b}^{\kappa}_{q},(\hat{b}^{\sigma}_{q^\prime})^{\dagger}]=\delta_{qq^\prime}\delta_{\kappa\sigma}$. Using the above decomposition, the initial Hamiltonian can be diagonalised as
\begin{equation}
\hat{H}_{\pm}(0) =\sum_{q>0}\hbar \Omega_{q0} \left[(\hat{b}^{\pm}_{q})^{\dagger}\hat{b}^{\pm}_{q}+\frac{1}{2}\right],
\end{equation}
with the associated thermal density operator $\rho(0)=\rho_{+}(0)\otimes \rho_{-}(0)=\frac{e^{-\beta\hat{H}_{+}(0)}}{\text{Tr}(e^{-\beta\hat{H}_+})}\otimes \frac{e^{-\beta\hat{H}_{-}(0)}}{\text{Tr}(e^{-\beta\hat{H}_-})}$.
The corresponding covariance matrix elements for this thermal density operator are:
\begin{equation}
\langle \delta \hat{n}^{\pm}_{q}(0)\delta \hat{n}^{\pm}_{q}(0)\rangle =\frac{\Omega_{q0}}{2\omega}{\rm Tr}\left[\rho(0)\left(2(\hat{b}^{\pm}_{q})^\dagger\hat{b}_{q}^\pm+1-((\hat{b}_q^\pm)^\dagger)^2-(\hat{b}^{\pm}_q)^{2}\right)\right]=\frac{\Omega_{q0}}{\omega}C_{q}(\beta),
\end{equation}
\begin{equation}
    \langle \hat{\phi}^{\pm}_{q}(0)\hat{\phi}^\pm_{q}(0)\rangle ={\rm Tr}\left[\rho(0)\left(2(\hat{b}^{\pm}_{q})^\dagger\hat{b}_{q}^\pm+1+((\hat{b}_q^\pm)^\dagger)^2+(\hat{b}^{\pm}_q)^{2}\right)\right]=\frac{\omega}{\Omega_{q0}}C_{q}(\beta),
\end{equation}
where we defined $C_q(\beta)$ in Eq.~\eqref{eq:C_q} and all other elements are zeros. Note that we have used the bosonic mean thermal occupation $\braket{(\hat{b}_q^\pm)^\dagger\hat{b}_q^\pm } = [\exp(\beta\hbar\Omega_{q0})-1]^{-1}$. In the end, we recover the initial covariance matrix in Eq. \eqref{eq:Gamma_pm_teq0} of the main text
\begin{equation}
    \mathbf{\Gamma}_q^{++}(0) = \mathbf{\Gamma}_q^{--}(0) = \begin{pmatrix}
        \Omega_{q0}/\omega& 0\\
        0&\omega/\Omega_{q0}
\end{pmatrix}C_{q}(\beta).
\end{equation}
Using the above thermal initial state and the propagators \eqref{eq:Gq_minus_sol} and \eqref{eq:Gq_+_sol} derived in the previous section, we observe that the symplectic rotation part of the evolution leaves the initial covariance matrix invariant. For instance, in the symmetric sector 
\begin{equation}
    \mathbf{\Gamma}_q^{++}(t) = \mathbf{G}_q^+(t)\mathbf{\Gamma}_q^{++}(0)\mathbf{G}_q^{+}(t)^{T} = \mathbf{\Gamma}_q^{++}(0),
\end{equation}
and similarly in the antisymmetric sector $\mathbf{\Gamma}_q^{--}(t) = \mathbf{U}_q(t)\mathbf{R}_q(t)\mathbf{\Gamma}_q^{--}(0)\mathbf{R}_q^T(t)\mathbf{U}_q^T(t) = \mathbf{U}_q(t)\mathbf{\Gamma}_q^{--}(0)\mathbf{U}_q^T(t)$.

We are interested in probing entanglement dynamics in the $A/B$ decomposition instead of the $\pm$ sectors. The two representations are connected by [see Eq. \eqref{eq:sector_transformation} in the main text]
\begin{equation}
    \mathbf{\Gamma}_q(t) = \frac{1}{2}\begin{pmatrix}
        1&1\\
        1& -1
    \end{pmatrix}\begin{pmatrix}
        \mathbf{\Gamma}_q^{++}(t) & 0\\
        0 & \mathbf{\Gamma}_q^{--}(t)
    \end{pmatrix}\begin{pmatrix}
        1&1\\
        1& -1
    \end{pmatrix} = \frac{1}{2} \begin{pmatrix}
        \mathbf{\Gamma}_q^{++}(t) +\mathbf{\Gamma}_q^{--}(t)& \mathbf{\Gamma}_q^{++}(t) - \mathbf{\Gamma}_q^{--}(t)\\
        \mathbf{\Gamma}_q^{++}(t) - \mathbf{\Gamma}_q^{--}(t) & \mathbf{\Gamma}_q^{++}(t) + \mathbf{\Gamma}_q^{--}(t)
    \end{pmatrix}.
\end{equation}
By matching the elements, we find
\begin{align}
    \mathbf{\Gamma}_q^{AA}(t) &= \mathbf{\Gamma}_q^{BB}(t) = \frac{1}{2}[\mathbf{\Gamma}_q^{++}(t)+\mathbf{\Gamma}_q^{--}(t)] = \frac{1}{2}[\mathbf{\Gamma}_q^{++}(0)+\mathbf{U}_q(t)\mathbf{\Gamma}_q^{--}(0)\mathbf{U}_q^T(t)],
    \label{eq:Gamma_AA_t}\\
    \mathbf{\Gamma}_q^{AB}(t) &= \mathbf{\Gamma}_q^{BA}(t) = \frac{1}{2}[\mathbf{\Gamma}_q^{++}(t)-\mathbf{\Gamma}_q^{--}(t)] = \frac{1}{2}[\mathbf{\Gamma}_q^{++}(0)-\mathbf{U}_q(t)\mathbf{\Gamma}_q^{--}(0)\mathbf{U}_q^T(t)].
\end{align}
And by explicit computation, one obtains
\begin{align}
{\bf \Gamma}^{AA}_{q}(t)=\frac{1}{2\Omega_{q0}}C_{q}(\beta)\begin{pmatrix}
\rho_{q} & -\eta_{q}\\
-\eta_{q} & \kappa_{q}
\end{pmatrix},\qquad
{\bf \Gamma}^{AB}_{q}(t)=\frac{1}{2\Omega_{q0}}C_{q}(\beta)\begin{pmatrix}\mu_{q} & \eta_{q}\\
\eta_{q} &-\chi_{q} 
\end{pmatrix},
\end{align}
where $C_q(\beta)$ is defined in Eq.~\eqref{eq:C_q}, and
\begin{align}
\eta_q &= \gamma_q\dot{\gamma}_q,\qquad
\kappa_q = (1+\gamma_q^2)\omega,\qquad
\chi_q = (\gamma_q^2-1)\omega, \label{Eq: correlation function part 1}\\
\rho_q &= \frac{1}{\gamma_q^2\omega}\left[\Omega_{q0}^2 +\gamma_q^2(\Omega_{q0}^2+\dot{\gamma}_q^2)\right],\qquad
\mu_q = \frac{1}{\gamma_q^2\omega}\left[-\Omega_{q0}^2+\gamma_q^2(\Omega_{q0}^2-\dot{\gamma}_q^2)\right] \label{Eq: correlation function part 2}.
\end{align}

In the end, the total covariance matrix is written as
\begin{equation}
    \mathbf{\Gamma} (t) = \bigoplus_{q>0}\mathbf{\Gamma}_q(t) = \bigoplus_{q>0}\frac{1}{2\Omega_{q0}}C_{q}(\beta)\begin{pmatrix}
\rho_{q} & -\eta_{q} & \mu_{q} & \eta_{q}\\
-\eta_{q} & \kappa_{q} & \eta_{q} & -\chi_{q}\\
\mu_{q} & \eta_{q} & \rho_{q} & -\eta_{q}\\
\eta_{q} & -\chi_{q} & -\eta_{q} & \kappa_{q}
\end{pmatrix}.
\end{equation}
\subsection{Symplectic eigenvalues and the dynamics of the mutual information}\label{appdxB:symp_eigval_mi_dynamics}
The dynamics of quantum information-theoretic quantities for Gaussian states can be extracted from the symplectic eigenvalues of the covariance matrix. For a single-mode covariance matrix, the symplectic eigenvalue is $\lambda = \left|\text{eig}[i\mathbf{\Sigma}\mathbf{\Gamma}]\right|$, where 
$\mathbf{\Sigma} = \bigl( \begin{smallmatrix} 0 & 1 \\ -1 & 0 \end{smallmatrix} \bigr)$ is the symplectic matrix. In other words, we compute the positive eigenvalue(s) of the matrix $i\mathbf{\Sigma}\mathbf{\Gamma}$. For $N$-mode Gaussian states, the symplectic eigenvalues are found by diagonalising $i\left(\bigoplus_{j=1}^{N}\mathbf{\Sigma}\right)\mathbf{\Gamma}$. 
In our case, the covariance matrix can be decomposed into a direct sum of two-modes covariance matrices $\mathbf{\Gamma} = \bigoplus_{q>0} \mathbf{\Gamma}_q$. This means that in each momentum sector, the symplectic eigenvalues are obtained by diagonalising $i(\mathbf{\Sigma} \oplus \mathbf{\Sigma})\mathbf{\Gamma}_q$. Meanwhile, the subsystem covariance matrices $\mathbf{\Gamma}_q^{\sigma\sigma}$ with $\sigma \in \{A, B\}$ is a single-mode system with a symplectic eigenvalue that can be computed by diagonalising $i\mathbf{\Sigma}\mathbf{\Gamma}_{q}^{\sigma\sigma}$. 

Mutual information (MI) for Gaussian states is fully characterised by the symplectic eigenvalues of the (subsystem) covariance matrix. Explicit calculation of $i\mathbf{\Sigma }\mathbf{\Gamma}_q^{AA}$ with Eq.~\eqref{eq:Gamma_AA_t} yields
\begin{align}
i{\bf \Sigma}{\bf \Gamma}^{AA}_{q}&=i\frac{1}{2\Omega_{q0}}C_{q}(\beta)\begin{pmatrix}
-\eta_{q} & \kappa_{q}\\
-\rho_{q} & \eta_{q}
\end{pmatrix}.
\end{align}
Diagonalising this matrix gives positive eigenvalues of the form
\begin{align}\label{eq:subsys_symp_appdx}
    \lambda_q^A(t,\beta) = \frac{1}{2}C_{q}(\beta)\sqrt{\left(\frac{\dot{\gamma}_q(t)}{\Omega_{q0}}\right)^2+\left(\gamma_q(t)+\frac{1}{\gamma_q(t)}\right)^2}     = C_{q}(\beta)\sqrt{1+\mathcal{F}[\gamma_q(t)]} ,
\end{align}
where $\mathcal{F}[\gamma_q(t)]$ is defined in Eq. \eqref{eq:ermakov_com_func} in the main text. The symplectic eigenvalues for the subsystem $B$ are identical.

Meanwhile, the symplectic eigenvalues of the joint covariance matrix are $\lambda^{\pm}_{q}(t)=\Big|{\rm eig}\Big[i({\bf \Sigma}\oplus {\bf \Sigma}){\bf \Gamma}_{q}(t)\Big]\Big|$ with the two-modes covariance matrix taking a block-diagonal form $\mathbf{\Gamma}_q(t) = \mathbf{\Gamma}_q^{++}(t) \oplus \mathbf{\Gamma}_q^{--}(0)$. 
For a $2\times 2$ block matrix, the symplectic eigenvalues are simply $\lambda^{\pm}_{q}(t)=\sqrt{{\rm det}\;{\bf \Gamma}^{\pm\pm}_{q}(t)}$. Because ${\bf G}^{\pm}_{q}(t)$ is symplectic, ${\rm det}\;{\bf G}^{\pm}_{q}(t)=1$, so ${\rm det}\;{\bf \Gamma}^{\pm\pm}_{q}(t)={\rm det}\;{\bf \Gamma}^{\pm\pm}_{q}(0)$. The two time-independent positive eigenvalues are given by $\lambda_q^{\pm}(t) = \lambda_q^{\pm}(0)=C_{q}(\beta)$. Using these expressions for symplectic eigenvalues, the MI can be calculated according to the prescription given in Subsec.~\ref{sec:MI_dynamics}. This proves the result presented in Subsec.~\ref{sec:MI_dynamics} in the main text.
\subsection{Dynamics of the logarithmic negativity \label{appendix_LN}}
Here we derive the general result on the dynamics of entanglement as quantified by the logarithmic negativity (LN), presented in Subsec.~\ref{subsec:ln_dynamics} of the main text. The LN is determined by the symplectic eigenvalues of the partially transposed covariance matrix. Let us first define the partial transpose operator, that corresponds to flipping the sign of the momentum quadratures of $B$, i.e. $(\delta\hat n_q^A,\delta\hat\phi_q^A,\delta\hat n_q^B,\delta\hat\phi_q^B)
\mapsto
(\delta\hat n_q^A,\delta\hat\phi_q^A,\delta\hat n_q^B,-\delta\hat\phi_q^B)$. In other words, we can write
\begin{equation}
    \mathbf{\Gamma}_q^{T_B} = \mathscr{T}_B\begin{pmatrix}
        \mathbf{\Gamma}_q^{AA} & \mathbf{\Gamma}_q^{AB}\\
        \mathbf{\Gamma}_q^{AB} & \mathbf{\Gamma}_q^{AA}\\
    \end{pmatrix}\mathscr{T}_B^T  = \begin{pmatrix}
        \mathbf{\Gamma}_q^{AA} & \mathscr{\Gamma}_q^{AB}\sigma_z\\
        \sigma_z\mathbf{\Gamma}_q^{BA} & \sigma_z\mathbf{\Gamma}_q^{BB}\sigma_z\\
    \end{pmatrix},
\end{equation}
where $\mathscr{T}_B = {\rm diag}(1,1,1,-1) = I_2 \oplus \sigma_z,$ with $\sigma_z$ being the Pauli-Z matrix. 
%
%
%
To solve for LN dynamics, we calculate the symplectic eigenvalues of $\mathbf{\Gamma}_q^{T_B}$, or in other words the eigenvalues of $i(\mathbf{\Sigma} \oplus\mathbf{\Sigma})\mathbf{\Gamma}_q^{T_B}$, where we write out their forms explicitly:
\begin{align}
{\bf \Gamma}^{T_{B}}_{q}&=\frac{C_{q}(\beta)}{2 \Omega_{q0}}\begin{pmatrix}
\rho_{q} & -\eta_{q} & \mu_{q} & -\eta_{q}\\
-\eta_{q} & \kappa_{q} & \eta_{q} & \chi_{q}\\
\mu_{q} & \eta_{q} & \rho_{q} & \eta_{q}\\
-\eta_{q} & \chi_{q}& \eta_{q} & \kappa_{q}
\end{pmatrix}, \qquad i(\mathbf{\Sigma} \oplus\mathbf{\Sigma})\mathbf{\Gamma}_q^{T_B}=\frac{iC_{q}(\beta)}{2 \Omega_{q0}}\begin{pmatrix}
-\eta_{q} & \kappa_{q}& \eta_{q} & \chi_{q}\\
-\rho_{q} & \eta_{q} & -\mu_{q} & \eta_{q}\\
-\eta_{q} & \chi_{q} & \eta_{q} & \kappa_{q}\\
-\mu_{q} & -\eta_{q} & -\rho_{q} & -\eta_{q}
\end{pmatrix}.
\end{align}
The associated symplectic eigenvalues are $\nu^{\pm}_{q}=\left|\text{eig}\left[i(\mathbf{\Sigma} \oplus\mathbf{\Sigma})\mathbf{\Gamma}_q^{T_B}\right]\right| =C_{q}(\beta)\sqrt{V_{1}\pm \sqrt{V_{2}}}$,
where we define
\begin{align}
V_1&\equiv \frac{1}{4\Omega^{2}_{q0}}(\kappa_{q}\rho_{q}+\mu_{q}\chi_{q})=\frac{1}{2}\left[\frac{\dot{\gamma}^{2}_{q}}{\Omega^{2}_{q0}}+\frac{1}{\gamma^{2}_{q}}+\gamma^{2}_{q}\right],\\
V_{2}&\equiv\frac{1}{16\Omega^{4}_{q0}}(\kappa_{q}\mu_{q}+\rho_{q}\chi_{q})^{2}+\frac{1}{4\Omega^{4}_{q0}}\eta^{2}_{q}(\kappa_{q}-\chi_{q})(\rho_{q}+\mu_{q})=\frac{1}{4}\left[\left(\frac{\dot{\gamma}_{q}}{\Omega_{q0}}\right)^{2}+\frac{1}{\gamma^{2}_{q}}-\gamma^{2}_{q}\right]^{2}+\left(\frac{\dot{\gamma}_{q}\gamma_{q}}{\Omega_{q0}}\right)^{2}.
\end{align}
These expressions are obtained using Eqs.~\eqref{Eq: correlation function part 1} -- \eqref{Eq: correlation function part 2}.
Before we proceed, let's simplify the nested square root component.
\begin{enumerate}
    \item \underline{Denesting two nested square roots}: One can check that $V_1 \geq 0$ and $0\le V_2\le V_1^2$. Hence, applying the denesting identity for two nested square roots, we obtain
\begin{align}
    \sqrt{V_1 \pm \sqrt{V_2}}&= \sqrt{\frac{V_1+\sqrt{V_1^2-V_2}}{2}} \pm \sqrt{\frac{V_1-\sqrt{V_1^2-V_2}}{2}},
\end{align}
where we use the denesting identity for two nested square roots in the last line.

    \item \underline{Evaluation of $V_1^2-V_2$}: 
    Directly evaluating the terms give
    \begin{align}
        V_1^2-V_2&= \frac{1}{4} \left[4+4\left(\frac{\dot{\gamma}_{q}\gamma_{q}}{\Omega_{q0}}\right)^{2}\right]-\left(\frac{\dot{\gamma}_{q}\gamma_{q}}{\Omega_{q0}}\right)^{2}=1 \label{eq: V1 squared - V2}.
    \end{align}

     \item \underline{Evaluation of $V_1\pm\sqrt{V_1^2-V_2}$}:
     Using Eq.~\eqref{eq: V1 squared - V2}, we have $V_1\pm\sqrt{V_1^2-V_2}=\frac{1}{2}\left[\left(\frac{1}{\gamma_{q}}\pm \gamma_{q}\right)^2+\left(\frac{\dot{\gamma}_{q}}{\Omega_{q0}}\right)^{2}\right]$.
\end{enumerate}
Combining all results gives rise to
\begin{align}
    \nu_q^{\pm}(t) =C_q(\beta) \sqrt{V_1 \pm \sqrt{V_2}}&=C_q(\beta) \left( \sqrt{\frac{V_1+\sqrt{V_1^2-V_2}}{2}} \pm \sqrt{\frac{V_1-\sqrt{V_1^2-V_2}}{2}}\right)\\
    &=\frac{C_{q}(\beta)}{2}\left[\sqrt{\left(\frac{\dot{\gamma}_q(t)}{\Omega_{q0}}\right)^2+\left(\gamma_q(t)+\frac{1}{\gamma_q(t)}\right)^2}\pm \sqrt{\left(\frac{\dot{\gamma}_q(t)}{\Omega_{q0}}\right)^2+\left(\gamma_q(t)-\frac{1}{\gamma_q(t)}\right)^2}\right] \\
    & =  \frac{C_{q}(\beta)}{2}\bigg(\sqrt{1+\mathcal{F}[\gamma_q(t)]}\pm\sqrt{\mathcal{F}[\gamma_q(t)]}\bigg),
\end{align}
where $\mathcal{F}[\gamma_q(t)]$ is defined in Eq.~\eqref{eq:ermakov_com_func} in the main text. Since $C_q(\beta)  \geq 1/2$ and by construction $\mathcal{F}[\gamma_q(t)]\geq 0$, one can observe that $\nu_q^+(t) \geq \frac{1}{2}$. Consequently, $\nu_q^+(t)$ \textit{does not} contribute to LN, and only $\nu_q^-(t)$ contributes. This proves Eqs. \eqref{LN_sum_q} and \eqref{eq:symp_eigvals_partial_transpose} in the main text, where we have simplified the notation $\nu_q^-(t) \to \nu_q(t,\beta)$.
\section{Early-time power-law scaling of mutual information at zero initial temperature \label{Appendix_early_time}}
This Appendix establishes our results on the early-time power-law scaling of mutual information for zero initial temperature (Sec. \ref{early_time_MI}).  
%
For a sudden quench, we derive the early-time Ermakov factor of Eq.~\eqref{eq:gamma_t_early_time} and the resulting symplectic eigenvalue of Eq.~\eqref{eq: early_symplectic_eigenvalue} in Appendix~\ref{Appendix_early_time_sudden_quench}).
Similarly, for a generic smooth finite-time protocol, the corresponding derivations are presented in Appendix~\ref{Ermakov_equation_early_time}.
These are then used to determine the ultra-early-time and early-time MI scaling laws of Eqs.~\eqref{eq: MI_small_alpha} and~\eqref{eq: MI propto epsilon} (Appendix~\ref{mutual_info_early_time}).
Moreover, we determine the numerical values of the crossover time $t^{(n)}_{\rm cut}$ between these two distinct regimes (Appendix~\ref{subsec: early time breakdown}), plotted in Fig.~\ref{figure_ealy_time} .

\subsection{Early-time expansion of the symplectic eigenvalues for a sudden quench \label{Appendix_early_time_sudden_quench}}
For a sudden quench, we have the exact solution of the Ermakov equation (cf. Eq.~\eqref{eq:solution_sudden_quench}): $\gamma_q(t) = \sqrt{1-\left(1-\zeta_q^2\right)\sin^2(\Omega_{qf} t)}$,
where $\zeta_q \equiv \Omega_{q0}/\Omega_{qf}$. 
At early times $\Omega_{qf}t \ll 1$, Taylor expansion yields
\begin{align}
     \gamma_q(t) &\approx \sqrt{1-\left(1-\zeta_q^2\right)\left[(\Omega_{qf}t)^{2}+\mathcal{O}(t^{4})\right]} 
     \approx 1-\frac{t^2}{2}[\Omega_{qf}^2-\Omega_{q0}^2]+\mathcal{O}(t^{4}) \approx 1 - 2\omega J_ft^2.
     \label{approx_ermakov_sudden_simplified}
\end{align}
The approximation $\Omega_{qf}t = \sqrt{(vq)^2 + 4 \omega J_f}t  \ll 1$ requires both early-time approximation and a momentum cutoff, i.e.
\begin{align}
 t\ll\frac{1}{\sqrt{(vq)^2+4\omega J_f}} < \frac{1}{\sqrt{4\omega J_f}}\equiv t_0^* \;\text{ and }\;
 q &\ll \frac{1}{vt} \text{ for } t \in[0, t_0^*).
\end{align}
We adopt the most restrictive bound for the momentum cutoff 
$q \ll\Lambda \equiv (vt^{*}_{0})^{-1}$. We then approximate
    \begin{align}
     \left(\frac{\dot{\gamma}_{q}(t)}{\Omega_{q0}}\right)^{2}+\left(\gamma_{q}(t)+\frac{1}{\gamma_{q}(t)}\right)^{2}
        &\approx 4+\left(\frac{4\omega J_{f} t}{\Omega_{q0}}\right)^{2}+O(t^{4}).
    \end{align}
In the zero-temperature limit $C_q(\beta)=1/2$, the symplectic eigenvalues simplify into
\begin{align}
\lambda_{q}(t,\beta)&=\frac{1}{4}\sqrt{\left(\frac{\dot{\gamma}_{q}(t)}{vq}\right)^{2}+\left(\gamma_{q}(t)+\frac{1}{\gamma_{q}(t)}\right)^{2}} \approx\frac{1}{2}\sqrt{ 1+\left(\frac{2\omega J_{f}}{\Omega_{q0}}t\right)^{2}}  \approx\frac{1}{2}\sqrt{1+\left(\frac{\varepsilon_{0}(t)}{q}\right)^{2}} \label{app:lambda_epsilon_0},
\end{align}
where on the second line we define $\varepsilon_{0}(t)\equiv(2\omega J_{f} t)/v$, and the subscript $0$ is associated with the sudden quench protocol. 
This notation is chosen to be consistent with the general notation $\varepsilon_{n}(t)$ for protocols whose first non-vanishing derivative at $t=0$ is of order $n$ described in the next section.

\subsection{Early-time expansion of the symplectic eigenvalues for a smooth protocol \label{Ermakov_equation_early_time}}
In this section, we provide an integral representation to the Ermakov Eq. \eqref{Ermakov}, useful to obtain the early-time expansion of the Ermakov factor for general protocols. We first rearrange Eq.~\eqref{Ermakov} into
\begin{align}
     \ddot{\gamma}_q(t) = \frac{\Omega_{q0}^2}{\gamma_q^3(t)}  -\Omega_q^2(t) \gamma_q(t), 
\end{align}
and integrate with respect to time twice to obtain a formal solution
\begin{align}
\gamma_q(t) &= 1+ \int_0^t \left( \int_0^{t'} \left[ \frac{\Omega_{q0}^2}{\gamma_q^3(s)} - \Omega_q^2(s) \gamma_q(s) \right] ds \right) dt',\label{gamma_integral}
\end{align}
where we have used the initial conditions $\gamma_{q}(0)=1$ and $\dot{\gamma}_q(0) = 0$. Using the Cauchy formula for repeated integration
\begin{align}
\int_{a}^{x}\left(\int_{a}^{t}f(s)ds\right)dt=\int_{a}^{x}(x-s)f(s)ds,
\end{align}
Eq.~\eqref{gamma_integral} can be simplified into 
\begin{align}
    \gamma_q(t) &=1 + \int_0^t (t-t') \left[ \frac{\Omega_{q0}^2}{\gamma_q^3(t')} - \Omega_q^2(t') \gamma_q(t') \right] dt' =: 1+u_q(t),\label{eq: integral representation of Ermakov solution}
\end{align}
with $u_q(t)$ being defined as the integral.

Our goal is to approximately solve for $u_q(t)$. To do this, we substitute $\gamma_q(t) = 1+u_q(t)$ into the square bracket in Eq. \eqref{eq: integral representation of Ermakov solution}. 
For early times, we have $|u_q(t)| \ll 1$, and hence the integrand can be approximated as
\begin{align}
    \frac{\Omega_{q0}^2}{[1+u_q(t)]^3} - [\Omega_{q0}^2+4\omega J(t)][1+u_q(t)]
    &\approx -4\left\{u_q(t)\Omega_{q0}^2+\omega J(t)[1+u_q(t)]\right\}+O\left(u_q^2\right) \\
    &= -4\omega J(t) - 4u_q(t)\left[\Omega_{q0}^2+\omega J(t)\right] + O\left(u_q^2\right).\label{eq: square bracket term}
\end{align}
Substituting \eqref{eq: square bracket term} into the integral in \eqref{eq: integral representation of Ermakov solution} yields 
\begin{align}
    u_q(t) &\approx -4\omega\int_{0}^{t}(t-t^\prime)J(t^\prime)\,dt^\prime
    - 4\Omega_{q0}^2\int_{0}^{t}(t-t^\prime)u_q(t^\prime)\,dt^\prime
    - 4\omega\int_{0}^{t}(t-t^\prime)u_q(t^\prime)J(t^\prime)\,dt^\prime.
    \label{eq: uq integral eq full}
\end{align}
Let us focus only on the first (leading) term, i.e.
\begin{equation}
    u_q(t) \approx  -4\omega \int_{0}^{t}(t-t^\prime)J(t^\prime)\; dt^\prime.
    \label{eq: tunneling_kernel}
\end{equation}

The $q$-dependence drops out as it only plays a role in the subleading term. 
The integral in \eqref{eq: tunneling_kernel} is clearly protocol dependent. To simplify the problem further, we perform a Taylor expansion of the tunneling protocol around $t^\prime = 0$ and take only the first leading term, i.e., $J(t^\prime) \approx \frac{J^{(n)}(0)}{n!}(t^\prime)^n$ where $n$ is an integer and $J^{(n)}(0)$ is the first non-null derivative of $J(t)$ evaluated at $t = 0$. This expansion yields
\begin{equation}
    u_q(t) \approx -\frac{4\omega J^{(n)}(0)}{n!}\int_{0}^{t}(t-t^\prime)(t^\prime)^n\;dt^\prime = -\frac{4\omega J^{(n)}(0)}{(n+2)!}t^{n+2},
    \label{eq: u_q approximate}
\end{equation}
where in the last step we used the integral
\begin{align}
\int_{0}^{t}(t-t')(t')^{m}dt'=\frac{t^{m+2}}{(m+1)(m+2)},
\end{align}
valid for $m>-1$. Hence, to the leading order, the Ermakov factor for all modes behave as (cf. Eq. \eqref{eq:gamma_t_early_time})
\begin{equation}
    \gamma_q(t) \approx 1 - \frac{4\omega J^{(n)}(0)}{(n+2)!}t^{n+2},
    \label{eq: gamma_approx}
\end{equation}
in the early-time regime specified more precisely below. Eq.~\eqref{eq: gamma_approx} is consistent with the early-time behavior of $\gamma_q(t)$ for the sudden quench ($n=0$) given by \eqref{approx_ermakov_sudden_simplified}. We next self-consistently define the conditions for which \eqref{eq: gamma_approx} is a good approximation. The first condition is $|u_q(t)| \ll 1$, giving us a precise definition of the early-time regime for a given protocol
\begin{equation}
    t \ll t_n^* = \left[\frac{(n+2)!}{4\omega |J^{(n)}(0)|}\right]^{\frac{1}{n+2}}.
    \label{eq: early time limit}
\end{equation}
The other condition is derived from ignoring the subleading terms in \eqref{eq: uq integral eq full}, i.e.
\begin{align}
   \left| 4(vq)^2\int_{0}^{t}(t-t^\prime)u_q(t^\prime)\,dt^\prime+
    4\omega\int_{0}^{t}(t-t^\prime)u_q(t^\prime)J(t^\prime)\,dt^\prime\right|  \ll \left|u_q(t)\right|. 
\end{align}
The above condition will set a momentum cutoff $\Lambda$ that generally depends on the protocol $J(t)$.

Given the knowledge of the early time behavior of $\gamma_q(t)$, we can now derive the early-time behavior for symplectic eigenvalues, 
\begin{align} \label{app:lambda_q}
\lambda_{q}(t,\beta)&=\frac{C_{q}(\beta)}{2} \sqrt{\left(\frac{\dot{\gamma}_{q}(t)}{\Omega_{q0}}\right)^{2}+\left(\gamma_{q}(t)+\frac{1}{\gamma_{q}(t)}\right)^{2}}.
\end{align}
Let us rewrite Eq.~\eqref{eq: gamma_approx} as $\gamma_q(t)=1-A t^{n+2}+\mathcal{O}(t^{n+3})$, for $n\geq 0$, 
where we set $A=4\omega \frac{J^{(n)}(0)}{(n+2)!}$ for simplicity. We insert this expression in \eqref{app:lambda_q}
\begin{align} \label{app:lambda_beta_early}
     \lambda_{q}(t,\beta) &=\frac{C_{q}(\beta)}{2}  \sqrt{\left(\frac{-A(n+2)t^{n+1}}{\Omega_{q0}}\right)^{2} + \left[1 - At^{n+2} + (1 + At^{n+2} + \mathcal{O}(t^{n+3})) \right]^{2}} \\
     &\approx C_{q}(\beta)\sqrt{1+ \frac{A^2 (n+2)^2}{4 \Omega_{q0}^2} t^{2n+2} +\mathcal{O}(t^{2n+3})}
      \\
      &\approx C_{q}(\beta)\sqrt{1+\left(\frac{\varepsilon_{n}(t)}{q}\right)^{2}},
\end{align}
where in the last line we omitted higher order terms in time of order $\mathcal{O}(t^{2n+3})$, and defined $\varepsilon_{n}(t)\equiv\frac{2\omega J^{(n)}(0)}{v(n+1)!}t^{n+1}$---generalising $\varepsilon_0$ in \eqref{app:lambda_epsilon_0} for a sudden quench to smooth protocols.

\subsection{Early-time expansion of the mutual information (MI) at zero initial temperature \label{mutual_info_early_time}}
In the early-time regime, the symplectic eigenvalue at zero-temperature is approximately $\lambda_{q}(t,\beta\rightarrow\infty) \approx \frac{1}{2}\sqrt{1+\left(\frac{\varepsilon_{n}(t)}{q}\right)^{2}}$,
as obtained in the $\beta\to \infty$ limit of Eq. \eqref{app:lambda_beta_early}.
This form allows us to determine an approximated expression for the mutual information at early time. Consider a system of large size with infrared (IR) and ultraviolet (UV) cutoff, such that the lowest momentum mode is $q_{\text{min}}=\frac{\pi}{L} \neq 0$ and the highest momentum mode is $\Lambda$.
In the zero-temperature limit the mutual information (MI) is given by
\begin{align}
  \lim_{\beta \to \infty}  I(A:B,t) = 2\sum_{q_{\text{min}}}^{\Lambda} \mathcal{S}[\lambda_q(t,\beta)].
 \end{align}
As stated in the previous section, the early-time approximation in general necessitates a finite UV cutoff. Nevertheless, here we set $\Lambda \rightarrow \infty$ to obtain explicit scaling law. This is justified since the contribution to the MI from high momentum modes is suppressed, so that the leading term stays unchanged (see main text).

We study two main phases of the early-time dynamics:
\begin{enumerate}[leftmargin=*]
    \item \underline{Phase 1: The time interval for which $\varepsilon_{n}(t) \ll     q_{\text{min}} $}

    In this case, we cannot replace the discrete sum with an integral. Indeed, the discretisation of the sum is performed with the spacing $\Delta q=q_{\rm min}=\pi/L$. So that, to go from the sum to integral representation
\begin{align}
\sum_{q>0}f\left(\frac{\varepsilon_{n}}{q}\right)\to \frac{L}{\pi}\int_{0}^{\Lambda}dq\; f\left(\frac{\varepsilon_{n}}{q}\right),
\end{align}
one requires that $q_{\rm min}\ll \varepsilon_{n}$. 
    Instead, we should start with the exact sum.
   Since $\frac{\varepsilon_{n}(t)}{q} \ll 1$ for all available modes,
   \begin{align}
        \lambda_{q}(t,\beta) &\approx \frac{1}{2}\sqrt{1+\left(\frac{\varepsilon_{n}(t)}{q}\right)^{2}}\approx  \frac{1}{2}\left[1 +\frac{1}{2}\left(\frac{\varepsilon_{n}(t)}{q}\right)^{2}\right].
   \end{align} 
   Then, the entropy function is roughly
   \begin{align}
       \mathcal{S}[\lambda_q] &\approx \left(1 + \frac{\varepsilon^2_{n}}{4q^2}\right)\ln\left(1 + \frac{\varepsilon^2_{n}}{4q^2}\right) - \left(\frac{\varepsilon^2}{4q^2}\right)\ln\left(\frac{\varepsilon^2_{n}}{4q^2}\right)
       \approx \frac{\varepsilon^2_{n}}{4q^2} \left[ 1 - \ln\left(\frac{\varepsilon^2_{n}}{4q^2}\right) \right] + \mathcal{O} \left( \frac{\varepsilon^4_{n}}{q^4}\right).
   \end{align}
   From the boundary condition, we have $q_{N} = N\pi/L$.
   Hence, the above entropic term can be rewritten as $\mathcal{S}[\lambda_{q_N}] \approx \frac{\varepsilon^2_{n} L^2}{4 \pi^2 N^2} \left[ 1+ 2\ln N - 2\ln\left(\frac{\varepsilon_{n} L}{2\pi}\right) \right]$.
   Now, we perform the summation to obtain the MI:
   \begin{align}
       I(A:B,t) &\approx\frac{\varepsilon^2_{n} L^2}{2 \pi^2} \left\{ \left[1-2\ln\left(\frac{\varepsilon_{n} L}{2\pi}\right) \right]\sum_{N=1}^{\infty}   \frac{1}{N^2} + 2 \sum_{N=1}^{\infty}   \frac{\ln N}{N^2}\right\}\\
       &=\frac{\varepsilon^2_{n} L^2}{2 \pi^2} \left\{ \left[1-2\ln\left(\frac{\varepsilon_{n} L}{2\pi}\right) \right] \frac{\pi^2}{6} - 2 \;\zeta'(2)\right\},
   \end{align}
   where we have used the identities $\sum_{N=1}^{\infty} \frac{1}{N^2} = \frac{\pi^2}{6},
       \sum_{N=1}^{\infty} \frac{\ln N}{N^2} = -\zeta'(2)$,
   with the Riemann zeta function $\zeta(\cdot)$ and $\zeta'(\cdot)$ its derivative. Let us set $\tau_{n}=(\varepsilon_{n} L)/\pi$. Rearranging it yields
    \begin{align}
    I(A:B,t)&\approx C_{1}\tau^{2}_{n}+C_{2}\tau^{2}_{n}\ln(\tau_{n}),
    \end{align}
    where $C_{1}=\frac{\pi^{2}}{12}\left[1+2\ln(2)\right]-\zeta'(2)$ and $C_{2}=-\frac{\pi^{2}}{6}.$
   \item \underline{Phase 2: Thermodynamic-Limit Behaviour ($\varepsilon_{n}(t) \gg q_{\text{min}}$)}\\
   To compute the MI, we can use the approximate expression for the symplectic eigenvalues, i.e. 
\begin{align}
    I(A:B, t) 
    = \frac{2L}{\pi} \int_0^{\infty} S \left[\lambda_{q}(t,\beta)\right] \; dq
    &\approx\frac{2L}{\pi} \int_0^{\Lambda} S \left[\frac{1}{2}\sqrt{1+\frac{\varepsilon^2_{n}(t)}{q^2}} \right] \; dq+\frac{2L}{\pi} \int_{\Lambda}^{\infty} S \left[\lambda_{q}(t,\beta) \right] \; dq\\
     &\approx\frac{2L}{\pi} \int_0^{\infty} S \left[\frac{1}{2}\sqrt{1+\frac{\varepsilon^2_{n}(t)}{q^2}} \right] \; dq.
\end{align}
Let $\displaystyle q=\frac{\varepsilon}{\sinh u}$, hence $\displaystyle dq=-\frac{\varepsilon \cosh u}{\sinh^2 u} du$. By performing this change of variables we get
\begin{align}
        I(A:B, t) 
    &= \frac{2L}{\pi} \int_0^\infty S \left[\frac{1}{2} \cosh u \right] \; \frac{\varepsilon \cosh u}{\sinh^2 u} \; du.
\end{align}
Using integration by parts, i.e.
\begin{align}
    a &=S \left[\frac{1}{2} \cosh u \right]= \cosh^2\left(\frac{u}{2}\right) \ln\left[\cosh^2\left(\frac{u}{2}\right)\right] - \sinh^2\left(\frac{u}{2}\right) \ln\left[\sinh^2\left(\frac{u}{2}\right)\right],\\
    da&= \sinh(u) \ln \coth\left(\frac{u}{2}\right), \qquad
    dv= \frac{\cosh u}{\sinh^2 u}, \qquad
    v=-\frac{1}{\sinh u},
\end{align}
where we use the identity $\cosh^2\left(\frac{u}{2}\right) = \frac{\cosh u + 1}{2}$ and  $\sinh^2\left(\frac{u}{2}\right) = \frac{\cosh u - 1}{2}$.
So,
\begin{align}
     I(A:B, t) 
    &= \frac{2L \varepsilon}{\pi}  \left\{-\frac{1}{\sinh u}S \left[\dfrac{1}{2} \cosh u \right] \Bigg|_0^\infty+ \int_{0}^{\infty} \ln \left( \coth \frac{u}{2} \right) du\right \}.
\end{align}

For $u \to \infty$, $S \left[\dfrac{1}{2} \cosh u \right]\approx \ln \left(\dfrac{1}{2} \cosh u \right) \approx \ln \left(\frac{e^u}{4}\right) \approx u$. 
So, we have $S \left[\dfrac{1}{2} \cosh u \right]/\sinh u\approx 2ue^{-u} \to 0.$

For $u \to 0$,
\begin{align}
       S \left[\dfrac{1}{2} \cosh u \right]&= \cosh^2\left(\frac{u}{2}\right) \ln\left[\cosh^2\left(\frac{u}{2}\right)\right] - \sinh^2\left(\frac{u}{2}\right) \ln\left[\sinh^2\left(\frac{u}{2}\right)\right] \approx 0-\left(\frac{u}{2}\right)^2  \;2 \ln u,
\end{align}
as $\cosh x \to 1$ and $\sinh x \to x$.
So, for $u \to 0^+$, $- S \left[\dfrac{1}{2} \cosh u \right]/\sinh u\approx \dfrac{u\ln u}{2} \to 0$.

Since the first term vanishes at both limits, the MI reduces to 
\begin{align}
      I(A:B, t) 
    &= \frac{2L \varepsilon}{\pi}  \int_{0}^{\infty} \ln \left( \coth \frac{u}{2} \right) du =\frac{\pi L \varepsilon }{2},
\end{align}
where we used the identity $\int_{0}^{\infty} \ln \left( \coth \frac{u}{2} \right) du=\frac{\pi^2}{4}$.

\end{enumerate}
\subsection{Crossover from ultra-early time to early-time regime \label{subsec: early time breakdown}}
We now provide an estimate of the time at which the crossover from one regime to the other happens for different system sizes, in the case of the sudden quench and the linear protocol. 
We estimate the regime in which $\tau_{n}=\varepsilon_{n}(t)/q_{\min}\ll 1$, with
\(q_{\min}=\pi/L\). For the linear protocol ($n=1$), the time-dependent parameter is $\varepsilon_{1}(t)=\frac{\omega \dot J(0)}{v}t^2$, and the validity time $t^{*}_{1}=\left(\frac{3}{2\omega |\dot J(0)|}\right)^{1/3}$. Thus
$\tau_{1}=\frac{\varepsilon_{1}(t)}{q_{\min}}
=
\frac{\omega \dot J(0)L}{v\pi}t^2
=
\frac{3L}{2v\pi t^{*}_{1}}\left(\frac{t}{t^{*}_{1}}\right)^{2}
\ll 1$. Or equivalently
$t^{(1)}_{\rm cut}= \left(\frac{2v\pi t^{*}_{1}}{3L}\right)^{1/2}t^{*}_{1}.$ The time $t^{(n)}_{\rm cut}$ corresponds to the value of $t$ for which the approximation $\tau_{n}\ll 1$ breaks down. For a sudden quench, $\varepsilon_{0}(t)=\frac{2\omega J_f t}{v}$,
$t^{*}_{0}=(4\omega J_f)^{-1/2}$. The condition $\tau_{0}=\varepsilon_{0}(t)/q_{\min}\ll 1$, gives $t^{(0)}_{\rm cut}= 
\frac{v\pi}{\sqrt{\omega J_f}\,L}\,t^{*}_{0}$. 
\begin{table}[h]
\centering
\begin{tabular}{c|c|c|c|c}
\hline
System size 
& \multicolumn{2}{c|}{linear protocol}
& \multicolumn{2}{c}{Sudden quench}
\\
\cline{2-5}
$L$
& $t^{(1)}_{\rm cut}/t^{*}_{1}$
& $t^{(1)}_{\mathrm{cut}}~(\mathrm{s})$
& $t^{(0)}_{\rm cut}/t^{*}_{0}$
& $t^{(0)}_{\rm cut} (\mathrm{s})$
\\
\hline
$10^3\xi$
& $2.054\times 10^{-1}$
& $3.16\times 10^{-4}$
& $6.41\times 10^{-2}$
& $t \ll 5.00\times 10^{-5}$
\\
$10^4\xi$
& $6.50\times 10^{-2}$
& $1.00\times 10^{-4}$
& $6.41\times 10^{-3}$
& $t \ll 5.00\times 10^{-6}$
\\
$10^5\xi$
& $2.05\times 10^{-2}$
& $3.16\times 10^{-5}$
& $6.41\times 10^{-4}$
& $t \ll 5.00\times 10^{-7}$
\\
\hline
\end{tabular}
\caption{Validity window for the condition $\tau_{n}=\varepsilon_n(t)/q_{\min}\ll 1$ for linear and sudden quenches. Here $q_{\min}=\pi/L$, $L$ is measured in units of the healing length $\xi$, and the numerical values are obtained using $J_f=2\pi\times 5~\text{Hz}$, $t_f=10^{-3}~\mathrm{s}$, mean density $n_0=75\times 10^6~\mathrm{m^{-1}}$, frequency $\omega_{\perp}=2\pi\times 2{\rm kHz}$, Luttinger parameter $K=55.67$, velocity $v\simeq 3.1\times 10^{-3}\ \mathrm{m\,s^{-1}}$,
healing length $\xi\simeq 2.36\times 10^{-7}\ \mathrm{m}$. For these parameters we obtain $t^{*}_{0}=7.80\times 10^{-4}s$ and $t^{*}_{1}=1.5\times 10^{-3}s$.}
\label{tab:validity-window-epsilon-qmin}
\end{table}

\section{The Ermakov factors and symplectic eigenvalues after protocol saturation \label{general_quench_solution}}
We consider a finite time ramp, and show that at saturation regime, the full entanglement dynamics can be characterised by three parameter sets: the final saturation value of the protocol,
the values of the Ermakov factors and their time derivatives at the onset of saturation. 
In Appendix~\ref{exact_solution}, we derive the closed-form Ermakov solution of Eq.~\eqref{general_solution_ermakov}, parametrised by $M_q$ and $N_q$.
Then, we use the solution to determine the symplectic eigenvalues in Eqs.~\eqref{exact_symplectic}--\eqref{exact_symplectic_2} (Appendix~\ref{symplectic_eigenvalues}), which are parametrised by $D_q$ and $E_q$.
We further establish the positivity properties and lower bounds of $M_q,N_q,D_q,E_q$ (Appendix~\ref{subsec: Properties of Dq}) that are used throughout the long-time average calculation, i.e. Sec.~\ref{Sec: long time MI} -- Sec.~\ref{expression_Renyi2_arbitrary}.
\subsection{Exact solution to the Ermakov equation after saturation \label{exact_solution}}
\begin{theorem}\label{Th:ermakov solution after quench}
Let $v,q,\omega,J_f>0$, and suppose that $J(t)=J_f$ for all $t\geq t_f$. Define
\begin{align}
    \Omega_{q0}=vq,\qquad
    \Omega_{qf}=\sqrt{(vq)^2+4\omega J_f},
    \qquad
    \zeta_q=\frac{\Omega_{q0}}{\Omega_{qf}}.
\end{align}
For each $q$, let $\gamma_q(t)$ solve $\ddot{\gamma}_q(t)
    +
    \Omega_{qf}^{2}\gamma_q(t)
    =
    \frac{\Omega_{q0}^{2}}{\gamma_q^3(t)}$,
with $\gamma_q(0)=1$ and $\dot{\gamma}_q(0)=0$. Then, for $t\geq t_f$,
\begin{empheq}[box=\fbox]{align}
   \gamma_q(t)
    &=\sqrt{M_q+N_q-2N_q\sin^2\theta_q}, \label{solution_general_ermakov_appendix}\\
N_q&=\sqrt{M_q^2 - \zeta_q^2},\qquad
M_q=\frac{1}{2} \left[ \gamma_q^2(t_{f})+ \left(\frac{\dot \gamma_q(t_{f})}{\Omega_{qf}}\right)^2+ \left(\frac{\zeta_q}{\gamma_q(t_{f})} \right)^2\right],\\
\theta_q&=\Omega_{qf}(t-t_{f})-\frac{\vartheta_q}{2},\qquad
\vartheta_q
=\arctan \left(
\frac{
2\,\Omega_{qf}\,\gamma_q(t_{f})\,\dot\gamma_q(t_{f})
}{
\Omega_{qf}^{2}\gamma_q^{2}(t_{f})
-\dot\gamma_q^{2}(t_{f})
-\dfrac{\Omega_{q0}^{2}}{\gamma_q^{2}(t_{f})}
}
\right).
\end{empheq}
\end{theorem}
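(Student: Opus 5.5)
The proof rests on the fact that, for $t\ge t_f$, the Ermakov equation has constant frequency $\Omega_{qf}$. Then $y(t)\equiv\gamma_q^2(t)$ satisfies a \emph{linear} inhomogeneous equation. We integrate it explicitly and fix the constants from the data $\gamma_q(t_f),\dot\gamma_q(t_f)$.

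Differentiating $y=\gamma_q^2$ gives $\dot y=2\gamma_q\dot\gamma_q$ and $\ddot y=2\dot\gamma_q^2+2\gamma_q\ddot\gamma_q$. The Ermakov equation, $\gamma_q\ddot\gamma_q=\Omega_{q0}^2/\gamma_q^2-\Omega_{qf}^2\gamma_q^2$, then gives
\begin{equation*}
\ddot y=2\dot\gamma_q^2+\frac{2\Omega_{q0}^2}{y}-2\Omega_{qf}^2 y .
\end{equation*}
This is still nonlinear, so we use the conserved Ermakov--Lewis invariant instead. Multiplying the Ermakov equation by $\dot\gamma_q$ and integrating shows that
\begin{equation*}
\mathcal{E}=\dot\gamma_q^2+\Omega_{qf}^2\gamma_q^2+\frac{\Omega_{q0}^2}{\gamma_q^2}
\end{equation*}
is constant for $t\ge t_f$. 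Evaluating at $t_f$ gives $\mathcal{E}=2\Omega_{qf}^2M_q$ with $M_q$ exactly as in the statement. Substituting $2\dot\gamma_q^2+2\Omega_{q0}^2/y=2\mathcal E-2\Omega_{qf}^2y$ into the equation for $\ddot y$ yields the linear equation
\begin{equation*}
\ddot y+4\Omega_{qf}^2 y=2\mathcal E=4\Omega_{qf}^2M_q .
\end{equation*}
Its general solution is $y=M_q+a\cos(2\Omega_{qf}(t-t_f))+b\sin(2\Omega_{qf}(t-t_f))$.

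The constants follow from the data at $t_f$: $a=\gamma_q^2(t_f)-M_q$ and $b=\gamma_q(t_f)\dot\gamma_q(t_f)/\Omega_{qf}$. Using the definition of $M_q$, $a=\tfrac12\big[\gamma_q^2(t_f)-\dot\gamma_q^2(t_f)/\Omega_{qf}^2-\zeta_q^2/\gamma_q^2(t_f)\big]$. Writing $a\cos x+b\sin x=R\cos(x-\vartheta_q)$ gives $\tan\vartheta_q=b/a$, which is precisely the stated arctan after multiplying numerator and denominator by $2\Omega_{qf}^2$. Then $\cos(2\theta_q)=1-2\sin^2\theta_q$ with $\theta_q=\Omega_{qf}(t-t_f)-\vartheta_q/2$ turns $y$ into $M_q+R-2R\sin^2\theta_q$. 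The square root is the positive branch, since $\gamma_q>0$.

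The main obstacle is identifying the amplitude as $R^2=a^2+b^2=M_q^2-\zeta_q^2$, i.e.\ $R=N_q$. Expanding directly is messy. Instead we use the Wronskian-type identity: at any time, $(y-M_q)^2+\dot y^2/(4\Omega_{qf}^2)$ is constant, and evaluating it through the invariant gives $M_q^2-\zeta_q^2$. Concretely, at $t_f$ with $g=\gamma_q(t_f)$ and $p=\dot\gamma_q(t_f)/\Omega_{qf}$, one checks algebraically that $\tfrac14\big(g^2-p^2-\zeta_q^2/g^2\big)^2+g^2p^2=\tfrac14\big(g^2+p^2+\zeta_q^2/g^2\big)^2-\zeta_q^2$. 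This is the same cancellation as $V_1^2-V_2=1$ in the LN appendix. It also shows $M_q\ge\zeta_q$ (by AM--GM), so $N_q$ is real.

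One caveat concerns the phase. The arctan determines $\vartheta_q$ only modulo $\pi$, so the quadrant must be fixed by the signs of $(a,b)$, i.e.\ $\vartheta_q=\operatorname{atan2}(b,a)$. Either way the phase drops out of all long-time averages.
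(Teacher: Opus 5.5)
Your proof is correct, but it follows a different route from the paper's. The paper never uses the invariant. Instead it invokes the Pinney superposition $\gamma_q^2=u^2+\Omega_{q0}^2v^2/W^2$, with $u,v$ explicit sinusoidal solutions of $\ddot x+\Omega_{qf}^2x=0$ matched to $\gamma_q(t_f),\dot\gamma_q(t_f)$. This gives $\gamma_q^2=(A\cos t'+B\sin t')^2+C^2\sin^2t'$, where $A=\gamma_q(t_f)$, $B=\dot\gamma_q(t_f)/\Omega_{qf}$, $C=\zeta_q/\gamma_q(t_f)$ and $t'=\Omega_{qf}(t-t_f)$. Double-angle formulas then produce $M+N\cos(2t'-\vartheta)$ with $M=(A^2+B^2+C^2)/2$, and $N=\sqrt{M^2-\zeta_q^2}$ follows from $M^2-N^2=A^2C^2$.

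That route reuses the propagator machinery of Appendix B. It verifies an ansatz, implicitly relying on uniqueness for the nonlinear initial-value problem. Yours instead derives the solution. The first integral $\mathcal E$ linearizes the equation for $y=\gamma_q^2$ and gives $M_q=\mathcal E/(2\Omega_{qf}^2)$ a direct meaning as a conserved energy. Your second invariant, $(y-M_q)^2+\dot y^2/(4\Omega_{qf}^2)=M_q^2-\zeta_q^2$, identifies $N_q$ and shows it is real. It also makes $\gamma_q^2\ge M_q-N_q=\zeta_q^2/(M_q+N_q)>0$ manifest, all without Pinney's formula.

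Your phase caveat is a genuine, if harmless, correction. The paper uses $X\cos 2t'+Y\sin 2t'=\sqrt{X^2+Y^2}\cos(2t'-\vartheta)$ with only $\tan\vartheta=Y/X$. So the principal-branch $\arctan$ in the statement is right only when the denominator $\Omega_{qf}^2\gamma_q^2(t_f)-\dot\gamma_q^2(t_f)-\Omega_{q0}^2/\gamma_q^2(t_f)$ is positive. When it is negative, $\vartheta_q$ must be shifted by $\pi$. This affects neither $D_q$, $E_q$, nor any of the long-time averages.
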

\begin{proof}

In the calculation, we drop the subscript $q$ for simplicity.
We start by noting that for $t > t_{f}$, the Ermakov equation simplifies to $\ddot{\gamma}+\Omega_{qf}^2\gamma=\dfrac{\Omega_{q0}^2}{\gamma^{3}}.
$
 We use the Pinney solution of the Ermakov equation, i.e.
\begin{align}
    \gamma(t)=\sqrt{u^2(s)+\frac{\Omega_{q0}^2}{W^2(u,v)}v^2(s)} \ , \label{Eq: post-linear protocol pinney solution}
\end{align}
where $s=t-t_{f}$, $u$ and $v$ are two independent solutions of the corresponding homogeneous equation
\begin{align}
    \ddot{\gamma} + \Omega_{qf}^2 \gamma=0 \ .\label{eq: post-linear protocol homogeneous equation}
\end{align}
The initial conditions for $u$ and $v$ now depend on the control protocol at the final ramp time $t_{f}$
    \begin{align} 
    u(0)  = \gamma(t_{f}), \quad  \dot{u}(0) = \dot{\gamma}(t_{f}), \quad
    v(0)  = 0,  \quad \dot{v}(0)  \neq 0. \label{Eq: post-linear protocol u and v initial condition}
    \end{align}
We can verify that
\begin{align}
    u(s) &= u(0)\,\cos\left(\Omega_{qf}\, s\right) + \frac{\dot u(0)}{\Omega_{qf}}\,\sin\left(\Omega_{qf}\, s\right) ,\qquad
    v(s) =\frac{1}{\Omega_{qf}} \sin\left(\Omega_{qf}\, s\right), \label{Eq: u and v s}
\end{align}
satisfy both Eq.~\eqref{eq: post-linear protocol homogeneous equation} and Eq.~\eqref{Eq: post-linear protocol u and v initial condition}.
Moreover, we check linear independence of $u$ and $v$ by computing their Wronskian, i.e.
\begin{align}
    W[u,v](s) &= u(s)\,\dot v(s) - \dot u(s)\,v(s)=u(0) \neq 0.
\end{align}
Thus, if $u(0)= \gamma(t_{f})\neq 0$, $u$ and $v$ are linearly independent solutions. 
To ensure $ W[u,v](s)=1$, we can set $\dot v(0)=1/u(0)$.
In particular, at time $s=0$, we have $W[u,v](0) = u(0)\,\dot v(0)=1$.
For time $s>0$, $W$ remains unchanged by Abel's identity as there is no first-derivative term. 
Now, substituting Eq.~\eqref{Eq: u and v s} into Eq.~\eqref{Eq: post-linear protocol pinney solution} yields
\begin{align}
    \gamma(t) &=\sqrt{
\left[ \gamma(t_{f}) \cos\left[\Omega_{qf}(t-t_{f})\right] + \frac{\dot \gamma(t_{f})}{\Omega_{qf}} \sin\left[\Omega_{qf}(t-t_{f})\right]  \right]^2
+ \left(\frac{\Omega_{q0}}{\Omega_{qf}\gamma(t_{f})}\sin\left[\Omega_{qf}(t-t_{f})\right] \right)^2
} ,
\end{align}
for time $t \geq t_{f}$. 
For simplicity, we define $A= \gamma(t_{f})$, $B=\dfrac{\dot \gamma(t_{f})}{\Omega_{qf}}$, $C=\dfrac{\zeta_q}{\gamma(t_f)}$, $t'=\Omega_{qf}(t-t_{f})$.
Therefore, it becomes
 \begin{align}
     \gamma(t)&=\sqrt{ (A \cos t' + B \sin t')^2 + C^2 \sin^2 t'}\\
     &=\sqrt{\frac{A^2 + B^2 + C^2}{2} + \left( \frac{A^2 - B^2 - C^2}{2} \right) \cos 2t' + AB \sin 2t'}\\
     &=\sqrt{\frac{A^2 + B^2 + C^2}{2}+ \left(\sqrt{\left( \frac{A^2 - B^2 - C^2}{2} \right)^2 + (AB)^2}\right)\cos(2t' - \vartheta)},
 \end{align}
 where we used the identity $X\cos(2t)+Y\sin(2t)=\sqrt{X^{2}+Y^{2}}\cos(2t-\vartheta)$ with $ \displaystyle \tan \vartheta=\frac{Y}{X}=\frac{2 AB}{A^{2}-B^{2}-C^{2}}$.
 Finally, by setting $M=\dfrac{A^2 + B^2 + C^2}{2}$ and $N=\displaystyle\sqrt{\left( \frac{A^2 - B^2 - C^2}{2} \right)^2 + (AB)^2}$, we obtain
 \begin{align}
      \gamma(t)&=\sqrt{M + N \;\cos(2t' - \vartheta)}=\sqrt{M+N-2N\sin^2(t'-\vartheta/2)} \label{Eq: Gamma in M,N, t'}
 \end{align}
Expressing Eq.~\eqref{Eq: Gamma in M,N, t'} in terms of $\gamma(t_f), \dot{\gamma}(t_f),\zeta_q, \Omega_{qf}$ recovers Theorem.~\ref{Th:ermakov solution after quench}.

\end{proof}

\subsection{The expression for the symplectic eigenvalues \label{symplectic_eigenvalues}}
In this subsection, from the knowledge of $\gamma_{q}(t\geq t_{f})$ for the finite-ramp, we derive an exact expression for the symplectic eigenvalue given by Eq.~\eqref{eq: thermal_subsys_eigs}, that is expressed in terms of the Ermakov factor Eq.~\eqref{eq:ermakov_com_func}. From the previous section, we have the solution to the Ermakov equation for $t\geq t_{f}$ given by Eq.~\eqref{solution_general_ermakov_appendix}, that we recall to be $\gamma_q(t)=\sqrt{M_q + N_q \;\cos(2\theta_q)}$.
%
%
Observe that $ N_q\cos(2\theta_q)= \gamma_q^2(t)-M_q $,
so that the derivative can be reduced to
\begin{align}
        [\dot{\gamma}_q(t) ]^2&= \frac{\Omega_{qf}^2 N_q^2 \sin^2 (2\theta_q)}{ \gamma_q^2(t)}=\frac{\Omega_{qf}^2N_q^2}{ \gamma_q^2(t)} - \frac{\Omega_{qf}^2}{ \gamma_q^2(t)} \left[\gamma_q^2(t)-M_q \right]^2=\frac{\Omega_{qf}^2}{ \gamma_q^2(t)} \left[N_q^2- \gamma_q^4(t)+2M_q \gamma_q^2(t)-M_q^2 \right].
\end{align}
Therefore, using the definition of $\mathcal{F}$ given by Eq.~\eqref{eq:ermakov_com_func}
\begin{align}
    \mathcal{F} &= \frac{1}{4} \left[ \frac{(\dot{\gamma}_q)^2 }{\Omega_{q0}^2} + \gamma_q^2 - 2 + \frac{1}{\gamma_q^2} \right]=\frac{1}{4} \left[ \frac{2M_q}{\zeta_q^2}  -2 +\gamma_q^2\left(1-\frac{1}{\zeta_q^2} \right)  \right],
\end{align}
where we use the relation $M_q^2-N_q^2=A^2C^2=\frac{\Omega^{2}_{q0}}{\Omega_{qf}^2}\equiv \zeta_q^2$. Next,
\begin{align}
    1+\mathcal{F}&=\frac{1}{4}\Big[2+M_{q}\left(1+\frac{1}{\zeta^{2}_{q}}\right)+N_{q}\left(1-\frac{1}{\zeta^{2}_{q}}\right)-2N_{q}\left(1-\frac{1}{\zeta^{2}_{q}}\right)\sin^{2}(\theta_{q})\Big].
\end{align}
In its final form 
\begin{align}
    \lambda_{q}^A(t)&= C_{q}(\beta) \sqrt{\frac{2+M_q+N_q}{4}+\frac{M_q-N_q}{4 \zeta_q^2} + \frac{N_q}{2}\left(\frac{1}{\zeta_q^2} -1\right) \sin^2 \theta_q} =C_{q}(\beta)\sqrt{D_q +E_q \sin^2 \theta_q},\label{eq: symplectic for general protocol}
\end{align}
with $C_q(\beta)$ defined in Eq.~\eqref{eq:C_q}, and $D_q,E_q$ in and above Eq.~\eqref{value_Dq}.

\subsection{Properties of $M_{q},N_{q},D_{q}$ and $E_{q}$. \label{subsec: Properties of Dq}}
We conclude this appendix by collecting the inequalities obeyed by the coefficients parametrizing the post-ramp symplectic eigenvalues.
These inequalities are used in the next two sections to satisfy the assumptions required by the integral theorems.
The inequality $M_{q}\geq \zeta_{q}$ is useful to prove that $D_{q}\geq 1$. The lower bound $N_{q}\geq 0$ is useful to prove $E_{q}\geq 0$. The properties $D_{q}\geq 1$ and $E_{q}\geq 0$ are required to derive the long-time average of of all information-theoretic quantities of interest.

\begin{theorem}
\label{Th: properties of M_q,N_q,D_q,E_q}
Let $v,q,\omega,J_f>0$, and suppose that $J(t)=J_f$ for all $t\geq t_f$. 
Let $\Omega_{q0}$, $\Omega_{qf}$, $\zeta_q$, $M_q$, and $N_q$ be defined as in Theorem~\ref{Th:ermakov solution after quench}. 
Let $D_q$ and $E_q$ denote the coefficients entering the post-ramp symplectic eigenvalue, as defined in and above Eq.~\eqref{value_Dq}.
Then, for every mode $q$, 
\begin{align}
    M_q \geq\zeta_q \Longrightarrow    D_q\geq 1 \quad, 
    \quad N_q\geq 0 \Longrightarrow 
    E_q\geq 0.
\end{align}
\end{theorem}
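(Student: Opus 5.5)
The plan is to prove the two premises ($M_q\geq\zeta_q$ and $N_q\geq 0$) directly from the definitions in Theorem~\ref{Th:ermakov solution after quench}, and then derive the two implications from the explicit formulas for $D_q$ and $E_q$.

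\textbf{The premises.} Write $A=\gamma_q(t_f)>0$, $B=\dot\gamma_q(t_f)/\Omega_{qf}$, $C=\zeta_q/\gamma_q(t_f)$, as in the proof of Theorem~\ref{Th:ermakov solution after quench}. Then
\begin{equation*}
M_q=\tfrac12(A^2+B^2+C^2)\;\geq\;\tfrac12(A^2+C^2)\;\geq\; AC=\zeta_q
\end{equation*}
by AM--GM. The quantity $N_q$ was constructed as $\sqrt{X^2+Y^2}$, so it is real and nonnegative. The identity $M_q^2-N_q^2=A^2C^2=\zeta_q^2$, used in Appendix~\ref{symplectic_eigenvalues}, also shows that $N_q=\sqrt{M_q^2-\zeta_q^2}$ is well defined precisely because $M_q\geq\zeta_q\geq 0$.

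\textbf{$E_q\geq 0$.} Since $0<\zeta_q=vq/\sqrt{(vq)^2+4\omega J_f}<1$ whenever $J_f>0$, we have $1/\zeta_q^2-1>0$. Hence $E_q=\tfrac{N_q}{2}(1/\zeta_q^2-1)\geq 0$ follows immediately from $N_q\geq 0$.

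\textbf{$D_q\geq 1$.} Here I use the closed form from Table~\ref{Table_general_quench}. Set $P=M_q+N_q$. From $M_q^2-N_q^2=\zeta_q^2$ we get $M_q-N_q=\zeta_q^2/P$. Substituting into the expression in Eq.~\eqref{eq: symplectic for general protocol},
\begin{equation*}
D_q=\frac{2+P}{4}+\frac{\zeta_q^2/P}{4\zeta_q^2}=\frac{(P+1)^2}{4P}.
\end{equation*}
So $D_q-1=(P-1)^2/(4P)\geq 0$, provided $P>0$. Positivity of $P$ follows from $P=M_q+N_q\geq M_q\geq\zeta_q>0$, which is exactly where the hypothesis $M_q\geq\zeta_q$ enters.

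I would also confirm that the closed form agrees with Eq.~\eqref{value_Dq}. Substituting $E_q$ there gives
\begin{equation*}
D_q=\tfrac12+\tfrac{M_q}{4}(1+\zeta_q^{-2})-\tfrac{N_q}{4}(\zeta_q^{-2}-1),
\end{equation*}
which matches the expression above term by term.

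\textbf{Main obstacle.} The only delicate step is the algebraic reduction of $D_q$ to $(P+1)^2/(4P)$. Everything hinges on the identity $M_q^2-N_q^2=\zeta_q^2$, so I would first re-verify that identity from the definitions of $A,B,C$:
\begin{equation*}
M^2-N^2=\tfrac14\big[(A^2+B^2+C^2)^2-(A^2-B^2-C^2)^2\big]-A^2B^2=A^2(B^2+C^2)-A^2B^2=A^2C^2.
\end{equation*}
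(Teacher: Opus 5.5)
Your proof is correct and follows the paper's argument: AM--GM gives $M_q\geq\zeta_q$, then $N_q\geq 0$ together with $0<\zeta_q<1$ gives $E_q\geq 0$, and $D_q\geq 1$ follows by writing $D_q-1$ as a perfect square over a positive denominator. Your variable $P=M_q+N_q$ is exactly the paper's $\zeta_q e^{r_q}$ under its hyperbolic parametrization $M_q=\zeta_q\cosh r_q$, $N_q=\zeta_q\sinh r_q$, so your $(P-1)^2/(4P)$ coincides with the paper's $(\zeta_q e^{r_q/2}-e^{-r_q/2})^2/(4\zeta_q)$.
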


\begin{proof}
We first prove the bounds on $M_q$ and $N_q$. From the definition of $M_q$,
\begin{align}
     M_q   &=     \frac{1}{2}     \left[
     \gamma_q^2(t_f)
     +
     \left(\frac{\dot \gamma_q(t_f)}{\Omega_{qf}}\right)^2
     +
     \left(\frac{\zeta_q}{\gamma_q(t_f)}\right)^2
     \right] \geq
     \frac{1}{2}
     \left[
     \gamma_q^2(t_f)
     +
     \left(\frac{\zeta_q}{\gamma_q(t_f)}\right)^2
     \right].
\end{align}
By the arithmetic--geometric mean inequality, $ M_q\geq
     \sqrt{
     \gamma_q^2(t_f)
     \left(
     \frac{\zeta_q}{\gamma_q(t_f)}
     \right)^2
     }
     =
     \zeta_q$.
Moreover, since $ N_q=\sqrt{M_q^2-\zeta_q^2}$, it follows immediately that $N_q\geq 0$.
Next, we prove the lower bound on $D_q$.
As $M_q\geq \zeta_q$, we may write $M_q=\zeta_q\cosh r_q$, and $N_q=\sqrt{M_q^2-\zeta_q^2}=\zeta_q\sinh r_q$,
for some $r_q\geq 0$.
Substituting this parametrization into Eq.~\eqref{value_Dq}, we obtain
\begin{align}
    D_q
    &=
    \frac{1}{2}
    +
    \frac{(1+\zeta_q^2)\cosh r_q
    +
    (\zeta_q^2-1)\sinh r_q}{4\zeta_q}
    =
    \frac{1}{2}
    +
    \frac{\zeta_q^2 e^{r_q}+e^{-r_q}}{4\zeta_q}=    1+   \frac{    \left(  \zeta_q e^{r_q/2}-
    e^{-r_q/2}
    \right)^2
    }{4\zeta_q}
    \geq 1.
\end{align}
Finally, note that from before Eq.~\eqref{value_Dq}, we have $E_q=\displaystyle \frac{N_q}{2} \left(\frac{1}{\zeta_q^2}-1 \right)$.
Since $0<\zeta_q\leq 1$ and $N_q\geq 0$, we have
$E_q\geq 0$.

\end{proof}

\section{Long-time averaged mutual information (MI) \label{Sec: long time MI}}
In this section, we present exact results for the time-averaged (MI) at finite temperature after a general protocol.
The sudden quench scenario and the zero-temperature limit correspond to special cases of this general result.
After proving the core integral formula of Theorem~\ref{th: integral formula for MI}, we proceed to establish the general-protocol MI expressions of Table~\ref{Table_general_quench} (Appendix~\ref{sec: MI formula for general protocol at finite temperature}) and the sudden-quench MI expressions of Table~\ref{Results_sudden_quench_table} (Appendix~\ref{sec: MI zero temperature limit and sudden quench}).
Next, in Appendix~\ref{scaling_arbitrary_protocol}, we provide the calculation of the zero-temperature thermodynamic-limit scaling coefficient of MI quoted in Sec.~\ref{Sec: TDL scaling}.
Finally, we prove that the long-time-averaged MI decreases monotonically with temperature for an arbitrary protocol, the statement invoked in Sec.~\ref{subsec:ln_dynamics} (Appendix~\ref{subsec: MI decays monotonically}).

\subsection{Proof of the core integral formula for computing averaged MI}
\begin{theorem} \label{th: integral formula for MI}
Let $A \geq \dfrac{1}{2}$ and $B \geq 0$. Define the function $S[x]\equiv \left(x+\frac{1}{2}\right)\ln\left(x+\frac{1}{2}\right)-\left(x-\frac{1}{2}\right)\ln\left(x-\frac{1}{2}\right)$. Then
    \begin{align}
        \int_0^{\pi/2} S\left( A\sqrt{1+B \sin^2 u}\right) du &=  \pi A \sqrt{1+B}\left(F \left[\arcsin\left(\frac{1}{2A}\right), \frac{1}{\sqrt{1+B}}\right] -E\left[\arcsin\left(\frac{1}{2A}\right), \frac{1}{\sqrt{1+B}}\right] \right)  \nonumber\\
    &\quad +  \frac{\pi}{2} \left[ 1 + \ln\left( \frac{ \sqrt{4A^2(1+B) - 1} + \sqrt{4A^2 - 1} }{ 4 } \right) \right] \ .
    \end{align}
Here, $\displaystyle F(\phi, k), E(\phi, k)$ denotes the incomplete elliptic integral of the first and second kind, respectively
\begin{align}
      F(\phi, k) &= \int_0^{\phi} \frac{d\theta}{\sqrt{1 - k^2 \sin^2 \theta}} \;\;, \;\;
      E(\phi, k) = \int_0^{\phi} \sqrt{1 - k^2 \sin^2 \theta}\, d\theta \ .
\end{align}
\end{theorem}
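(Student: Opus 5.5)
The plan is to split $S$ into a logarithmic piece and an $\operatorname{arccoth}$ piece. The logarithmic piece gives the $\ln$ term of the statement via a classical integral. The $\operatorname{arccoth}$ piece, after an auxiliary integral representation and a change of integration order, reduces to the incomplete elliptic integrals. Throughout set $y(u)\equiv 2x(u)=2A\sqrt{1+B\sin^2u}\ge 2A\ge 1$.

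\textbf{Step 1 (algebraic split).} Rewrite, for $x>1/2$,
\begin{align}
S[x]=x\ln\frac{x+\frac12}{x-\frac12}+\frac12\ln\!\Big(x^2-\frac14\Big)
= y\,\operatorname{arccoth}(y)+\frac12\ln\!\Big(x^2-\frac14\Big).
\end{align}
The case $A=1/2$ is included: the singularity at $u=0$ is only logarithmic, so it is integrable, and the formula follows by continuity in $A$.

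\textbf{Step 2 (logarithmic term).} Write $A^2(1+B\sin^2u)-\tfrac14=a^2\cos^2u+b^2\sin^2u$ with
\begin{align}
a^2=A^2-\tfrac14,\qquad b^2=A^2(1+B)-\tfrac14.
\end{align}
The classical identity $\int_0^{\pi/2}\ln(a^2\cos^2u+b^2\sin^2u)\,du=\pi\ln\frac{a+b}{2}$ then gives
\begin{align}
\frac{\pi}{2}\ln\frac{\sqrt{4A^2-1}+\sqrt{4A^2(1+B)-1}}{4}.
\end{align}
This is exactly the logarithm appearing in the statement.

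\textbf{Step 3 (arccoth term).} For $y\ge1$ use the representation
\begin{align}
y\,\operatorname{arccoth}y=\int_0^1\frac{y^2}{y^2-t^2}\,dt .
\end{align}
The integrand is nonnegative, so Tonelli lets us integrate over $u$ first. We have
\begin{align}
\int_0^{\pi/2}\frac{y^2}{y^2-t^2}\,du=\frac{\pi}{2}+t^2\int_0^{\pi/2}\frac{du}{(4A^2-t^2)\cos^2u+(4A^2(1+B)-t^2)\sin^2u}.
\end{align}
The remaining elementary integral equals $\frac{\pi}{2}\big[(4A^2-t^2)(4A^2(1+B)-t^2)\big]^{-1/2}$. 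The constant $\pi/2$, integrated over $t\in[0,1]$, produces the "$1$" inside the bracket $\frac{\pi}{2}[1+\ln(\cdots)]$.

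\textbf{Step 4 (elliptic reduction).} It remains to evaluate
\begin{align}
\frac{\pi}{2}\int_0^1\frac{t^2\,dt}{\sqrt{(\alpha^2-t^2)(\beta^2-t^2)}},\qquad \alpha=2A,\quad \beta=2A\sqrt{1+B}.
\end{align}
Substitute $t=\alpha\sin\varphi$, so the upper limit becomes $\varphi_{\max}=\arcsin(1/(2A))$, and set $k=\alpha/\beta=(1+B)^{-1/2}$. The integral becomes
\begin{align}
\frac{\pi}{2}\,\beta\int_0^{\varphi_{\max}}\frac{k^2\sin^2\varphi}{\sqrt{1-k^2\sin^2\varphi}}\,d\varphi=\frac{\pi}{2}\,\beta\,\big(F[\varphi_{\max},k]-E[\varphi_{\max},k]\big).
\end{align}
Since $\beta/2=A\sqrt{1+B}$, this is $\pi A\sqrt{1+B}\,(F-E)$, the first term of the statement.

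\textbf{Main obstacle.} The delicate part is Step 3: choosing an integral representation of $\operatorname{arccoth}$ whose $u$-integral is elementary, and justifying the interchange of integrals. Positivity of the integrand handles the interchange. The only boundary issues arise at $A=1/2$ or $B=0$, where some factors become singular or degenerate; those cases follow from the generic formula by continuity.
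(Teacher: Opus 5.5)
Your proof is correct, and it takes a genuinely different route from the paper. I checked the split $S[x]=y\operatorname{arccoth}y+\frac12\ln(x^2-\frac14)$, the identification of $a^2$ and $b^2$, the representation $y\operatorname{arccoth}y=\int_0^1 y^2/(y^2-t^2)\,dt$, and the substitution producing $\beta k^2\sin^2\varphi/\sqrt{1-k^2\sin^2\varphi}$; all are right, and the pieces reassemble exactly into the stated right-hand side.

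The paper never splits $S$. Instead it differentiates $G(B)=\int_0^{\pi/2}S[A\sqrt{1+B\sin^2u}]\,du$ with respect to $B$, so that only $S'(x)=\ln\frac{2x+1}{2x-1}$ appears. It writes this logarithm as $\frac{1}{2s}\ln\frac{2As+1}{2As-1}=\int_0^{1/(2A)}\frac{dt}{s^2-t^2}$ with $s=\sqrt{1+B\sin^2u}$, which is your representation after rescaling $t$, and swaps the integrals to express $dG/dB$ through $F$ and $E$. It must then integrate back in $B$. That step uses an antiderivative of the elliptic integrals with respect to the modulus, quoted without derivation. It also passes through an $\operatorname{arctanh}$ of an argument larger than one, absorbing an $i\pi/2$ into the constant, and fixes that constant from $G(0)=\frac{\pi}{2}S(A)$.

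Your route avoids all of this. The logarithmic piece is settled in one line by the classical $\int_0^{\pi/2}\ln(a^2\cos^2u+b^2\sin^2u)\,du=\pi\ln\frac{a+b}{2}$. The $\operatorname{arccoth}$ piece, after Tonelli and $\int_0^{\pi/2}du/(p\cos^2u+q\sin^2u)=\pi/(2\sqrt{pq})$, becomes a single $t$-integral that $t=2A\sin\varphi$ turns directly into $F-E$. The result is self-contained and free of integration constants. It also shows where each term comes from: the logarithm comes from $\frac12\ln(x^2-\frac14)$, while the ``$1$'' and the elliptic term come from $x\ln\frac{x+1/2}{x-1/2}$.

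The paper's differentiation trick buys something different: it only needs an integral representation of the simple function $S'$, not a split of $S$ itself, but pays for it with the integration-back step.

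One small bonus of your method: it works directly at $A=\frac12$ with $B>0$, since both split pieces remain separately integrable and $F$, $E$ become complete integrals with $k<1$. Continuity is therefore only needed for the degenerate corner $A=\frac12$, $B=0$, where the right-hand side is itself an $\infty-\infty$ limit.
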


\begin{proof}
For simplicity, we define $\alpha(B,u)=\sqrt{1+B \sin^2 u}$ and 
\begin{align}
    G(B) = \int_0^{\pi/2} S\left[ A\alpha(B,u)\right] du \ . \label{eq: G(B) integral form for MI}
\end{align}
For proper definition of the integrals below, we assume $A>1/2$ and $B>0$. By the Leibniz integral rule, we have
\begin{align}
    \frac{dG}{dB} &= \int_0^{\pi/2} \; S'(A\alpha) \frac{\partial(A\alpha)}{\partial B}du= \int_0^{\pi/2} \ln\left(\frac{2A \alpha + 1}{2A \alpha - 1}\right) \frac{A \sin^2 u}{2\alpha} du.
\end{align}
where we use the identity $\displaystyle \frac{dS}{dx} = \ln\left( \frac{2x + 1}{2x - 1} \right)$ in the last line.
Next, observe that
\begin{align}
\int \frac{dt}{\alpha^2 - t^2}
=
\frac{1}{2\alpha}
\ln\left|
\frac{\alpha + t}{\alpha - t}
\right|  \longrightarrow \int_0^{1/(2A)} \frac{dt}{\alpha^2 - t^2}  = \frac{1}{2\alpha} \ln\left(\frac{2A \alpha + 1}{2A \alpha - 1}\right),
\end{align}
where we assume $2A\alpha>1$, which is verified since $B>0$ and $A>1/2$. Thus, the above expression can be rewritten as
\begin{align}
       \frac{dG}{dB}
    &=A\int_0^{1/(2A)}  \int_0^{\pi/2} \frac{\sin^2 u}{1+B \sin^2u - t^2} \;du \; dt,
\end{align}
where we use the Fubini–Tonelli theorem to swap the order of integration, as the integrand is a nonnegative bounded function.
Using the identity:
\begin{align}
    \int_0^{\pi/2} \frac{\sin^2 u}{c^2 + B\sin^2 u} du = \frac{\pi}{2B} \left( 1 - \frac{c}{\sqrt{c^2+B}} \right) \;\; \text{ for } c > 0 \ ,
\end{align}
It reduces to
\begin{align}
    \frac{dG}{dB} 
    &=\frac{\pi A}{2B}  \int_0^{1/(2A)}  \left(1-\frac{\sqrt{1-t^2}}{\sqrt{1-t^2+B}}\right) \;dt =\frac{\pi A}{2B}   \left(\frac{1}{2A}- \int_0^{\phi} \frac{\cos^2 \psi}{\sqrt{1+B-\sin^2 \psi}} \; d\psi\right),
\end{align}
where we define $t=\sin \psi$ and  $\phi = \arcsin\left[1/(2A)\right]$. Furthermore, let $k=1/\sqrt{1+B}$, then it becomes
\begin{align}
      \frac{dG}{dB}
    &=\frac{\pi A}{2B}   \left(\frac{1}{2A}- k\int_0^{\phi} \frac{\cos^2 \psi}{\sqrt{1-k^2\sin^2 \psi}} \; d\psi\right) \\
    &=\frac{\pi A}{2B}   \left(\frac{1}{2A}- k\int_0^{\phi} \frac{1}{k^2}\frac{k^2- k^2 \sin^2 \psi-1+1}{\sqrt{1-k^2 \sin^2 \psi}} \; d\psi\right)\\
    &=\frac{\pi A}{2B}   \left(\frac{1}{2A}- \frac{1}{k}\int_0^{\phi}\left[\sqrt{1-k^2 \sin^2 \psi}+\frac{k^2-1}{\sqrt{1-k^2 \sin^2 \psi}}\right] \; d\psi\right) \ .
\end{align}
By the definition of incomplete elliptic integrals of the first $E(\phi,k)$ and second kind $F(\phi,k)$, it is
\begin{align}
           \frac{dG}{dB} &= \frac{\pi}{4B} + \frac{\pi A}{2} \left[ \frac{1}{\sqrt{1+B}} \, F\left(\phi, \frac{1}{\sqrt{1+B}}\right) -\frac{\sqrt{1+B}}{B}E\left(\phi, \frac{1}{\sqrt{1+B}}\right) \right] \ , \label{eq: dG/dB for MI}
\end{align}
We now use the following identities
\begin{align}
    &\quad\int \left[\frac{1}{\sqrt{1+B}}F\left(\phi, \frac{1}{\sqrt{1+B}}\right) - \frac{\sqrt{1+B}}{B}E\left(\phi, \frac{1}{\sqrt{1+B}}\right) \right]  \; dB \nonumber\\
    &=2\sqrt{1+B}\left[F\left(\phi, \frac{1}{\sqrt{1+B}}\right) -E\left(\phi, \frac{1}{\sqrt{1+B}}\right) \right] + 2 \sin \phi  \operatorname{arctanh} \left(\frac{\sqrt{B+\cos^2\phi}}{\cos\phi}\right) \\
    &= 2\sqrt{1+B}\left[F\left(\phi, \frac{1}{\sqrt{1+B}}\right) -E\left(\phi, \frac{1}{\sqrt{1+B}}\right) \right] + \frac{1}{A}  \operatorname{arctanh} \left(\frac{\sqrt{4A^2(1+B) - 1}}{\sqrt{4A^2 - 1}}\right) \ , \label{eq: integral of elliptic for MI}
\end{align}
where we use the fact that $ \displaystyle\sin \phi =\frac{1}{2A} , \;\; \cos \phi  = \frac{\sqrt{4A^2 - 1}}{2A}$ in the last line.
Combining Eq.~\eqref{eq: dG/dB for MI} and Eq.~\eqref{eq: integral of elliptic for MI},  we have
\begin{align}
    G(B)
    &=\frac{\pi}{4} \ln B +\pi A \sqrt{1+B}\left(F \left[\phi, \frac{1}{\sqrt{1+B}}\right] -E\left[\phi, \frac{1}{\sqrt{1+B}}\right] \right) + \frac{\pi}{2}\operatorname{arctanh} \left(\frac{\sqrt{4A^2(1+B) - 1}}{\sqrt{4A^2 - 1}}\right) +C_1,
\end{align}
where $C_1$ is the integration constant.
Note that for $x/y > 1$, we have $\operatorname{arctanh}\left(\frac{x}{y}\right) = \ln(x+y)  -\frac{1}{2}  \ln(x^2-y^2) +\frac{i \pi}{2}
$.
Hence, 
\begin{align}
      G(B)
    &=\frac{\pi}{4} \ln B +\pi A \sqrt{1+B}\left(F \left[\phi, \frac{1}{\sqrt{1+B}}\right] -E\left[\phi, \frac{1}{\sqrt{1+B}}\right] \right) \nonumber\\
    &\quad + \frac{\pi}{2} \left[  \ln \left( \sqrt{4A^2(1+B) - 1} + \sqrt{4A^2 - 1} \right) - \frac{1}{2} \ln(4A^2 B)\right] +C_1\\
    &=\pi A \sqrt{1+B}\left(F \left[\phi, \frac{1}{\sqrt{1+B}}\right] -E\left[\phi, \frac{1}{\sqrt{1+B}}\right] \right)  \nonumber\\
    &\quad + \frac{\pi}{2} \left[  \ln \left( \sqrt{4A^2(1+B) - 1} + \sqrt{4A^2 - 1} \right) \right] -\frac{\pi}{2} \ln(2A) +C_1, \label{eq: G(B) for MI with integration constant}
\end{align}
where we absorb the constant phase $i\pi/2$ into  $C_1$.
Finally, the remaining task is to compute the integration constant $C_1$.
Using Eq.~\eqref{eq: G(B) integral form for MI} and Eq.~\eqref{eq: G(B) for MI with integration constant}, we have
\begin{align}
     G(0)&=\int_0^{\pi/2} S(A) \, du = \frac{\pi}{2} S(A) \ ,\\ 
     G(0)&=\pi A\left(F \left[\phi, 1\right] -E\left[\phi,1\right] \right)  + \frac{\pi}{2} \left[  \ln \left( 2\sqrt{4A^2 - 1} \right) \right] -\frac{\pi}{2} \ln(2A) +C_1  \ .
\end{align}
Equating  both expressions gives $C_1 
    =\frac{\pi}{2} \left[ 1 + \ln\left(\frac{A}{2}\right) \right]$.
Combining all results, we get
\begin{align}
       G(B) 
  &=\pi A \sqrt{1+B}\left(F \left[\arcsin\left(\frac{1}{2A}\right), \frac{1}{\sqrt{1+B}}\right] -E\left[\arcsin\left(\frac{1}{2A}\right), \frac{1}{\sqrt{1+B}}\right] \right)  \nonumber\\
    &\quad +  \frac{\pi}{2} \left[ 1 + \ln\left( \frac{ \sqrt{4A^2(1+B) - 1} + \sqrt{4A^2 - 1} }{ 4 } \right) \right] \ ,
\end{align}
where we substitute back $\phi = \arcsin\left[1/(2A)\right]$. The case $A=1/2$ and/or $B=0$ follows by continuity.
\end{proof}

\subsection{Averaged MI formula for a general protocol at finite and zero temperature \label{sec: MI formula for general protocol at finite temperature}}
In this section, we derive the expression for the MI following a finite-ramp protocol at finite temperature.
In the next section, we will recover the expression for the MI after a general protocol in the zero-temperature limit, as well as the sudden quench at finite temperature and its zero-temperature limit, which are special cases of the present formulation. First, recall the expression for the MI at time $t$ after a finite-ramp protocol:
      \begin{align}
           I(A:B,t) &= 2\sum_{q>0}\left\{\mathcal{S}[\lambda_q(t,\beta)]-\mathcal{S}[\lambda_q^+(0)]\right\},\nonumber\\
               \mathcal{S}[x] &\equiv \left(x+\frac{1}{2}\right)\ln\left(x+\frac{1}{2}\right)-\left(x-\frac{1}{2}\right)\ln\left(x-\frac{1}{2}\right),\nonumber\\
               \lambda_{q}(t,\beta)&= \frac{1}{2}\sqrt{1+\mathcal{F}[\gamma_q(t)]}\coth\left(\frac{\beta \hbar \Omega_{q0}}{2}\right),\quad
                  \ \lambda_q^{+}(0) = \frac{1}{2}\coth\left(\frac{\beta\hbar\Omega_{q0}}{2}\right)=: C_q, \qquad C_q'= C_q\sqrt{D_q}, \label{eq: Cq'}
      \end{align}
   We are interested in computing the asymptotic long-time average value of the MI, which is defined as
   \begin{align}
    \overline{I(A:B)}= \lim_{t_1 \to \infty} \frac{1}{t_1} \int_0^{t_1}  I(A:B, u)  \; du,
    \end{align}
where $u$ denotes the time parameter.
From Eq.~\eqref{eq: symplectic for general protocol}, the symplectic eigenvalue $\lambda_{q}(t,\beta)$ after a general protocol is given by
\begin{align}
    \lambda_{q}^A(t)&=C_q'\times\sqrt{1 +\frac{E_q}{D_{q}} \sin^2 \theta_q}.
\end{align}
Since $\lambda_{q}(t,\beta)$ is a periodic function, the asymptotic long-time average value of the MI can be simplified to
\begin{align}
   \overline{I(A:B)} 
    &= 2 \sum_{q >0}  \left\{ \frac{2}{\pi}\int_0^{\pi/2}  \left[\mathcal{S}\left[\lambda_q(u)\right]  - \mathcal{S}[\lambda_q^+(0)] \right] \; du \right\} \label{eq: effective time averaged MI}=\frac{4}{\pi} \sum_{q >0}  \left\{  \left[\int_0^{\pi/2} \mathcal{S}\left[\lambda_q(u)\right] \; du \right]  - \frac{\pi}{2} \mathcal{S}[\lambda_q^+(0)]   \right\} \ .
\end{align}
\begin{enumerate}
    \item Observe that the first term is of the form:
    \begin{align}
        \int_0^{\pi/2} \mathcal{S}\left[\lambda_q(u)\right] \; du &=   \int_0^{\pi/2} \mathcal{S}\left[ C_q' \times \sqrt{1 +\frac{E_q}{D_q} \sin^2 u} \right] \; du  \ .
    \end{align}
    From Theorem.~\ref{Th: properties of M_q,N_q,D_q,E_q}, we know that $D_q \geq 1$. 
    So, $\displaystyle C_q' \geq  C_q \geq\frac{1}{2}$.
    Combining this observation with Theorem.~\ref{th: integral formula for MI}, we obtain
    \begin{align}
         \int_0^{\pi/2} \mathcal{S}\left[\lambda_q(u)\right] \; du &=  \pi C_q' \sqrt{1+\frac{E_q}{D_q}}\left(F \left[\arcsin\left(\frac{1}{2C_q'}\right), \frac{1}{\sqrt{1+\frac{E_q}{D_q}}}\right] -E\left[\arcsin\left(\frac{1}{2C_q'}\right), \frac{1}{\sqrt{1+\frac{E_q}{D_q}}}\right] \right)  \nonumber\\
    &\quad +  \frac{\pi}{2} \left[ 1 + \ln\left( \frac{ \sqrt{4C_q'^2(1+\frac{E_q}{D_q}) - 1} + \sqrt{4C_q'^2 - 1} }{ 4 } \right) \right] \ , 
    \end{align}
    \item 
    The second term is a time-independent constant, which is given by $\mathcal{S}[\lambda_q^+(0)]=  C_q\ln\left(\frac{2C_q+1}{2C_q-1}\right)+ \frac{1}{2}\ln\left(\frac{4C_q^2 - 1}{4}\right)$.
\end{enumerate}
Combining these two terms, we get
\begin{align}
     \overline{I(A:B)} 
    &=4\sum_{q >0}  \left\{ C_q' \sqrt{1+\frac{E_q}{D_q}}\left(F \left[\arcsin\left(\frac{1}{2C_q'}\right), \sqrt{\frac{D_q}{D_q+E_q}}\right] -E\left[\arcsin\left(\frac{1}{2C_q'}\right), \sqrt{\frac{D_q}{D_q+E_q}}\right] \right) \right. \nonumber\\
    &\qquad  \qquad \left. + \frac{1}{2} \left[ 1 + \ln\left( \frac{ \sqrt{4(C_q')^2(1+E_q/D_q) - 1} + \sqrt{4(C_q')^2 - 1} }{ 2\sqrt{4C_q^2 - 1} } \right) \right]  -\frac{1}{2}C_q\ln\left(\frac{2C_q+1}{2C_q-1}\right)  \right\}  \ .\label{eq: MI for general protocol at finite T}
\end{align}
Thus, we prove the first MI expression stated in the Table.~\ref{Table_general_quench}.

\textbf{Zero-temperature limit---}
For the finite-ramp protocol in the zero-temperature limit, we have $\lim_{\beta \to \infty} C_q = \frac{1}{2}$ and $\lim_{\beta \to \infty} C_q'= \frac{1}{2}\sqrt{D_q}$, where we recall Eq.~\eqref{eq: Cq'}.
So, Eq.~\eqref{eq: MI for general protocol at finite T} becomes
\begin{align*}
    \lim_{C_q \to 1/2}    \overline{I(A:B, t)} &\to 2\sum_{q >0} \left\{ \sqrt{D_q + E_q}\left(F \left[\arcsin\left(\frac{1}{\sqrt{D_q}}\right), \sqrt{\frac{D_q}{D_q+E_q}}\right] - E\left[\arcsin\left(\frac{1}{\sqrt{D_q}}\right), \sqrt{\frac{D_q}{D_q+E_q}}\right] \right) \right. \nonumber\\
    &\qquad \left. + 1 + \ln\left( \sqrt{D_q + E_q - 1} + \sqrt{D_q - 1} \right)  \right\} -\lim_{C_q \to 1/2} \left\{\ln \left( 2\sqrt{4C_q^2 - 1}\right)  +C_q\ln\left(\frac{2C_q+1}{2C_q-1}\right) \right\} .
\end{align*}
The remaining task is to evaluate the last two terms.
Directly expanding it gives
\begin{align}
    \ln \left( 2\sqrt{4C_q^2 - 1}\right)  +C_q\ln\left(\frac{2C_q+1}{2C_q-1}\right) &= \ln2 + \frac{1}{2} \ln \left(2C_q+1\right)+\frac{1}{2} \ln \left(2C_q-1\right) \nonumber\\
    &\qquad+C_q\ln \left(2C_q+1\right)- C_q \ln \left(2C_q-1\right).\\
    \lim_{C_q \to 1/2} \left\{\ln \left( 2\sqrt{4C_q^2 - 1}\right)  +C_q\ln\left(\frac{2C_q+1}{2C_q-1}\right) \right\}&=\ln 2+ 2\times \frac{1}{2} \ln(2) = \ln 4 \ . \label{eq: entropy term at zero temperature limit}
\end{align}
Therefore, Eq.~\eqref{eq: MI for general protocol at finite T} simplifies to
\begin{align}
         \lim_{C_q \to 1/2}    \overline{I(A:B, t)} &= 2\sum_{q >0} \left\{ \sqrt{D_q + E_q}\left(F \left[\arcsin\left(\frac{1}{\sqrt{D_q}}\right), \sqrt{\frac{D_q}{D_q+E_q}}\right] - E\left[\arcsin\left(\frac{1}{\sqrt{D_q}}\right), \sqrt{\frac{D_q}{D_q+E_q}}\right] \right) \right. \nonumber\\
    &\qquad \qquad \left. + 1 + \ln\left( \frac{ \sqrt{D_q + E_q - 1} + \sqrt{D_q - 1} }{ 4 } \right) \right\} \ .
\end{align}
Thus, we prove the second MI expression stated in the Table.~\ref{Table_general_quench}.

\subsection{Averaged MI formula for a sudden quench and finite and zero temperature \label{sec: MI zero temperature limit and sudden quench}}

 For a sudden quench at finite temperature, we simply set $D_q=1, E_q=A_q$.
    So, Eq.~\eqref{eq: MI for general protocol at finite T} reduces to
    \begin{align}
     \overline{I(A:B, \beta)} 
    &=4\sum_{q >0}  \left\{ C_q \sqrt{1+A_q}\left(F \left[\arcsin\left(\frac{1}{2C_q}\right), \frac{1}{\sqrt{1+A_q}}\right] -E\left[\arcsin\left(\frac{1}{2C_q}\right), \frac{1}{\sqrt{1+A_q}}\right] \right) \right. \nonumber\\
    &\qquad  \qquad \left. + \frac{1}{2} \left[ 1 + \ln\left( \frac{ \sqrt{4C_q^2(1+A_q) - 1} + \sqrt{4C_q^2 - 1} }{ 2\sqrt{4C_q^2 - 1} } \right) \right]  -\frac{1}{2}C_q\ln\left(\frac{2C_q+1}{2C_q-1}\right)  \right\}  \ .\label{eq: MI for sudden quench at finite T}
\end{align}

In the zero temperature limit, we have $\lim_{\beta \to \infty} C_q = \frac{1}{2}$.
So, Eq.~\eqref{eq: MI for sudden quench at finite T} transforms to
\begin{align}
    \lim_{C_q \to 1/2}    \overline{I(A:B, t)} &\to 2\sum_{q >0} \left\{ \sqrt{1+A_q}\left( F\left[\frac{\pi}{2}, \frac{1}{\sqrt{1+A_q}}\right] - E\left[\frac{\pi}{2}, \frac{1}{\sqrt{1+A_q}}\right] \right)\right. \nonumber\\
    &\qquad \left. + 1 + \ln\left( \sqrt{A_q} \right)  \right\} -\lim_{C_q \to 1/2} \left\{\ln \left( 2\sqrt{4C_q^2 - 1}\right)  +C_q\ln\left(\frac{2C_q+1}{2C_q-1}\right) \right\} .
\end{align}
Similarly, using Eq.~\eqref{eq: entropy term at zero temperature limit}, we have
\begin{align}
    \lim_{C_q \to 1/2}    \overline{I(A:B, \beta)} &\to \sum_{q >0} \left\{ \ln A_q +2\sqrt{1+A_q}\left( F\left[\frac{\pi}{2}, \frac{1}{\sqrt{1+A_q}}\right] - E\left[\frac{\pi}{2}, \frac{1}{\sqrt{1+A_q}}\right] \right) + 2(1- \ln4) \right\}.
\end{align}
Thus, we obtain the MI formulas stated in the Table.~\ref{Results_sudden_quench_table}.

\subsection{The scaling law of averaged MI for sudden quench at zero temperature \label{scaling_arbitrary_protocol}}

%
We now compute the scaling constant of the averaged MI after a sudden quench at zero-temperature limit and thermodynamic limit.
For simplicity, we define $\displaystyle q=\frac{\sqrt{4\omega J_{f}}}{v}
x$, so that $\displaystyle A_{q}=\frac{1}{4x^{2}(1+x^{2})}$.
Thus, the above MI summand becomes
\begin{align}
{\rm lim}_{\beta\to \infty}\overline{I(A:B,\beta)}&=-\ln[4x^{2}(1+x^{2})]+2\sqrt{1+\frac{1}{4x^{2}(1+x^{2})}}\left[F\left(\frac{\pi}{2},k_{q}\right)-E\left(\frac{\pi}{2},k_{q}\right)\right]+2(1-\ln 4).
\end{align}
Now, let us use the hyperbolic substitution $x=\sinh\theta, dx=\cosh\theta d\theta$.
Since $1+x^{2}=\cosh^{2}\theta$, we get $4x^{2}(1+x^{2})=\sinh^{2}(2\theta)$. Hence, we have the simplification
\begin{align}
A_{q}&=\frac{1}{\sinh^{2}(2\theta)}={\rm csch}^{2}(2\theta).
\end{align}
Therefore, $\ln A_{q}=-2\ln[\sinh(2\theta)]
$. Also, we have $\sqrt{1+A_{q}}=\sqrt{1+{\rm csch}^{2}(2\theta)}=\coth(2\theta)$ since $\theta>0$, so ${\rm coth}(2\theta)>0$. Thus the MI summand becomes
\begin{align}
{\rm lim}_{\beta\to \infty}\overline{I(A:B,\beta)}&=2\ln[\sinh(2\theta)]+2\coth(2\theta)\left[F\left(\frac{\pi}{2},k(\theta)\right)-E\left(\frac{\pi}{2},k(\theta)\right)\right]+2(1-\ln4).
\end{align}
Furthermore, $k_{q}(A_{q})=(1+A_{q})^{-1/2}$. Since $A(\theta)={\rm csch}^{2}(2\theta)$, we get $k(\theta)=[1+\csch^{2}(2\theta)]^{-1/2}={\rm tanh}(2\theta)$.
So the MI integral is ${\rm lim}_{\beta\to \infty}\overline{I(A:B,\beta)}$
\begin{align}
&=\frac{L\sqrt{4\omega J_{f}}}{2\pi v}\int_{0}^{\infty}d\theta\; \cosh\theta\Big\{\ln [\csch^{2}(2\theta)]+2\coth(2\theta)\left[F\left(\frac{\pi}{2},\tanh(2\theta)\right)-E\left(\frac{\pi}{2},\tanh(2\theta)\right)\right]+2(1-\ln 4)\Big\},\\
&=\frac{L\sqrt{4\omega J_{f}}}{2\pi v}(1.201).
\end{align}
Similar calculations can also be done for the LN and R\'enyi-$2$.
Consequently, in the TDL, for a sudden quench, MI, LN, and R{\'e}nyi entropies all collapse onto the single scaling variable $L\sqrt{4\omega J_f}/v$, as mentioned in Sec.~\ref{Sec: TDL scaling}. 

We next provide details why the scaling law for the MI, LN and R{\'e}nyi entropy breaks down for a general protocol. For each mode $q$, the Ermakov equation is given by
\begin{align}
    \ddot{\gamma}_q(t)
+
\left[(vq)^2+4\omega J_f\,\mathcal{P}\!\left(\frac{t}{t_f}\right)\right]\gamma_q(t)
=
\frac{(vq)^2}{\gamma_q^3(t)} .
\end{align}
where $\mathcal{P}$ is the dimensionless protocol.
Define the mass gap $g=\sqrt{4\omega J_f}$, dimensionless frequency/ (or momentum) $x=vq/g$, and the dimensionless time $\tau=gt$.
Substituting these into the Ermakov equation gives
\begin{align}
    g^2\ddot{\gamma}_q(\tau)
+
g^2
\left[
x^2+\mathcal{P}\!\left(\frac{\tau}{g t_f}\right)
\right]\gamma_q(\tau)
=
\frac{g^2x^2}{\gamma_q^3(\tau)}
\end{align}
where we use the relation $\displaystyle \frac{d}{dt}= g\frac{d}{d\tau}$ and $(\cdots)'$ denotes differentiation with respect to $\tau$.
Clearly, at time $t=t_f$, $\gamma_q(t_f)$ is a function of $x$ and $gt_f$.
Therefore, the $D_q$, $E_q$ and the symplectic eigenvalues are all functions of $x$ and $gt_f$.
The integral of those information-theoretic quantities become
\begin{align}
    \frac{L\sqrt{4\omega J_f}}{\pi v} \int f_1[x, \sqrt{4\omega J_f}t_f, C_q(x,J_f)] \;dx ,
\end{align}
where $f_1$ is some known functions.
Using the same substitution $x=vq/\sqrt{4\omega J_{f}}$, we can still factor out the prefactor $   \frac{L\sqrt{4\omega J_f}}{\pi v} $, but the integrand in general is not only a function of $x$. 
For a sudden quench, $\sqrt{4\omega J_f}\;t_f \to 0^+$ so the integrand only depends on $x$ and hence we can claim the linear scaling law.

\subsection{Decay of averaged MI with temperature for arbitrary protocols
\label{subsec: MI decays monotonically}}
\begin{lemma}\label{lemma: xS'(x) decreasing}
Let $x > \dfrac{1}{2}$ and define $f(x)=x\ln\left(\frac{x+1/2}{x-1/2}\right)$. Then $f(x)$ is a strictly decreasing function on $\displaystyle [\frac{1}{2},\infty)$.
\end{lemma}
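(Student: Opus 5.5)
We prove that $f$ is strictly decreasing by showing $f'(x)<0$ for every $x>1/2$. At the endpoint $x=1/2$ the function is $+\infty$, so there the claim is understood via the limit $f(x)\to+\infty$ as $x\to 1/2^+$.

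Differentiating gives
\begin{align}
f'(x)=\ln\left(\frac{x+1/2}{x-1/2}\right)+x\left(\frac{1}{x+1/2}-\frac{1}{x-1/2}\right)=\ln\left(\frac{2x+1}{2x-1}\right)-\frac{4x}{4x^2-1}.
\end{align}
Substitute $y=1/(2x)\in(0,1)$, so that $\frac{2x+1}{2x-1}=\frac{1+y}{1-y}$ and $\frac{4x}{4x^2-1}=\frac{2y}{1-y^2}$. Then
\begin{align}
f'(x)=h(y)\equiv 2\,\mathrm{arctanh}(y)-\frac{2y}{1-y^2}.
\end{align}
It therefore suffices to show $h(y)<0$ on $(0,1)$.

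We have $h(0)=0$. Differentiating,
\begin{align}
h'(y)=\frac{2}{1-y^2}-\frac{2(1-y^2)+4y^2}{(1-y^2)^2}=\frac{2(1-y^2)-2-2y^2}{(1-y^2)^2}=-\frac{4y^2}{(1-y^2)^2}<0
\end{align}
for $y\in(0,1)$. Hence $h(y)<h(0)=0$ on $(0,1)$. As a cross-check, the power series give the same conclusion termwise: $\mathrm{arctanh}\,y=\sum_k y^{2k+1}/(2k+1)$ and $y/(1-y^2)=\sum_k y^{2k+1}$.

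It follows that $f'(x)<0$ for all $x>1/2$, so $f$ is strictly decreasing. The only delicate point is the endpoint $x=1/2$, which is handled by the divergence noted above.
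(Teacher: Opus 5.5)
Your proof is correct and essentially matches the paper's. The paper substitutes $x=\coth(\alpha)/2$, which is your $y=1/(2x)=\tanh\alpha$, and gets $f'(x)=2\alpha-\sinh(2\alpha)$ (exactly your $h(\tanh\alpha)$), concluding from $\sinh z>z$, whereas you close the same inequality via $h(0)=0$ and $h'<0$; your remark that $f$ diverges at the endpoint $x=1/2$ is a reasonable clarification the paper leaves implicit.
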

\begin{proof}
    We calculate the derivative of $f(x)$ as $f'(x) = \ln\left(\frac{2x+1}{2x-1}\right) - \frac{4x}{4x^2-1}$.
     Since $x > 1/2$, let $x = \coth(\alpha)/2$ for some $\alpha > 0$.
     Thus, the derivative can be rewritten as
     \begin{align}
         f'(x) = \ln\left(\frac{\coth\alpha + 1}{\coth\alpha - 1}\right)-\frac{2\coth\alpha}{\coth^2\alpha - 1} 
         &=\ln\left(\frac{\cosh\alpha + \sinh\alpha}{\cosh\alpha - \sinh\alpha}\right)-\frac{2 \cosh\alpha }{\sinh\alpha} \times \sinh^2\alpha\\
         &=\ln(e^{2\alpha}) - 2\cosh\alpha\sinh\alpha =2\alpha - \sinh (2\alpha)
     \end{align}
    Using $\sinh(y) > y$ for all positive $y$, we achieve $f'(x) < 0$.
    Thus, $f(x)$ is a strictly monotonically decreasing function.
\end{proof}

\begin{theorem}[Monotonic decay with temperature]
Let $A\geq1$ and $B >0$. Let $\displaystyle C(T)>\frac12$ be a differentiable function of the temperature $T$ satisfiying  $\displaystyle \frac{\partial C(T)}{\partial T} >0$. Define
\begin{align}
   \overline{I}(T)&= \int_0^{\pi/2} \left[\mathcal{S}\left[C(T)\sqrt{A+B\sin^2u}\right]   - \mathcal{S}\left[C(T) \right]  \right] \; du \ ,
\end{align}    
where $\mathcal{S}[x] \equiv \left(x+\frac{1}{2}\right)\ln\left(x+\frac{1}{2}\right) - \left(x-\frac{1}{2}\right)\ln\left(x-\frac{1}{2}\right)$.
Then $\overline{I}(T)$ is strictly decreasing w.r.t. temperature, i.e. $\displaystyle \frac{d\overline{I}}{dT} < 0$.
\end{theorem}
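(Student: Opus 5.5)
Since $C(T)$ is strictly increasing in $T$, it suffices to differentiate with respect to $T$ via the chain rule, $\frac{d\overline{I}}{dT}=\frac{\partial C}{\partial T}\,\frac{d\overline{I}}{dC}$, and show $\frac{d\overline{I}}{dC}<0$ for every $C>\frac12$. Writing $\alpha(u)=\sqrt{A+B\sin^2u}$ and using $\mathcal{S}'(x)=\ln\frac{2x+1}{2x-1}$ (already used in the proof of Theorem~\ref{th: integral formula for MI}), differentiation under the integral sign gives
\begin{align}
\frac{d\overline{I}}{dC}=\int_0^{\pi/2}\left[\alpha(u)\,\mathcal{S}'\big(C\alpha(u)\big)-\mathcal{S}'(C)\right]du
=\frac{1}{C}\int_0^{\pi/2}\left[f\big(C\alpha(u)\big)-f(C)\right]du ,
\end{align}
where $f(x)=x\ln\frac{x+1/2}{x-1/2}=x\,\mathcal{S}'(x)$ is exactly the function of Lemma~\ref{lemma: xS'(x) decreasing}. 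Differentiation under the integral is justified because $C\alpha(u)\geq C>\frac12$ uniformly on $[0,\pi/2]$, so the integrand and its $C$-derivative are continuous and bounded on a neighbourhood of any fixed $C$.

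The key step is then a pointwise comparison. Since $A\geq1$ and $B>0$, we have $\alpha(u)\geq1$ for all $u$, with $\alpha(u)>1$ for every $u\in(0,\pi/2]$. Hence $C\alpha(u)\geq C$, with strict inequality except at the single point $u=0$ (and only when $A=1$). By Lemma~\ref{lemma: xS'(x) decreasing}, $f$ is strictly decreasing on $(\frac12,\infty)$, so $f(C\alpha(u))-f(C)\leq 0$, with strict inequality on a set of positive measure. The integral is therefore strictly negative, giving $\frac{d\overline{I}}{dC}<0$. Multiplying by $\frac{\partial C}{\partial T}>0$ yields $\frac{d\overline{I}}{dT}<0$.

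The only delicate points are regularity issues rather than the inequality itself. These are (i) the legitimacy of the derivative under the integral, handled by the uniform lower bound $C\alpha>\frac12$, and (ii) the endpoint case $A=1$ at $u=0$, which is a measure-zero set and does not affect strictness. I expect no real obstacle beyond recognising that $\alpha\,\mathcal{S}'(C\alpha)=f(C\alpha)/C$, which reduces everything to Lemma~\ref{lemma: xS'(x) decreasing}. To connect with the main text, I would finally note that the time-averaged MI of Table~\ref{Table_general_quench} is, mode by mode, $\frac{4}{\pi}\overline{I}(T)$ with $A=D_q\geq1$ (by Theorem~\ref{Th: properties of M_q,N_q,D_q,E_q}), $B=E_q$, and $C=C_q(\beta)$, which is increasing in $T$. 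Summing over $q$ then preserves the strict monotonicity, provided some $E_q>0$.
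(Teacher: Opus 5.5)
Your proposal is correct and follows essentially the same route as the paper: apply the chain rule in $C$, rewrite $\frac{d\overline{I}}{dC}$ as $\frac{1}{C}\int_0^{\pi/2}\bigl[f(C\alpha(u))-f(C)\bigr]\,du$ with $f(x)=x\,\mathcal{S}'(x)$, and use the lemma that $f$ is strictly decreasing. Your justification of differentiating under the integral sign, and your strictness argument (the integrand can vanish only at $u=0$, and only when $A=1$), make rigorous the paper's looser claim that the integrand is ``always negative''.
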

  
\begin{proof}
    Let 
    \begin{align}
   \overline{I}(T)&= \int_0^{\pi/2} \left[\mathcal{S}\left[C(T)\sqrt{A+B\sin^2u}\right]   - \mathcal{S}\left[C(T) \right]  \right] \; du \ ,
\end{align}   
Using Leibniz's integral rule, we take the partial derivative of $\overline{I}$ with respect to $C$, i.e.
\begin{align}
    \frac{\partial \overline{I}}{\partial C} &= \int_0^{\pi/2} \left[ \sqrt{A+B\sin^2u} \, \mathcal{S}'\left[C(T)\sqrt{A+B\sin^2u}\right] - \mathcal{S}'[C(T)] \right] du \\
    &=\frac{1}{C(T)}\int_0^{\pi/2} \left[ C(T)\sqrt{A+B\sin^2u} \, \mathcal{S}'\left[C(T)\sqrt{A+B\sin^2u}\right] - C(T)\mathcal{S}'[C(T)] \right] du 
\end{align}
For simplicity, we define $f(x) = x \mathcal{S}'[x]$. So, the above derivative becomes
\begin{align}
    \frac{\partial \overline{I}}{\partial C} &= \frac{1}{C(T)}\int_0^{\pi/2} f\left[ C(T)\sqrt{A+B\sin^2u}\right] - f\left[C(T)\right]  \; du .
\end{align}
Using the definition of $\mathcal{S}[x] $, we can directly compute $f(x) =x \mathcal{S}'[x] =x  \ln\left(\frac{x+1/2}{x-1/2}\right)$,
which is strictly decreasing by Lemma.~\ref{lemma: xS'(x) decreasing}.
Thus, under the assumption that $A\geq1$ and $B >0$, we see that
\begin{align}
    C(T)\sqrt{A+B\sin^2u} \geq C(T)  \longrightarrow f\left[ C(T)\sqrt{A+B\sin^2u}\right] \leq f\left[C(T)\right]
\end{align}
Since the integrand is always negative, we 
have
\begin{align}
    \frac{\partial \overline{I}}{\partial C} &= \frac{1}{C(T)}\int_0^{\pi/2} f\left[ C(T)\sqrt{A+B\sin^2u}\right] - f\left[C(T)\right]  \; du <0 \ .
\end{align}
Finally, by chain rule, we obtain $\displaystyle \frac{dI}{dT} = \frac{\partial \overline{I}}{\partial C} \frac{\partial C}{\partial T} <0$,
where we also use the assumption that $\displaystyle \frac{\partial C(T)}{\partial T} >0$.
\end{proof}
\section{
Long-time averaged logarithmic negativity (LN)\label{Sec: long time LN}}

Here we study the long-time average of LN at finite temperature for a finite-ramp protocol and its sudden quench limit. 
First, we prove a few auxiliary dilogarithm integral identities and the core integral formula of Theorem~\ref{Th: integral formula for LN} in Appendix.~\ref{Sec: proof for LN's integral identiteis}.
Then, we provide the integral representation of the averaged LN for a general finite-ramp protocol quoted in Table~\ref{Table_general_quench} (Appendix~\ref{sec: exact_expression_LN_finite_temp}).
Combining the results from these two subsections, we obtain the closed-form expressions of the LN after a sudden quench at finite (Appendix~\ref{sec: LN for sudden quench at T >0}) and zero temperature (Appendix~\ref{sec: LN for sudden quench at T=0}) mentioned in Table~\ref{Results_sudden_quench_table}.
Next, in Appendix~\ref{sec: threshold temperature}, we derive the threshold temperature of Eq.~\eqref{eq: threshold temperature for sudden quench} used in Sec.~\ref{subsec:threshold_temperature}.
Finally, we explain why the linear scaling with $L\sqrt{4\omega J_f}/v$ found for the sudden quench in Sec.~\ref{Sec: TDL scaling} breaks down for a generic finite-ramp protocol (Appendix~\ref{scaling_arbitrary_protocol}).

\subsection{Proof of the core integral formula for computing averaged LN after a sudden quench \label{Sec: proof for LN's integral identiteis}}
Before we prove the key integral formula, let's first prove two useful integral identities, which (including many other quantities in this section) are directly dependent on the dilogarithm function
\begin{align}
    \operatorname{Li}_2(z)= \sum_{n=1}^{\infty} \frac{z^n}{n^2}\;,\;\; \text{ for } \;\; |z| \leq 1 \ .
\end{align} 
\begin{lemma}
\label{lemma: ln cos}
Let $\;0 \leq A \leq \pi/2$. Then
     \begin{align}
               \int_0^{A}  \ln \left(\cos \phi \right)\; d\phi&=-A\ln2 -\frac{1}{2}\operatorname{Im} \left[\Li_2(-e^{2iA})\right] \ .
        \end{align}
\end{lemma}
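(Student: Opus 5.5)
The plan is to write $\ln\cos\phi$ as the real part of a logarithm and integrate a Fourier series term by term. For $0\le\phi<\pi/2$ we have
\begin{align}
\cos\phi=\frac{e^{i\phi}}{2}\left(1+e^{-2i\phi}\right)=\frac{e^{-i\phi}}{2}\left(1+e^{2i\phi}\right),
\end{align}
so that $\ln\cos\phi=-\ln 2+\operatorname{Re}\ln(1+e^{2i\phi})$. The principal branch is harmless here, because $1+e^{2i\phi}$ has positive real part on $[0,\pi/2)$. Expanding $\ln(1+z)=\sum_{n\ge1}(-1)^{n+1}z^n/n$ with $z=e^{2i\phi}$ gives the classical series
\begin{align}
\ln\cos\phi=-\ln 2-\sum_{n=1}^{\infty}\frac{(-1)^n}{n}\cos(2n\phi).
\end{align}
This series converges for $\phi\neq\pi/2$, and in $L^1$ on $[0,\pi/2]$.

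Next I integrate from $0$ to $A$. Term by term this yields
\begin{align}
\int_0^A\ln\cos\phi\,d\phi=-A\ln2-\sum_{n\ge1}\frac{(-1)^n\sin(2nA)}{2n^2}.
\end{align}
The resulting sum is $\operatorname{Im}\sum_{n\ge1}(-e^{2iA})^n/n^2=\operatorname{Im}\Li_2(-e^{2iA})$, which gives the claimed identity
\begin{align}
\int_0^A\ln\cos\phi\,d\phi=-A\ln2-\tfrac12\operatorname{Im}\Li_2(-e^{2iA}).
\end{align}
The dilogarithm series converges absolutely on $|z|=1$, consistent with the definition recalled just above.

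The main obstacle is justifying the term-by-term integration, since the Fourier series only converges conditionally and the integrand has a logarithmic singularity at $\phi=\pi/2$ when $A=\pi/2$. I would avoid the issue with an Abel/radial limit. Replace $e^{2i\phi}$ by $re^{2i\phi}$ with $r<1$; the series for $\operatorname{Re}\ln(1+re^{2i\phi})$ then converges uniformly, so it can be integrated termwise to give $-\tfrac12\operatorname{Im}\Li_2(-re^{2iA})$. Then let $r\to1^-$:

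\begin{itemize}
\item On the right-hand side, the dilogarithm series converges uniformly on the closed disk (Weierstrass M-test with $\sum 1/n^2$), so $\Li_2$ is continuous there and the limit is $-\tfrac12\operatorname{Im}\Li_2(-e^{2iA})$.
\item On the left-hand side, the bound $|\ln|1+re^{2i\phi}||\le\ln2+|\ln|\cos\phi||$, which holds for $r$ near $1$, is integrable, so dominated convergence applies.
\end{itemize}

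A simpler alternative check is differentiation. Both sides vanish at $A=0$, and $\frac{d}{dA}\operatorname{Im}\Li_2(-e^{2iA})=\operatorname{Im}\bigl[-2i\ln(1+e^{2iA})\bigr]=-2\ln|1+e^{2iA}|=-2\ln(2\cos A)$. This matches the derivative of the left-hand side for $A<\pi/2$, and continuity extends the identity to the endpoint $A=\pi/2$. I would present this derivative argument as the main proof, since it sidesteps series convergence entirely.
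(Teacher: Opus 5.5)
Your proposal is correct. Its first half is exactly the paper's proof: the expansion $\ln\cos\phi=-\ln2-\sum_{n\ge1}(-1)^n\cos(2n\phi)/n$, termwise integration, and the identification $\sum_{n\ge1}(-1)^n\sin(2nA)/n^2=\operatorname{Im}\Li_2(-e^{2iA})$. The only difference is rigour. The paper integrates the conditionally convergent series termwise without comment, and it applies an expansion stated only for $|\phi|<\pi/2$ all the way to $A=\pi/2$. Your Abel regularisation fills that gap. The dominating bound holds because $|1+re^{2i\phi}|^2=(1-r)^2+4r\cos^2\phi\ge4r\cos^2\phi$, so $|1+re^{2i\phi}|\ge\cos\phi$ once $r\ge1/4$. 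The Abel step is not even needed, though. The $L^2$, hence $L^1$, convergence on $[0,\pi/2]$ that you already mention licenses termwise integration over $[0,A]$ on its own. This is because $\{\cos(2n\phi)\}$ is the cosine basis of $L^2[0,\pi/2]$ and $\ln\cos\phi$ is square integrable there.

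The derivative argument you would make primary is a genuinely different route. It replaces the series manipulation by $\frac{d}{dz}\Li_2(z)=-\ln(1-z)/z$ together with the value $\Li_2(-1)=-\pi^2/12\in\mathbb{R}$. It also turns the endpoint $A=\pi/2$ into a pure continuity statement, so it is shorter and handles the singular endpoint more cleanly. It does not avoid convergence questions ``entirely'', however. $\Li_2$ is defined here by its power series on the closed disk, so you need that this series coincides on $|z|=1$, $z\neq1$, with the analytic continuation of $\Li_2$ to $\mathbb{C}\setminus[1,\infty)$ (Abel's theorem). That is what makes differentiating along $z=-e^{2iA}$, $0\le A<\pi/2$, legitimate, and one sentence citing it closes the argument. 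The paper's Fourier route needs only the classical series for $\ln\cos\phi$. It also mirrors the paper's proof of the companion identity for $\int_0^A\ln(B^2+\sin^2\phi)\,d\phi$, which uses the same expand-and-integrate technique.
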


\begin{proof}
     Since we have the following Fourier series expansion for $-\frac{\pi}{2 } < \phi <\frac{\pi}{2 } $: $\ln(\cos \phi) = -\ln 2 - \sum_{n=1}^{\infty} \frac{(-1)^n}{n} \cos(2n\phi) $,
        \begin{align}
                \int_0^{A}  \ln \left(\cos \phi \right)\; d\phi
            &= -A\ln 2  + \sum_{n=1}^{\infty} \frac{(-1)^{n+1}}{2n^2} \sin(2nA) \ .
        \end{align}
        Let $z = -e^{2iA}$, then
        \begin{align}
          z^n &=  (-e^{i2A})^n
= (-1)^n \left[\cos(2nA) + i\sin(2nA)\right], \quad\implies 
\text{Im} \left(z^n\right)= (-1)^n\sin(2nA) .
        \end{align}
        Using this observation,
        we have
        \begin{align}
          \int_0^{A}  \ln \left(\cos \phi \right)\; d\phi
            &= -A\ln 2  - \frac{1}{2}\sum_{n=1}^{\infty} \frac{ \text{Im}\left(z^n\right)}{n^2} = -A\ln 2 -\frac{1}{2}\text{Im} \left[\operatorname{Li}_2(z) \right] \ .
        \end{align}

\end{proof}

\begin{lemma}
\label{lemma: ln(B^2+sin^2)}
    Let $0 \leq A \leq \pi/2$, $B >0$ and set $C= 2B^2 + 1 - 2B\sqrt{B^2 + 1}$. Then
    \begin{align}
              \int_0^{A}  \ln \left(B^2 + \sin^2\phi\right) \;d\phi&=2A \ln \left(\frac{B + \sqrt{B^2 + 1}}{2}\right) - \operatorname{Im} \left[\Li_2 \left(Ce^{2iA} \right)\right] \ .
    \end{align}
\end{lemma}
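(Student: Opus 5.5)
The plan is to mirror the Fourier-series argument of Lemma~\ref{lemma: ln cos}. The integrand $\ln(B^2+\sin^2\phi)$ is factored into a constant term and the logarithm of a perfect square modulus, $|1-Ce^{2i\phi}|^2$.

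Write $\sin^2\phi=(1-\cos 2\phi)/2$, so that
\[
B^2+\sin^2\phi=\tfrac12\left(2B^2+1-\cos 2\phi\right).
\]
The aim is to match this with
\[
|1-Ce^{2i\phi}|^2=1+C^2-2C\cos 2\phi
\]
up to a positive factor $K$. That requires $K(1+C^2)=(2B^2+1)/2$ and $2KC=1/2$. Eliminating $K$ gives the quadratic $C^2-2(2B^2+1)C+1=0$. Its root in $(0,1)$ is exactly $C=2B^2+1-2B\sqrt{B^2+1}$, the constant in the statement.

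Note that $C=(\sqrt{B^2+1}-B)^2$, so $1/C=(B+\sqrt{B^2+1})^2$. This gives $K=1/(4C)$, and $\ln K=2\ln\bigl((B+\sqrt{B^2+1})/2\bigr)$. Hence
\[
\ln(B^2+\sin^2\phi)=2\ln\!\left(\frac{B+\sqrt{B^2+1}}{2}\right)+2\,\mathrm{Re}\ln\!\left(1-Ce^{2i\phi}\right).
\]
Integrating the constant over $[0,A]$ produces the first term $2A\ln\bigl((B+\sqrt{B^2+1})/2\bigr)$ of the claim.

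For the remaining term, since $0<C<1$ for $B>0$, I will expand
\[
\mathrm{Re}\ln(1-Ce^{2i\phi})=-\sum_{n\ge1}\frac{C^n}{n}\cos(2n\phi).
\]
This series converges uniformly in $\phi$ because $C<1$, so term-by-term integration is justified. It gives
\[
2\int_0^A \mathrm{Re}\ln(1-Ce^{2i\phi})\,d\phi=-\sum_{n\ge1}\frac{C^n}{n^2}\sin(2nA)=-\mathrm{Im}\,\Li_2\!\left(Ce^{2iA}\right),
\]
using the series definition of $\Li_2$, valid since $|Ce^{2iA}|\le 1$. This completes the identity.

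The main obstacle is the algebra in the first step: choosing the correct root of the quadratic (the one with $|C|<1$) and verifying that $\ln K$ collapses to the stated logarithm. Everything else is a direct repeat of the proof of Lemma~\ref{lemma: ln cos}. The restriction $0\le A\le\pi/2$ is not actually needed here, since the expansion holds for all real $\phi$ when $C<1$.
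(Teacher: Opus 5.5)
Your proof is correct and takes essentially the same route as the paper. Both arguments write $B^2+\sin^2\phi=\frac{1}{4C}\left(1-2C\cos 2\phi+C^2\right)$, expand the logarithm as a cosine series in powers of $C$, and integrate term by term to obtain $-\operatorname{Im}\Li_2(Ce^{2iA})$. The only cosmetic difference is that the paper gets $C=e^{-2\Theta}$ from the substitution $B=\sinh\Theta$ rather than by solving your quadratic.
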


\begin{proof}
        Let $B=\sinh \Theta$.
        Observe that
        \begin{align}
            B^2 + \sin^2\phi =\frac{2\sinh^2 \Theta+1- \cos 2 \phi}{2}=\frac{e^{2\Theta}}{4} \left[1- 2 e^{-2\Theta} \cos 2 \phi + e^{-4\Theta}  \right] \ .
        \end{align}
        So, we obtain
        \begin{align}
            \ln \left(B^2 + \sin^2\phi\right)&=\ln \left(\frac{e^{2\Theta}}{4} \right) + \ln\left(1- 2 e^{-2\Theta} \cos 2 \phi + e^{-4\Theta}\right)=2\Theta - 2\ln2 -2\sum_{n=1}^{\infty} \frac{e^{-2n\Theta} }{n} \cos (2n\phi),
        \end{align}
        where we use Fourier series expansion $   \ln\big(1 - 2 a \cos x + a^2\big) = -2 \sum_{n=1}^{\infty} \frac{a^n}{n} \cos(n x)$, for $|a| < 1$.
        We integrate both sides
        \begin{align}
            \int_0^{A}  \ln \left(B^2 + \sin^2\phi\right) \;d\phi&= A[2\Theta - 2\ln2 ] -2\sum_{n=1}^{\infty} \frac{e^{-2n\Theta} }{n} \frac{\sin(2nA)}{2n}.
        \end{align}
        Let $z=e^{-2\Theta}e^{2iA}$. Then 
        \begin{align}
            \int_0^{A}  \ln \left(B^2 + \sin^2\phi\right) \;d\phi&= A[2\Theta - 2\ln2 ] - \sum_{n=1}^{\infty} \frac{1}{n^2} \text{Im} \left(z^n\right) =A[2\Theta - 2\ln2 ] - \text{Im} \left[\Li_2 \left(e^{-2\Theta}e^{2iA} \right)\right],
        \end{align}
       where we use the definition of the dilogarithm function in the last line.
       Finally, since $ B=\sinh \Theta $, we have
        \begin{align}
          \Theta = \ln\left(B + \sqrt{B^2 + 1}\right),\qquad
e^{-2\Theta} = 2B^2 + 1 - 2B\sqrt{B^2 + 1}.
        \end{align}
        Consequently, defining $C= 2B^2 + 1 - 2B\sqrt{B^2 + 1}$, we have
        \begin{align}
                  \int_0^{A}  \ln \left(B^2 + \sin^2\phi\right) \;d\phi&=2A \ln \left(\frac{B + \sqrt{B^2 + 1}}{2}\right) - \text{Im} \left[\Li_2 \left(Ce^{2iA} \right)\right].
        \end{align}
\end{proof}
Now, let's prove the key integral formula for LN after a sudden quench.
\begin{theorem}
\label{Th: integral formula for LN}
Let $B\geq 0$ and $A \geq B^2$. Define
\begin{align}
    x=\arcsin{ \left[\frac{B}{\sqrt{A}}\right]} \qquad
    C=\frac{B+\sqrt{B^2+1}}{2}\qquad
    D = \arcsin\left( \sqrt{\frac{A - B^2}{1 + A}} \right) \ .
\end{align}
Under these assumptions, we have the following identity
    \begin{align}
       \int_{x}^{\pi/2}  {\rm arcsinh}\left(\sin \phi\sqrt{A} \right)\;   d\phi =\frac{1}{2} \left\{D\ln \left(\frac{A}{4C^2}\right) + \operatorname{Im} \left[\Li_2 \left(\frac{e^{2iD} }{4C^2}\right)  -\Li_2\left(-e^{2iD}\right)\right]\right\} +\left(\frac{\pi}{2}-x \right) \ln 2C\ ,
\end{align}
\end{theorem}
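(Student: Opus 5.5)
The plan is to integrate by parts once to split off the boundary term $(\tfrac{\pi}{2}-x)\ln 2C$. Then a trigonometric substitution maps the remaining integral onto the interval $[0,D]$, where Lemma~\ref{lemma: ln cos} and Lemma~\ref{lemma: ln(B^2+sin^2)} supply the dilogarithms.

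\textbf{Step 1: peel off the boundary term.} Integrating by parts with antiderivative $\phi-\pi/2$ of $d\phi$ gives
\begin{align}
\int_x^{\pi/2}\mathrm{arcsinh}(\sqrt{A}\sin\phi)\,d\phi=\Big(\frac{\pi}{2}-x\Big)\mathrm{arcsinh}(B)+\int_x^{\pi/2}\Big(\frac{\pi}{2}-\phi\Big)\frac{\sqrt{A}\cos\phi}{\sqrt{1+A\sin^2\phi}}\,d\phi .
\end{align}
Here I use $\sqrt{A}\sin x=B$. Since $\mathrm{arcsinh}(B)=\ln(B+\sqrt{B^2+1})=\ln 2C$, the first term is exactly the last term of the claimed identity. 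What remains is to show that the second integral equals $\tfrac12\{D\ln(A/4C^2)+\operatorname{Im}[\Li_2(e^{2iD}/4C^2)-\Li_2(-e^{2iD})]\}$.

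\textbf{Step 2: substitution onto $[0,D]$.} Set $\chi=\pi/2-\phi$, which runs over $[0,\pi/2-x]$, and define $\psi\in[0,D]$ by $\sqrt{1+A}\,\sin\psi=\sqrt{A}\sin\chi$. At $\chi=\pi/2-x$ we have $\sin\chi=\cos x=\sqrt{1-B^2/A}$, so $\sin\psi=\sqrt{(A-B^2)/(1+A)}$ and the upper endpoint is precisely $D$. Under this map $1+A\cos^2\chi=(1+A)\cos^2\psi$, so the square root in the kernel rationalises. The kernel $\sqrt{A}\sin\chi\,d\chi/\sqrt{1+A\cos^2\chi}$ becomes an elementary expression in $\psi$. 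The remaining integral then takes the form $\int_0^D \chi(\psi)\,k(\psi)\,d\psi$, with $\chi(\psi)=\arcsin\big(\sqrt{(1+A)/A}\,\sin\psi\big)$.

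\textbf{Step 3: reduce to logarithmic integrands.} I would integrate by parts once more, this time moving the derivative onto $\chi(\psi)$, which is an algebraic function of $\sin\psi$. The aim is to bring the integrand to a combination of $\ln\cos\psi$ and $\ln(B^2+\sin^2\psi)$, possibly after a further reparametrisation. These two logarithms are the natural targets for the following reasons:
\begin{itemize}
\item Lemma~\ref{lemma: ln cos} evaluated at upper limit $D$ produces $-D\ln2-\tfrac12\operatorname{Im}\Li_2(-e^{2iD})$.
\item Lemma~\ref{lemma: ln(B^2+sin^2)} with the same $B$ and upper limit $D$ has constant $2B^2+1-2B\sqrt{B^2+1}=(B+\sqrt{B^2+1})^{-2}=1/(4C^2)$. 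It therefore produces $2D\ln C-\operatorname{Im}\Li_2(e^{2iD}/4C^2)$.
\end{itemize}
These are exactly the two dilogarithms in the claim. The linear-in-$D$ pieces ($D\ln 2$, $D\ln C$, and a $D\ln A$ contribution from the boundary term of this second integration by parts) should then assemble into $\tfrac12 D\ln(A/4C^2)$. The signs of the two imaginary parts fix the relative coefficients $\pm\tfrac12$.

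\textbf{Main obstacle.} The hard step is Step 3. One must find the precise form of the $\psi$-integrand after the substitution, and possibly a different auxiliary angle, so that it collapses to exactly $\ln\cos\psi$ and $\ln(B^2+\sin^2\psi)$ with the right coefficients and no leftover non-elementary piece. If the direct route produces $\ln(A-(1+A)\sin^2\psi)$ instead, I would rewrite it using $\sin^2D=(A-B^2)/(1+A)$ and try a shifted or complementary angle, for instance $\psi\mapsto D-\psi$, to reach the $B^2+\sin^2$ form. As a fallback, I would differentiate both sides with respect to $A$ at fixed $B$, as in the proof of Theorem~\ref{th: integral formula for MI}. Checking the derivative identity plus the case $A=B^2$ (where $x=\pi/2$ and $D=0$, so both sides vanish) proves the formula.
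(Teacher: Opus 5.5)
\textbf{Gap in Step 3.} The route you actually develop does not reach the result, and the mechanism you propose in Step 3 is circular. After Step 2 the Jacobian brings back a square root. The kernel becomes $\tan\chi\,d\psi$ with $\tan\chi=\sqrt{1+A}\sin\psi/\sqrt{A-(1+A)\sin^2\psi}$, so your remaining integral is $\int_0^D\chi(\psi)\tan\chi(\psi)\,d\psi$. An antiderivative of $\tan\chi\,d\psi$ is $-\mathrm{arcsinh}(\sqrt{A}\cos\chi)$, exactly as in the $\chi$ variable. Moving the derivative onto $\chi(\psi)$ therefore just undoes Step 1: it returns $\int_0^{\pi/2-x}\mathrm{arcsinh}(\sqrt{A}\cos\chi)\,d\chi$ minus the boundary term. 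Further angle reparametrisations at fixed $A$, such as $\psi\mapsto D-\psi$, keep you in the same circle.

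The underlying reason is that the angle $\theta\in[0,D]$ in the identity is not a reparametrisation of $\phi$ at fixed $A$. It parametrises the amplitude $\alpha\in[B^2,A]$ via $\sin^2\theta=(\alpha-B^2)/(1+\alpha)$, i.e. $\alpha(\theta)=(B^2+\sin^2\theta)/\cos^2\theta$. The two logarithms in your bullet points arise from $\ln\alpha(\theta)=\ln(B^2+\sin^2\theta)-2\ln\cos\theta$. The missing idea is to treat $A$ as a parameter.

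\textbf{Your fallback is the fix, and it is the paper's method.} The paper differentiates the left side in $A$ by Leibniz' rule, including the moving limit $x(A)$. This gives $F'(A)=\frac{D}{2A}+\frac{B\,\mathrm{arcsinh}B}{2A\sqrt{A-B^2}}$. The first term holds because $\int_x^{\pi/2}\frac{\sin\phi\,d\phi}{2\sqrt{A}\sqrt{1+A\sin^2\phi}}$ is elementary via $t=\cos\phi$; this is where your $\psi$-type substitution genuinely belongs. The paper then integrates back from $A=B^2$. The second term gives $(\frac{\pi}{2}-x)\ln 2C$. The first, after integration by parts with $\theta=D(\alpha)$, gives $\frac12 D\ln A-\frac12\int_0^D\ln\alpha(\theta)\,d\theta$, and the two lemmas then apply exactly as in your bullets.

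Your verification variant, differentiating the closed form, also closes and does not even need the lemmas. You need three facts:
\begin{itemize}
\item $\partial_D\operatorname{Im}\Li_2(re^{2iD})=-\ln(1-2r\cos 2D+r^2)$;
\item with $r=1/(4C^2)$, one has $1-2r\cos2D+r^2=4r(B^2+\sin^2 D)$;
\item $A\cos^2D=B^2+\sin^2D$.
\end{itemize}
With these, all $D'(A)$ terms cancel and the right side's derivative equals $F'(A)$. Both sides vanish at $A=B^2$ (by continuity when $B=0$), which completes the proof. As submitted, however, this is only a one-line alternative, not a worked argument.
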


\begin{proof}
For simplicity, define $   F(A)=\int_{x}^{\pi/2}  {\rm arcsinh} \left(\sin \phi\sqrt{A} \right)\;   d\phi
$.
By Leibniz integral rule,
\begin{align}
    \frac{dF(A)}{dA} &=\left\{\int_{x}^{\pi/2}  \frac{\partial}{\partial  A}\left[{\rm arcsinh} \left(\sin \phi\sqrt{A} \right)\right]\; d\phi  \right\}- \left[{\rm arcsinh} \left(\sin x\sqrt{A} \right)\right]\frac{dx}{dA}. \label{eq: Leibniz integral for LN formula}
\end{align}
Now, we solve these two terms separately.
\begin{enumerate}
    \item \uline{The second term}:    
    Direct calculation yields
    \begin{align}
        \frac{dx}{dA} &=\frac{1}{\sqrt{1-B^2/A}} \left(-\frac{1}{2} \frac{B}{A^{3/2}}\right)=\frac{-B}{2A\sqrt{A-B^2}}.
    \end{align}
    Next, using the definition of $x$, we have $     {\rm arcsinh} \left(\sin x\sqrt{A} \right)={\rm arcsinh} (B)$.
    So,
    \begin{align}
       - \left[{\rm arcsinh}\left(\sin x\sqrt{A} \right)\right]\frac{dx}{dA} &=\frac{B}{2A\sqrt{A-B^2}}  \;{\rm arcsinh}(B).
    \end{align}

     \item \uline{The first term}:     
     The integral is
     \begin{align}
         \int_{x}^{\pi/2}  \frac{\partial}{\partial  A}\left[{\rm arcsinh}\left(\sin \phi\sqrt{A} \right)\right]\; d\phi &=\int_{x}^{\pi/2}  \frac{1}{2\sqrt{A}}
\frac{\sin \phi}{\sqrt{1+A\sin^2 \phi}} \; d\phi.
     \end{align} 
     Let $t = \cos \phi$ and $dt=-\sin \phi  \; d\phi$. Then it becomes
     \begin{align}
        \int_{x}^{\pi/2}  \left[ \frac{1}{2\sqrt{A}}
\frac{\sin \phi}{\sqrt{1+A\sin^2 \phi}} \right]\; d\phi&= \int_{0}^{\cos x} \frac{1}{2\sqrt{A} \sqrt{1 + A(1 - t^2)}} dt \ .
     \end{align}
     Let $\displaystyle k = \sqrt{\frac{A}{1+A}}$. Then
     \begin{align}
          \int_{0}^{\cos x} \frac{1}{2\sqrt{A} \sqrt{1 + A - A t^2}} dt &=\frac{1}{2\sqrt{A}\sqrt{1+A}}\int_{0}^{\cos x} \frac{1}{\sqrt{1  -k^2 t^2}} dt\\
          &=\frac{1}{2 A} \arcsin\left( \sqrt{\frac{A}{1+A}} \cos x \right) = \frac{1}{2 A} \arcsin\left( \sqrt{\frac{A - B^2}{1 + A}} \right),
     \end{align}
     where we use the relation $ \cos(x) = \sqrt{\frac{A - B^2}{A}}$ in the last line.
     \end{enumerate}
Combining two terms, Eq.~\eqref{eq: Leibniz integral for LN formula} is 
\begin{align}
    \frac{dF}{dA} = \frac{1}{2 A} \arcsin\left( \sqrt{\frac{A - B^2}{1 + A}} \right) + \frac{B}{2 A \sqrt{A - B^2}}  {\rm arcsinh} (B). \label{eq: dF/dA for LN}
\end{align}
Next, we  integrate it from $B^2$ to $A$:
\begin{enumerate}[leftmargin=*]
    \item  \uline{The second term of Eq.~\eqref{eq: dF/dA for LN}}: 
    Consider the integral
    \begin{align}
        \int_{B^2}^{A} \frac{B}{2 \alpha \sqrt{\alpha - B^2}}  {\rm arcsinh}(B) \; d\alpha &=  {\rm arcsinh}(B) \int_0^{\theta_q} \frac{B }{2 (B^2 \sec^2 \theta) (B \tan \theta)} (2 B^2 \sec^2 \theta \tan \theta)d\theta \\
        &= {\rm arcsinh} (B) \; \theta_q \ ,
    \end{align}
    where we use the substitution $\alpha=B^2 \sec^2 \theta$ and  $d\alpha = 2 B^2 \sec^2 \theta \tan \theta \, d\theta$.
    Note that we also define $\cos\theta_q = B/\sqrt{A}$ and use the assumption $A^2 \geq B\geq0$.
    Using trigonometric identity, it becomes
    \begin{align}
         \int_{B^2}^{A} \frac{B}{2 \alpha \sqrt{\alpha - B^2}}  {\rm arcsinh} (B) d\alpha   &= {\rm arcsinh} (B) \left[\frac{\pi}{2}-\arcsin\left(\frac{B}{\sqrt{A}}\right) \right]
         = {\rm arcsinh} (B) \left(\frac{\pi}{2}-x \right).
    \end{align}
    where we recall the definition of $x$ in the last line.

    \item  \uline{The first term of Eq.~\eqref{eq: dF/dA for LN}}:
    Define $\theta = \arcsin\left( \sqrt{\frac{\alpha - B^2}{1 + \alpha}} \right) $, and $dv=\frac{1}{2\alpha} d\alpha, $$v=\frac{1}{2} \ln \alpha$.
    %
    Using this definition of $\theta,v$ and integration by parts, we get
    \begin{align}
           \int_{B^2}^{A} \frac{1}{2 \alpha} \arcsin\left( \sqrt{\frac{\alpha - B^2}{1 + \alpha}} \right) d\alpha &= \left[\frac{1}{2} \ln \alpha \arcsin\left( \sqrt{\frac{\alpha - B^2}{1 + \alpha}} \right) \right]_{B^2}^{A} - \frac{1}{2}\int_{0}^{D} \ln \alpha(\theta)   \;d\theta \\
             &=\frac{1}{2} \ln A \arcsin\left( \sqrt{\frac{A - B^2}{1 + A}} \right)  -\frac{1}{2} \int_0^{D}  \ln \left(\frac{B^2 + \sin^2\theta}{\cos^2 \theta }\right) \; d\theta \\
             &=\frac{1}{2} \left[D\ln A  - \int_0^{D}  \ln \left(B^2 + \sin^2\theta\right) \;d\theta + 2 \int_0^{D}  \ln \left(\cos \theta \right)\; d\theta\right],
    \end{align}
    where we recall $ D = \arcsin\left( \sqrt{\frac{A - B^2}{1 + A}} \right)$ in the first line, and use the relation $ \displaystyle \alpha(\theta) = \frac{B^2 + \sin^2\theta}{\cos^2 \theta} $ in the second line.
    \end{enumerate}
    Therefore, we have
    \begin{align}
        F(A)&=\int_{x}^{\pi/2}  {\rm arcsinh} \left(\sin \phi\sqrt{A} \right)\;   d\phi \\
        &=\int_{B^2}^{A} \left[\frac{1}{2 \alpha} \arcsin\left( \sqrt{\frac{\alpha - B^2}{1 + \alpha}} \right) + \frac{B}{2 \alpha \sqrt{\alpha - B^2}}  {\rm arcsinh} (B)\right] \; d\alpha\\
        &=\frac{1}{2} \left[D\ln A  - \int_0^{D}  \ln \left(B^2 + \sin^2\theta\right) \;d\theta + 2 \int_0^{D}  \ln \left(\cos \theta \right)\; d\theta\right] +{\rm arcsinh} (B) \left(\frac{\pi}{2}-x \right).
    \end{align}
    Using Lemma.~\ref{lemma: ln cos} and Lemma.~\ref{lemma: ln(B^2+sin^2)}, the integral terms can be evaluated analytically: 
    \begin{align}
        \int_0^{D}  \ln \left(\cos \theta \right)\; d\theta&=-D\ln2 -\frac{1}{2}\operatorname{Im} \left[\Li_2(-e^{2iD})\right],\\
        \int_0^{D}  \ln \left(B^2 + \sin^2\theta\right) \;d\theta&=2D \ln \left(\frac{B + \sqrt{B^2 + 1}}{2}\right) - \operatorname{Im} \left[\Li_2 \left(Ee^{2iD} \right)\right],
    \end{align}
    where we define $E=2B^2 + 1 - 2B\sqrt{B^2 + 1}$.
    Therefore,
    \begin{align}
        &\quad\int_{x}^{\pi/2}  {\rm arcsinh} \left(\sin \phi\sqrt{A} \right)\;   d\phi \nonumber\\
        &=\frac{1}{2} \left\{D\ln A  - 2D \ln \left(\frac{B + \sqrt{B^2 + 1}}{2}\right)+ \operatorname{Im} \left[\Li_2 \left(Ee^{2iD} \right)\right] -2D\ln2 -\operatorname{Im} \left[\Li_2(-e^{2iD})\right]\right\} +{\rm arcsinh} (B) \left(\frac{\pi}{2}-x \right)\nonumber\\
        &=\frac{1}{2} \left\{D\ln \left(\frac{A}{4C^2}\right) + \operatorname{Im} \left[\Li_2 \left(\frac{e^{2iD} }{4C^2}\right) -\Li_2\left(-e^{2iD}\right)\right]\right\} +\left(\frac{\pi}{2}-x \right) \ln 2C,
    \end{align}
    where we use the following relations
    \begin{align}
        2C&=B + \sqrt{B^2 + 1}, \qquad E=2B^2 + 1 - 2B\sqrt{B^2 + 1}=
        \frac{1}{4C^2}, \qquad {\rm arcsinh} (B)=\ln 2C \ ,
    \end{align}
    which can be derived from the definition $\displaystyle C=\frac{B+\sqrt{B^2+1}}{2}$.
\end{proof}
\subsection{Integral representation of long-time averaged LN for a finite-ramp at finite temperature \label{sec: exact_expression_LN_finite_temp}}
First, the LN at time $t$ is given by $E_{\mathcal{N}}(t) = \sum_{q>0} \max\{0, -\ln[2\nu_q(t,\beta)]\}$, where $\nu_q(t,\beta) = C_q\bigg(\sqrt{1+\mathcal{F}[\gamma_q(t)]}-\sqrt{\mathcal{F}[\gamma_q(t)]}\bigg)$.
Similar to the construction of Eq.~\eqref{eq: effective time averaged MI}, the time averaged LN is
\begin{align}
    \overline{E_{\mathcal{N}}} = \frac{2}{\pi} \sum_{q >0}  \int_0^{\pi/2} \max\{0, -\ln[2\nu_q(u)]\} \, du.
\end{align}    
For a finite-ramp protocol, we have $\sqrt{1+\mathcal{F}}=\sqrt{D_q+E_q\sin^2 u}$.
So,
\begin{align} 
     -\ln[2\nu_q(u)] &= -\ln(2C_{q}) - \ln \left(\sqrt{D_q+E_q\sin^2 u}-\sqrt{D_q-1+E_q\sin^2 u}\right) \\
     &=-\ln(2 C_{q}) + {\rm arcsinh} \left(\sqrt{D_q-1+E_q\sin^2 u} \right),
\end{align}
where we use the identity $\ln\left(\sqrt{1+x} - \sqrt{x}\right) = -{\rm arcsinh}(\sqrt{x})$. Thus, the averaged LN becomes
\begin{align}
    \overline{E_{\mathcal{N}}} = \frac{2}{\pi} \sum_{q >0}  \int_0^{\pi/2} \max \left\{0, {\rm arcsinh}\left(\sqrt{D_q-1+E_q\sin^2 u} \right) -\ln(2C_{q}) \right\} \, du \ .
\end{align}  
For the most general setting, i.e., a finite-ramp protocol at finite temperature, we did not succeed in evaluating this integral.
However, in the case of a sudden quench, an exact expression of this integral can be obtained (See next section).
\subsection{Exact expression of averaged LN for sudden quench at finite temperature \label{sec: LN for sudden quench at T >0}}
In the sudden quench limit, we set $D_q=1$ and $E_q=A_q$.
The time-averaged LN then reduces to
\begin{align}
    \overline{E_{\mathcal{N}}} = \frac{2}{\pi} \sum_{q >0}  \int_0^{\pi/2} \max \left\{0, {\rm arcsinh} \left(\sqrt{A_q}\sin u \right) -\ln(2C_{q}) \right\} \, du \ . \label{eq: integral form of LN for sudden quench}
\end{align}  
To contribute nonzero entanglement, we require
\begin{align}
    \sqrt{A_q} \sin u &\geq  \sinh \left[\ln\coth\left(\frac{\beta\hbar\Omega_{q0}}{2}\right)\right] = \csch\left(\beta\hbar\Omega_{q0}\right) \ ,
\end{align}
where we use the identity $\sinh(\ln x) = \dfrac{x^2 - 1}{2x}$ and $\displaystyle \frac{\coth^2 x - 1}{2\coth x} = \csch(2x)$.
For simplicity, we define $V_q=  \csch\left(\beta\hbar\Omega_{q0}\right)$. Next, we split the momentum modes $q$ into two categories.
\begin{enumerate}
    \item \underline{$A_q < V_q^2$}: In this case, we have $ \sqrt{A_q} \sin u \leq \sqrt{A_q} <V_q$ for all $u\in [0,\pi/2]$.
    Consequently, this momentum mode gives zero contribution to the entanglement, i.e.
    \begin{align}
         \int_0^{\pi/2} \max \left\{0, {\rm arcsinh}\left(\sqrt{A_q}\sin u \right) -\ln(2C_{q}) \right\} \, du =0.
    \end{align}
    \item \underline{$A_q \geq V_q^2$}: We define the threshold time for each mode as $u_q^*= \arcsin{ \left(\frac{V_q}{\sqrt{A_{q}}} \right)}$. To exhibit entanglement, the time parameter $u$ for momentum mode $q$ must exceed this threshold. Therefore, we have
    \begin{align}
 \int_0^{\pi/2} \max \left\{0, {\rm arcsinh}\left(\sqrt{A_q}\sin u \right) -\ln(2 C_{q}) \right\} \, du 
   &=\int_{u_q^*}^{\pi/2}  \left[{\rm arcsinh} \left(\sqrt{A_q}\sin u \right) -\ln(2 C_{q})  \right]\; du \\
   &=\int_{u_q^*}^{\pi/2}  {\rm arcsinh} \left(\sin u\sqrt{A_{q}} \right)\; du \;\;- \left(\frac{\pi}{2}-  u_q^* \right)\ln(2 C_{q}) .\label{eq: two integral terms for LN for sudden quench}
    \end{align}
    The remaining task is to solve for $F(A_q)=\int_{u_q^*}^{\pi/2}  {\rm arcsinh}\left(\sin u\sqrt{A_q} \right)\; du$. Using Theorem.~\ref{Th: integral formula for LN}, it is
\begin{align}
    \int_{u_q^*}^{\pi/2} {\rm arcsinh} \left(\sin u\sqrt{A_q} \right) \; du &= \frac{1}{2} \left\{\phi_q\ln \left(\frac{A_q}{4 C_q^2}\right) +\operatorname{Im} \left[\operatorname{Li}_2 \left(\frac{1}{4C_q^2}e^{2i\phi_q} \right) -\operatorname{Li}_2(-e^{2i\phi_q})\right] \right\}  +\left(\frac{\pi}{2}-u_q^* \right) \ln (2C_q), \label{eq: first integral term for LN for sudden quench}
\end{align}
where we define
\begin{align}
       \phi_q &= \arcsin\left( \sqrt{\frac{A_q - V_q^2}{1 + A_q}} \right), \qquad C_q=\frac{V_q+\sqrt{V_q^2+1}}{2} =\frac{1}{2}\coth\left(\frac{\beta\hbar\Omega_{q0}}{2} \right).
\end{align}
Combining Eq.~\eqref{eq: two integral terms for LN for sudden quench} and Eq.~\eqref{eq: first integral term for LN for sudden quench}, we obtain
  \begin{align*}
   &\quad\int_0^{\pi/2} \max \left\{0, {\rm arcsinh} \left(\sqrt{A_q}\sin u \right) -\ln (2C_q) \right\} \, du  =\frac{1}{2} \left\{\phi_q\ln \left(\frac{A_q}{4 C_q^2}\right) +\operatorname{Im} \left(\operatorname{Li}_2 \left[\frac{1}{4C_q^2}e^{2i\phi_q} \right] -\operatorname{Li}_2(-e^{2i\phi_q})\right) \right\}.
    \end{align*}
\end{enumerate}

The final expression of time averaged LN is then given by
\begin{empheq}[box=\fbox]{align*}
\overline{E_{\mathcal{N}}}
&= \frac{1}{\pi} \sum_{q >0: \;A_q \geq V_q^2} \left\{\phi_q\ln \left(\frac{A_q}{4 C_q^2}\right) +\text{Im} \left[\Li_2 \left(\frac{e^{2i\phi_q}}{4C_q^2} \right) -\Li_2\left(-e^{2i\phi_q}\right)\right] \right\}\\
      C_q&=\frac{1}{2}\coth\left(\frac{\beta\hbar\Omega_{q0}}{2} \right) \geq \frac{1}{2}
        ,\qquad \phi_q=\arcsin\left( \sqrt{\frac{A_q - V_q^2}{1 + A_q}} \right),\qquad 
        V_q=C_q -\frac{1}{4C_q}= \csch\left(\beta\hbar\Omega_{q0}\right) .
\end{empheq}
Note that LN only considers those momentum modes $q$ that satisfy the condition $A_q \geq V_q^2$.
Thus, we recover the second LN expression stated in the Table.~\ref{Results_sudden_quench_table}.

\subsection{Exact expression of averaged LN for sudden quench in the zero-temperature limit \label{sec: LN for sudden quench at T=0}}

Let us consider the sudden quench limit. In the zero temperature limit $\beta \to \infty$, the time-averaged LN reduces to
\begin{align}
    \overline{E_{\mathcal{N}}}(\beta \to \infty) = \frac{2}{\pi} \sum_{q >0}  \int_0^{\pi/2} \max \left\{0, {\rm arcsinh} \left(\sqrt{A_q}\sin u \right) \right\} \, du \ .
\end{align}  
Since $\sqrt{A_{q}}$ is positive, the argument inside of ${\rm arcsinh}$ is nonnegative for $u\in[0,\pi/2]$, and for a positive argument ${\rm arcsinh}$ is nonnegative, so that we can get rid of the ${\rm max}$ function. Let us set
$F(a)=\int_{0}^{\pi/2}{\rm arcsinh}(a\sin u) du$, and consider its derivative
\begin{align}
\frac{\partial F(a)}{\partial a}&=\int_{0}^{\pi/2}\frac{\partial }{\partial a}{\rm arcsinh}(a\sin u)\, du =\int_{0}^{\pi/2}\frac{\sin u}{\sqrt{1+a^{2}\sin^{2}u}}\, du.
\end{align}
Now, set $x=\cos u$, and $dx=-\sin u\, du$. Therefore 
\begin{align}
\frac{\partial F(a)}{\partial a }&=\int_{0}^{1}\frac{dx}{\sqrt{1+a^{2}(1-x^{2})}}=\frac{1}{\sqrt{1+a^{2}}}\frac{1}{k}{\rm arcsin}(k )=\frac{{\rm arctan}(a)}{a},
\end{align}
with $k=a/\sqrt{1+a^{2}}$, where we used $\int \frac{dx}{\sqrt{1+(kx)^{2}}}=\frac{1}{k}{\rm arcsin}(kx)$ and ${\rm arcsin}(a/\sqrt{1+a^{2}})={\rm arctan}(a)$. By direct integration,
\begin{align}
\overline{E_{\mathcal{N}}}(\beta \to \infty)=\int_0^{\sqrt{A_q}} \frac{\partial F(a)}{\partial a } da&=\frac{2}{\pi}\sum_{q>0}\int_{0}^{\sqrt{A_{q}}}\frac{\;{\rm arctan}(a)}{a}da = \frac{2}{\pi}  \sum_{q>0} \operatorname{Ti}_2(\sqrt{A_q}) \ ,
\end{align}
where the inverse tangent integral is $\operatorname {Ti} _{2}(x)=\int _{0}^{x}{\frac {{\rm arctan}(t)}{t}}\,dt $.
Thus, we arrive at the first LN expression in Table.~\ref{Results_sudden_quench_table}.

\subsection{Exact expression for entanglement threshold temperature for a sudden quench \label{sec: threshold temperature}}
In this section, we would like to derive an expression for the threshold temperature at which the LN is null, signature of the vanishing of the entanglement. From the previous subsection, we have
\begin{align}
    \overline{E_{\mathcal{N}}} = \frac{2}{\pi} \sum_{q >0}  \int_0^{\pi/2} \max \left\{0, {\rm arcsinh}\left(\sqrt{A_q}\sin u \right) -\ln\coth\left(\frac{\beta\hbar\Omega_{q0}}{2}\right) \right\} \, du \ .
\end{align}  
For a fixed mode $q$, entanglement is present for some $u \in [0,\pi/2]$ only if 
\begin{align}
    {\rm arcsinh} \left(\sqrt{A_q}\sin u \right) &\geq \ln\coth\left(\frac{\beta\hbar\Omega_{q0}}{2}\right) \longrightarrow
   \sqrt{A_q}\sin u \geq \sinh \left[\ln\coth\left(\frac{\beta\hbar\Omega_{q0}}{2}\right)\right]=\csch\left(\beta\hbar\Omega_{q0}\right).
\end{align}
If $\sqrt{A_q} <\csch\left(\beta\hbar\Omega_{q0}\right)$, the mode $q$ does not contribute to the entanglement for all $u \in [0,\pi/2]$.
Thus, the threshold temperature $\beta_{*}$ (mode-wise) occurs at
\begin{align}
    \beta_{*} \hbar v q &= \text{arsinh}\left( \frac{1}{\sqrt{A_q}} \right) = \text{arsinh}\left( \frac{2vq\sqrt{(vq)^2+4\omega J_f}}{4\omega J_f} \right)= \text{arsinh} \left( 2y\sqrt{y^2+1} \right) \ ,
\end{align}
where we recall that $  A_q=\dfrac{\left(4\omega J_f\right)^{2}}{4 (vq)^2((vq)^2+4\omega J_f)}$ and define $y=\frac{vq}{\sqrt{4\omega J_f}}$.
Using the identity 
$\text{arsinh}(2u\sqrt{u^2+1}) = 2\text{arsinh}(u)$, 
\begin{align}
    \beta_{*} &= \frac{2}{\hbar v q} \text{arsinh}\left(y \right) \quad \longrightarrow\quad
    \mathcal{T}_{*}=\frac{\hbar v q}{2 k_B \;\text{arsinh} \left( y \right)} = \frac{\hbar v q}{2 k_B} \frac{1}{\text{arsinh} \left( \dfrac{vq}{\sqrt{4\omega J_f}} \right)},
\end{align}
Here the expression has been rewritten in terms of the temperature $\mathcal{T}_{*}$, rather than the inverse temperature $\beta$, for a clearer physical interpretation.
Hence, we recover Eq.~\eqref{eq: threshold temperature for sudden quench} in the main text.

\section{Long-time averaged R{\'e}nyi-$2$ entropy}\label{expression_Renyi2_arbitrary}
Here, we compute an exact expression for the time-average of the R{\'e}nyi-$2$ entropy for a general protocol and for the sudden quench limit. 
The final expressions (cf Eq.~\eqref{Eq: S2 for general protocl} and Eq.~\eqref{exact_Renyi_2}) are listed as the $S_2$ entries in Tables~\ref{Table_general_quench} and~\ref{Results_sudden_quench_table}.

Recall the symplectic eigenvalues for the general protocol at time $t\geq t_{f}$:
\begin{align}
    \lambda_{q}^A(t\geq t_{f})&= \frac{1}{2} \coth\left(\frac{\beta \hbar \Omega_{q0}}{2}\right) \sqrt{D_q +E_q \sin^2 \theta_q} =\frac{1}{2} \coth\left(\frac{\beta \hbar \Omega_{q0}}{2}\right) \sqrt{D_{q}}\times\sqrt{1 +\frac{E_q}{D_{q}} \sin^2 \theta_q},\label{sudden_quench_solution}
\end{align}
with $D_q= \frac{1}{2} + \frac{M_q(\zeta_q^2+1) +N_q (\zeta_q^2-1)}{4\zeta_q^2}$ and $E_q=\frac{N_q}{2} \left(\frac{1}{\zeta_q^2}-1 \right)$. The R{\'e}nyi-$2$ as a function of the symplectic eigenvalue is given by
\begin{align}
S_{2}(\rho_{A})&=\sum_{q>0}\ln\left[\left(\lambda_{q}(t,\beta)+\frac{1}{2}\right)^{2}-\left(\lambda_{q}(t,\beta)-\frac{1}{2}\right)^{2}\right]=\sum_{q>0}\ln\left[2\lambda_{q}(t,\beta)\right].
\end{align}
Furthermore, denote $u=\Omega_{qf} t$. Then the integral is simplified to $\overline{\mathcal{S}_{2} \left[\lambda_q(u)\right]}= \frac{1}{\pi} \int_0^{\pi}  \mathcal{S}_{2}\left[\lambda_q(u)\right] \; du$. This gives averaged EE 
\begin{align}
      \overline{\mathcal{S}_{2} \left[\lambda_q(t,\beta)\right]} \approx \frac{1}{\pi} \int_0^{\pi} du\, \ln \left[2\lambda_q(u)\right]
      &=\frac{1}{\pi} \int_0^{\pi} du \ln \left[2C_{q}\sqrt{D_{q}}\sqrt{1+\frac{E_{q}}{D_{q}}\sin^{2}\theta_{q}} \; \right]\nonumber\\
      &=\ln\left[2C_{q}\right]+\frac{1}{2}\ln\left(D_{q}\right)+\frac{1}{2\pi}\int_{0}^{\pi}du\; \ln\left[1+\frac{E_{q}}{D_{q}}\sin^{2}(u)\right].
\end{align}
We can now use the identity
 $\int_{0}^{\pi} \ln\left(1 + A \sin^2 u \right) \, du
= 2\pi \ln\left( \frac{1 + \sqrt{1 + A}}{2} \right)$. Finally, the total R{\'e}nyi-$2$ is given by 
\begin{align}
S_{2}(\rho_{A})&=\sum_{q>0}\ln\left[2C_{q}\right]+\sum_{q>0}\ln\left(\frac{\sqrt{D_{q}}+\sqrt{D_{q}+E_{q}}}{2}\right). \label{Eq: S2 for general protocl}
\end{align}
\begin{figure}[t]
    \centering
    \includegraphics[width=1\linewidth]{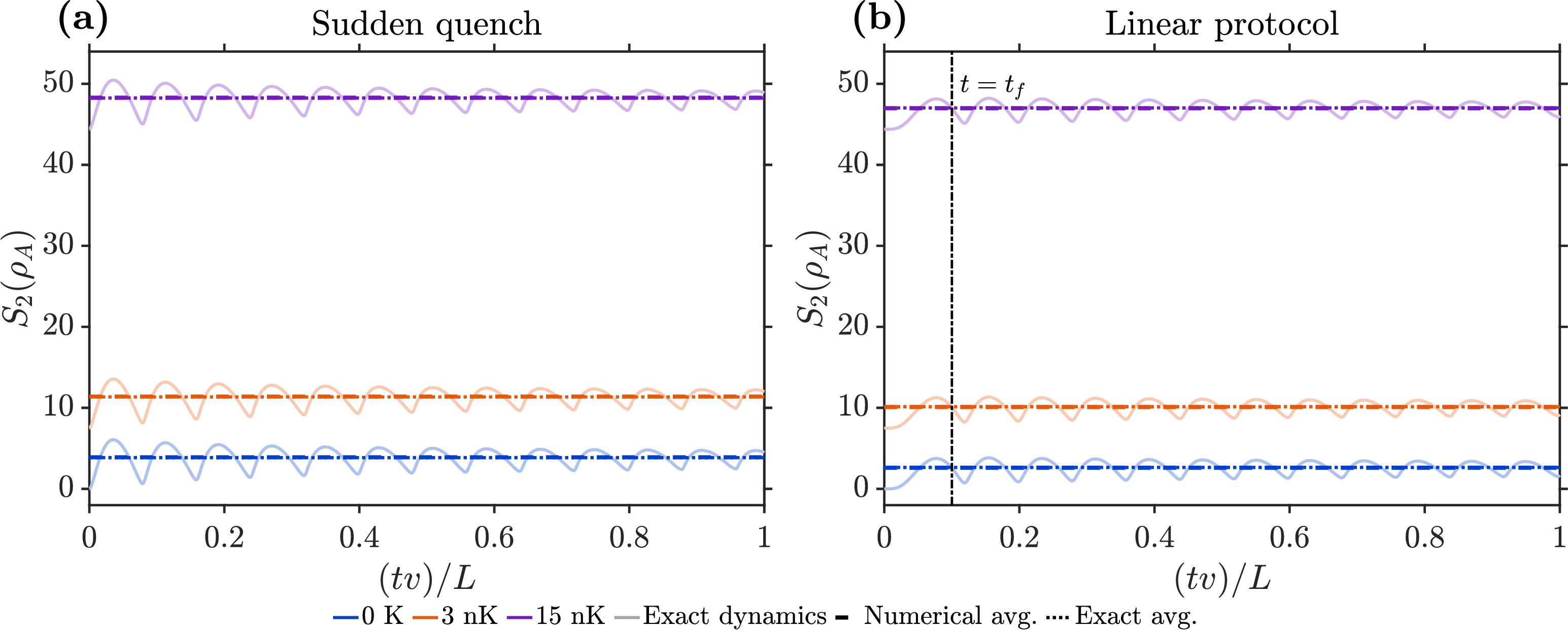}
    \caption{{\bf Time evolution of the R{\'e}nyi-$2$ entropy and comparison to exact time averages for the sudden quench (a) and the linear protocol (b) at different temperatures.--- } The different colors correspond to the different temperatures:  $T=0\; K$ (plain blue), $T=3\; {\rm nK}$ (plain orange), and $T=15\; {\rm nK}$ (plain violet). The numerical time-average of the oscillatory curve is shown in dashed line for each temperature, and this is compared to the exact expressions Eq.~\eqref{exact_Renyi_2} and Eq.~\eqref{sudden_quench_solution} in dotted line. We use the same plotting parameters as in Fig.~\ref{figcomparison_mutual_logneg_sudden_quench}.}
    \label{fig:Renyi_2}
\end{figure}
Recall that the sudden quench results are recovered in taking the limits $D_{q}\to 1, E_{q}\to A_{q}$. Hence, for the sudden quench
\begin{align}
S_{2}(\rho_{A})&=\sum_{q>0}\ln[2 C_{q}]+\sum_{q>0}\ln\left(\frac{1+\sqrt{1+A_{q}}}{2}\right).\label{exact_Renyi_2}
\end{align}
Fig.~\ref{fig:Renyi_2} shows the time evolution of the R{\'e}nyi-$2$ entropy following a sudden quench. Similar to the MI and LN, the R{\'e}nyi-$2$ entropy exhibits oscillatory behavior. As the temperature increases, its time-averaged value is shifted upward. This increase can be attributed to the enhanced thermal mixedness of the state at higher temperatures.%
\section{Details on the adiabatic symplectic eigenvalues \label{adiabatic_approx_details}}
In Sec. \ref{sec: adiabatic smooth_quench_protocol} of the main text, we argue that for a slow driving the solution to the Ermakov equation converges to the adiabatic form of Eq.~\eqref{adiabatic_solution_ermakov}, for which the symplectic eigenvalues simplify to Eqs.~\eqref{eq: thermal state symplectic eigenvalues} and~\eqref{eq: adiabatic_symplectic_partial}.
Here, we elaborate on the derivation of the adiabatic expressions for the symplectic eigenvalues and the conditions required for adiabatic driving.
Moreover, we show that the adiabatically evolved state corresponds to a thermal state with populations following the instantaneous eigenbasis of the time-dependent Hamiltonian.

Let us first detail conditions for the adiabatic solution to hold. Starting from the adiabatic solution to the Ermakov equation $\gamma^{\rm ad}_{q}(t)=\sqrt{\frac{\Omega_{q0}}{\Omega_{q}(t)}}$, obtained in considering $\ddot{\gamma}^{\rm ad}_{q}(t)\approx 0$, we have
\begin{align}
\frac{\ddot{\gamma}^{\rm ad}_{q}(t)}{\Omega^{2}_{q}(t)\gamma^{\rm ad}_{q}(t)}&=\frac{3\dot{\Omega}^{2}_{q}(t)-2\Omega_{q}(t)\ddot{\Omega}_{q}(t)}{4\Omega^{4}_{q}(t)}.
\end{align}
This means that to obtain $\ddot{\gamma}^{\rm ad}_{q}(t)\approx 0$, we need to assume 
\begin{align}
\left|\frac{\ddot{\Omega}_{q}(t)}{2\Omega^{3}_{q}(t)}\right|\ll 1 , \qquad
\sqrt{\frac{3}{4}}\left|\frac{\dot{\Omega}_{q}(t)}{\Omega^{2}_{q}(t)}\right|\ll 1.
\end{align}
Furthermore, for the simplification of the symplectic eigenvalues, we require $\dot{\gamma}^{\rm ad}(t)\approx 0$. The condition gives $\left|\frac{\dot{\gamma}^{\rm ad}_{q}(t)}{\gamma^{\rm ad}_{q}(t)}\right|=\left|\frac{\dot{\Omega}_{q}(t)}{\Omega_{q}^2(t)}\right|\ll 1$, equivalent to the adiabaticity condition for the time-dependent harmonic oscillator.
These conditions are in general never satisfied in the limit $t\to 0$ and thermodynamic limit $q\to 0$, because the denominator would diverge. In practice, however, we find numerically that for finite-size systems the exact solution converges to the adiabatic solution for protocol time $t_f \gg L/v$ where $L/v$ is the slowest period of the system before tunneling is turned on.

Let us now see how the adiabatic solution simplifies the symplectic eigenvalues. Consider the formulas for the symplectic eigenvalues of the reduced the partial transposed covariant matrix
\begin{align}
\lambda_{q}(t,\beta)=& C_{q}(\beta)\sqrt{1+\mathcal{F}[\gamma_q(t)]},\qquad \qquad  \nu_q(t,\beta) = C_{q}(\beta)\bigg(\sqrt{1+\mathcal{F}[\gamma_q(t)]}-\sqrt{\mathcal{F}[\gamma_q(t)]}\bigg).
\end{align}
In the adiabatic limit, $\dot{\gamma}_{q}\approx 0$, and the Ermakov factor simplifies to $\mathcal{F}[\gamma_q(t)]\approx\frac{1}{4} \left(\gamma^{\rm ad}_q(t) - \frac{1}{\gamma^{\rm ad}_q(t)}\right)^2 $. Therefore, we have
\begin{align}
1+\mathcal{F}[\gamma_q(t)]&\approx\frac{1}{4}\left((\gamma^{\rm ad}_{q})^{2}+\frac{1}{(\gamma^{\rm ad}_{q})^{2}}+2\right)=\frac{1}{4}\left((\gamma^{\rm ad}_{q})+\frac{1}{\gamma^{\rm ad}_{q}}\right)^{2}.
\end{align}
Hence, we obtain 
\begin{align}
\lambda_{q}^A(t,\beta)&\approx\frac{C_{q}(\beta)}{2}\left(\gamma^{\rm ad}_{q}(t)+\frac{1}{\gamma^{\rm ad}_{q}(t)}\right).
\end{align}
Similarly
\begin{equation}
\nu_q(t,\beta) \approx \frac{C_q(\beta)}{2}\left[\left(\gamma_q^{\rm ad}+\frac{1}{\gamma_q^{\rm ad}}\right)- \Bigg|\gamma_q^{\rm ad} - \frac{1}{\gamma_q^{\rm ad}}\Bigg|\right] = C_q(\beta)\gamma_q^{\rm ad}(t),
\end{equation}
since we have $\gamma_q^{\rm ad} = \sqrt{\Omega_{q0}/\Omega_q(t)} \leq 1$ and so $(\gamma_q^{\rm ad})^{-1}> \gamma_q^{\rm ad}$.  

These eigenvalues are the same as what we would find for the adiabatically evolved thermal state. Starting from an initial thermal state
\begin{align}
\rho(0)&=\sum_{n^{+}}\frac{e^{-\beta\hbar\Omega^{+}_{q0}}}{{\rm tr}\left(e^{-\beta\hbar\Omega^{+}_{q0}}\right)}|n^{+}(0)\rangle\langle n^{+}(0)|\otimes\sum_{n^{-}}\frac{e^{-\beta\hbar\Omega^{-}_{q0}}}{{\rm tr}\left(e^{-\beta\hbar\Omega^{-}_{q0}}\right)}|n^{-}(0)\rangle\langle n^{-}(0)|.
\end{align}
If the driving is adiabatic, the $n^{th}$ eigenstate evolves as $|n(0)\rangle\to e^{i\phi_{n}(t)}|n(t)\rangle$, so the density matrix becomes
\begin{align}
\rho^{\rm ad}(t)&=\sum_{n^{+}}\frac{e^{-\beta\hbar\Omega^{+}_{q0}}}{{\rm tr}\left(e^{-\beta\hbar\Omega^{+}_{q0}}\right)}|n^{+}(t)\rangle\langle n^{+}(t)|\otimes\sum_{n^{-}}\frac{e^{-\beta\hbar\Omega^{-}_{q0}}}{{\rm tr}\left(e^{-\beta\hbar\Omega^{-}_{q0}}\right)}|n^{-}(t)\rangle\langle n^{-}(t)|.
\end{align}
Note that this is not the instantaneous Gibbs state because the populations are preserved. 
\end{document}